\DeclareMathOperator{\MM}{MM}
\pgfplotsset{compat=1.17}
\newcommand{\bigO}{\mathcal{O}}
\newtheorem{hypothesis}[theorem]{Hypothesis}
\declaretheorem[numberwithin=section]{theorem}
\declaretheorem[unnumbered, name=Theorem]{theorem*}
\declaretheorem[numberlike=theorem]{lemma}
\declaretheorem[numberlike=theorem]{corollary}
\declaretheorem[numberlike=theorem]{claim}
\declaretheorem[numberlike=theorem]{observation}
\declaretheorem[numberlike=theorem, name=Definition]{definition}
\declaretheorem[unnumbered, name=Definition]{definition*}
\declaretheorem[unnumbered, name=Conjecture]{conjecture*}
\declaretheorem[numberlike=theorem, name=Hypothesis]{hypothesis}
\declaretheorem[unnumbered, name=Hypothesis]{hypothesis*}
\newcommand{\rem}[3]{\textcolor{blue}{\textsc{#1 #2:}}
  \textcolor{red}{\textsf{#3}}}
\newcommand{\marvin}[2][says]{\rem{Marvin}{#1}{#2}}
\newcommand{\mirza}[2][says]{\rem{Mirza}{#1}{#2}}
\newcommand{\FOP}[1]{$\mathrm{FOP}_k$}
\renewcommand\paragraph{%
  \@startsection{paragraph}
    {4}
    {\z@}
    {3.25ex \@plus1ex \@minus.2ex}
    {-1em}
    {\normalfont\normalsize\bfseries\addperiod}}
\newcommand{\addperiod}[1]{#1\@addpunct{.}}
\keywords{fine-grained complexity theory, domination in graphs, subgraph isomorphism, classification theorem, parameterized algorithms} %TODO mandatory; please add comma-separated list of keywords
\title{Fine-Grained Classification Of Detecting Dominating Patterns}
\author{Jonathan Dransfeld}{Karlsruhe Institute of Technology}{jonathan.dransfeld@kit.edu}{}{}
\author{Marvin Künnemann}{Karlsruhe Institute of Technology}{marvin.kuennemann@kit.edu}{https://orcid.org/0000-0003-4813-4852}{}
\author{Mirza Redzic}{Karlsruhe Institute of Technology}{mirza.redzic@kit.edu}{https://orcid.org/0009-0001-7509-1686}{}
\authorrunning{J. Dransfeld, M.Künnemann, M. Redzic} 
\begin{document}
\maketitle

\begin{abstract}
We consider the following generalization of dominating sets: Let $G$ be a host graph and $P$ be a pattern graph $P$. A dominating $P$-pattern in $G$ is a subset $S$ of vertices in $G$ that (1) forms a dominating set in $G$ \emph{and} (2) induces a subgraph isomorphic to $P$. The graph theory literature studies the properties of dominating $P$-patterns for various patterns $P$, including cliques, matchings, independent sets, cycles and paths. Previous work (Kunnemann, Redzic 2024) obtains algorithms and conditional lower bounds for detecting dominating $P$-patterns particularly for $P$ being a $k$-clique, a $k$-independent set and a $k$-matching. Their results give conditionally tight lower bounds if $k$ is sufficiently large (where the bound depends the matrix multiplication exponent $\omega$). We ask: Can we obtain a classification of the fine-grained complexity for \emph{all} patterns $P$? 

Indeed, we define a graph parameter $\rho(P)$ such that if $\omega=2$, then
\[ \left(n^{\rho(P)} m^{\frac{|V(P)|-\rho(P)}{2}}\right)^{1\pm o(1)} \]
is the optimal running time assuming the Orthogonal Vectors Hypothesis, for all patterns $P$ except the triangle $K_3$. Here, the host graph $G$ has $n$ vertices and $m=\Theta(n^\alpha)$ edges, where $1\le \alpha \le 2$. 

The parameter $\rho(P)$ is closely related (but sometimes different) to a parameter $\delta(P) = \max_{S\subseteq V(P)} |S|-|N(S)|$ studied in (Alon 1981) to tightly quantify the maximum number of occurrences of induced subgraphs isomorphic to $P$. Our results stand in contrast to the lack of a full fine-grained classification of detecting an arbitrary (not necessarily \emph{dominating}) induced $P$-pattern. 

%

%    The Dominating Set problem finds many natural %applications in theoretical complexity theory and modelling %real-world problems.
%    This has further led to study into variations of the Dominating Set problem.
%    In this paper, we look at one such variation, the Pattern Dominating Set problem:
%    Given a pattern graph $P$ and an instance graph $G$, the problem is to find a dominating set in $G$ which also induces the pattern $P$ in $G$.
%    We investigate the optimal time complexity of algorithms  for this problemdepending on the number of vertices in $G$, number of edges in $G$ and the pattern $P$.
%    Specifically, we search for algorithms show to upper bounds and reductions from well-known hard problems to show lower bounds.

%    For some patterns, this question is already answered in (Künnemann and Redzic, IPEC 2024).
%    This work both motivates answering this question for all remaining patterns and introduces important techniques to do so.
%    This report answers the question of the optimal time complexity for Pattern Dominating Set for almost all choices of the pattern graph.
%    With the exception of the pattern being a $K_3$ and $K_4$, these algorithms match the lower bounds under the Orthogonal Vector Hypothesis up to subpolynomial factors, assuming that the matrix multiplication exponent $\omega$ turns out to be $2$.
%    Additionally, most patterns can already be solved in time matching the lower bound up to subpolynomial factors using current state of the art matrix multiplication.
\end{abstract}

\section{Introduction}

Among the most intensively investigated graph problems is the dominating set problem: Given a graph $G=(V,E)$, find a (small) vertex subset $S\subseteq V$ that \emph{dominates} all vertices, i.e., for each $v\in V$, we have $v\in S$ or there is some $s\in S$ with $\{s,v\}\in E$.  In many scenarios, one might not merely want to find \emph{any} dominating set, but rather a dominating set $S$ satisfying some additional requirements. Possibilities include $S$ forming a connected subgraph, admitting a perfect matching, or more generally being connected by a prescribed topology (e.g., rings or cliques) or even being fully disconnected (i.e., forming an independent set). In this work, we study a general form of such problems: For any pattern graph $H$, the (Non-induced) Dominating $P$-Pattern problem asks to determine, given a graph $G$, whether there exists a (non-)induced copy of $P$ in $G$ that dominates all vertices of $G$.

Indeed, for various patterns $P$, the structural properties of Dominating $P$-Patterns have been well investigated, e.g., Dominating Cliques~\cite{Balliu0KO23,chepoi1998note,cozzens1990dominating,dragan1994dominating,Kohler00,kratsch1990finding,kratsch1994dominating}, Dominating Independent Sets\footnote{We remark that a Dominating Independent Set is equivalent to the well-studied notion of a Maximal Independent Set.}~\cite{cho2023tight,cho2023independent,cho2023independentdom,kuenzel2023independent,pan2023improved}, Dominating (Induced) Matchings~\cite{ haynes1998paired,southey2011characterization,studer2000induced}, Dominating Cycles~\cite{FANG2021112196,FANG202343} or Dominating Paths~\cite{FaudreeFGHJ18,FaudreeGJW17,Veldman83}.
Recently, the algorithmic complexity of detecting Dominating $P$-Patterns in general graphs has been performed by Künnemann and Redzic, focusing on cliques, independent sets and perfect matchings~\cite{KuennemannR24} (see below for further details). In this work, we set out to understand the fine-grained complexity of this problem for \emph{all} patterns $P$.

Note that the Dominating $P$-Pattern problem is the natural combination of two classic problems which are heavily studied in isolation: Dominating Set and $P$-Pattern Detection.
\begin{itemize}
    \item \emph{$k$-Dominating Set}: Dominating set is a notoriously difficult problem. When parameterized by the solution size $k$ (i.e., $|S|)$, it is the arguably most natural W[2]-complete problem and hard even to approximate within $f(k)$ factors~\cite{ChalermsookCKLM17,KarthikLM19}. The best known algorithm is based on fast matrix multiplication and solves it in time $n^{k+o(1)}$ for all sufficiently large $k$~\cite{EisenbrandG04}. This is tight in the sense that an $\bigO(n^{k-\epsilon})$-time algorithm for any $k\ge 3$ and $\epsilon >0$ would refute the $k$-Orthogonal Vectors Hypothesis ($k$-OVH) and thus the Strong Exponential Time Hypothesis~\cite{PatrascuW10} (see Section~\ref{sec:preliminaries} for details). Taking the graph sparsity (i.e., the number $m$ of edges) into account as well, a recent result~\cite{FischerKR24} gives upper and conditional lower bounds establishing a tight time complexity of $mn^{k-2\pm o(1)}$ if the matrix multiplication exponent $\omega$ is equal to $2$.  
    
    \item \emph{$P$-Pattern Detection (aka Induced $P$-Subgraph Isomorphism):} The complexity of $P$-Pattern Detection for general $P$ is sensitive to the considered pattern $P$. The probably most notable special case is $k$-Clique Detection. It is the most natural $W[1]$-complete problem and also resists good approximations~\cite{ChalermsookCKLM17,LinRSW23,KarthikK22}. Its best known algorithm solves it in time essentially $\bigO(n^{\frac{\omega}{3} k})$~\cite{NesetrilP85}\footnote{If $k$ is divisible by 3.} and is conjectured to be optimal, see, e.g.,~\cite{AbboudBVW18}. For other (classes of) patterns $P$, however, only partial results are known: While for any $k$-node pattern $P$, the problem can be reduced to $k$-Clique Detection, this approach is not necessarily optimal. E.g., all 4-node patterns except clique and independent set can be detected polynomially faster than the conjectured time for clique and independent set, see in particular~\cite{KloksKM00, EisenbrandG04, VassilevskaWWWY15, DalirrooyfardVW22}. Despite significant effort (see~\cite{KloksKM00, EisenbrandG04, VassilevskaWWWY15, DalirrooyfardVVW21, DalirrooyfardVW22} for a selection), the task of finding matching upper and conditional lower bounds appears far from completed. Let us remark that also for Non-induced $P$-Pattern Detection a fine-grained classification of all patterns $P$ remains open (see, e.g.,~\cite{Marx10, DalirrooyfardVW22}). 
\end{itemize}

We ask: \emph{How does the time complexity of Dominating $P$-Pattern relate to the complexity of $P$-Pattern Detection versus to the complexity of $k$-Dominating Set?} Can we, despite the lack of a complete classification for $P$-Pattern Detection classify the fine-grained complexity of \textbf{all} \emph{dominating} patterns? 

\medskip
Previous work appears to indicate that the complexity should be governed by the domination aspect: By adapting the conditional lower bound of Patrascu and Williams for $k$-Dominating Set~\cite{PatrascuW10} (see also~\cite{FischerKR24,KuennemannR24}) it is not too difficult to obtain a conditional lower bound of $n^{k-o(1)}$ in dense graphs for \emph{any} pattern $P$, based on $k$-OVH. However, the situation becomes much more interesting for \emph{sparse} graphs. Here, it has already been observed that the fine-grained time complexity is highly sensitive to the specific pattern $P$. Specifically, Künnemann and Redzic~\cite{KuennemannR24} give some curious insights into selected patterns: (Here, to simplify the presentation, we assume that $\omega=2$ and that $k$-OVH holds.) 
\begin{itemize}
\item If $P$ is the $k$-star (a tree with a root and $k-1$ leaves), the tight time complexity is $mn^{k-2\pm o(1)}$, i.e., even in very sparse graphs with $m=\Theta(n)$, we only save a linear factor in $n$ compared to the $n^{k\pm o(1)}$ complexity in dense graphs.
\item In contrast, if $P$ is the clique with $k\ge 5$ vertices or the independent set with $k\ge 3$ vertices, we obtain a particularly simple case with tight time complexity of $(m^{\frac{k+1}{2}}/n)^{1\pm o(1)}$. In very sparse graphs with $m=\bigO(n)$, the resulting running time of $n^{\frac{k-1}{2}\pm o(1)}$ is less than the square root of the running time of $n^{k\pm o(1)}$ in dense graphs. 
\item Finally, the substantially different pattern of a perfect matching on $k\ge 4$ vertices achieves the only slightly worse time complexity of $m^{\frac{k}{2}\pm o(1)}$.
\end{itemize}
We remark that the above results suggest that in sparse graphs, Dominating $P$-Pattern shares an additional flavor with the $P$-Pattern \emph{Enumeration} problem: It is not too difficult to obtain an algorithm whose running time is roughly bounded by the time required to list all occurrences of the pattern $P$. However, it turns out that this number alone cannot fully explain the time complexity: While for the case of perfect matchings, the time complexity coincides precisely with the maximum number of occurrences of the pattern, for others (e.g., independent sets or cliques), the time complexity is polynomially less than the maximum number of occurrences of the pattern. This begs the question: which other parameter of the pattern $P$ captures the time complexity of Dominating $P$-Pattern?  

\paragraph*{Our results}
To state the time complexity for any pattern $P$, we introduce the following graph parameter $\rho(P)$. Here for any graph $P$ and $S\subseteq V(P)$, we let $N(S)$ denote the (open) neighborhood of the vertices $S$, where $N(\emptyset)=\emptyset$.\footnote{For any graph $G$, we will always denote by $V(G)$ and $E(G)$ its set of vertices and edges, respectively.}
\begin{restatable}{definition}{defParameter}\label{def:parameter}
    Let $P$ be an arbitrary pattern graph, and denote by  $I(P) \subseteq V(P)$ the set of isolated nodes of $P$.
    Choose an \emph{independent} set $S \subseteq V(P) \setminus I(P)$ maximizing $|S| - |N(S)|$; if $S$ is not uniquely defined, take any choice maximizing $|S|$.
     We define \[
     \rho(P) \coloneqq \begin{cases} |S| - |N(S)| & \text{if } S\ne \emptyset\\ -1 & \text{if } S=\emptyset. \end{cases}
     \] and set 
     \[t_P(n, m) \coloneqq n^{\rho(P)} \cdot m^{\frac{|V(P)| - \rho(P)}{2}}. \]
\end{restatable}

These quantities turn out similar (but sometimes different) to the maximum number of occurrences of the pattern graph $P$ for patterns \emph{without isolated nodes} (i.e., $I(P) = \emptyset$).
Specifically, if $I(P) =\emptyset$ and $S\ne \emptyset$, the maximum number of induced copies of $P$ in an $n$-vertex $m$-edge graph $G$ is $\Theta(t_P(n,m))$.
Indeed, for connected graphs $G$ and patterns $P$, Alon~\cite{Alon81} defines the parameter $\delta(P) \coloneqq \max_{S\subseteq V(P)} |S|-|N(S)|$ and establishes $\Theta(n^{\delta(P)} m^{\frac{|V(P)|-\delta(P)}{2}})$ as the maximum number of induced copies of $P$ in $G$ whenever $G$ has $m=\Theta(n)$ edges (which generalizes to arbitrary $m\ge n$).\footnote{To clarify the correspondence, we remark that already Alon (see~\cite[proof of Lemma 8]{Alon81}) observed that in the definition of $\delta(P)$, one may without loss of generality let $S$ range only over independent sets: for any set $S$ maximizing $|S|-|N(S)|$, the set $S' = S\setminus (S\cap N(S))$ is an independent set with $|S'|-|N(S')|\ge |S|-|N(S)|$. Thus, whenever a nonempty set $S$ maximizes $|S|-|N(S)|$, we have $\rho(P)=\delta(P)$. Alon proves that a connected  $m$-edge graph $G$ has a maximum number of $\Theta(m^{\frac{|V(P)|+\delta(P)}{2}})$ induced copies of $P$, where $P$ is any connected pattern.
The generalization to an arbitrary trade-off between $n$ and $m$ is implicit in our proofs.} Note that $\rho(P)$ and $\delta(P)$ may only differ for some patterns~$P$ with $\delta(P) = 0$, in which case $\rho(P) \in \{0,-1\}$. 
In contrast, if $P$ contains isolated nodes (e.g., the case of Dominating Independent Set), the number of occurrences of $P$ and $t_P(n,m)$ may differ vastly (e.g., between $\Theta(n^k)$ and $\Theta(m^{k/2})$).

Our results determine that for all patterns (possibly except the triangle $K_3$), $t_P(n,m)^{1\pm o(1)}$ is the conditionally tight time complexity of detecting a Dominating $P$-Pattern if $\omega = 2$. Put differently, for any pattern $P$ (except $K_3)$, we can easily determine the conditionally optimal running time (assuming $\omega=2$)! Specifically, we obtain the following algorithmic result:

\begin{restatable}[Upper Bound]{theorem}{thmUpperBound}\label{thm:upper-bound}
    For any pattern graph $P$ with at least $16$ vertices, there is an algorithm solving Dominating $P$-Pattern problem in time $t_P(n, m)^{1 + o(1)}$.
    Further, if $\omega = 2$, this algorithm exists for all patterns $P$ except $K_3$.
\end{restatable}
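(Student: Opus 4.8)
The plan is to give a single algorithmic scheme, driven by the extremal structure behind $\rho(P)$, and then to verify the running time $t_P(n,m)^{1+o(1)}$ by a case analysis over $P$. As a preprocessing step we reduce to a \emph{role-colored} variant: since $k := |V(P)|$ is a constant, it suffices, at a multiplicative cost of $2^{O(k)}\log n = n^{o(1)}$ coming from a perfect hash family, to decide for each fixed assignment of the roles $V(P)$ to colors whether $G$ has a color-respecting induced copy of $P$ that dominates $G$; once the role of each copy vertex is fixed, the ``induced'' requirement becomes a list of local edge/non-edge tests. We then extract the combinatorics behind $\rho(P)$. Let $S$ be the independent set of \Cref{def:parameter}, $s := |S|$, $t := |N(S)|$, so that $\rho = \rho(P) = s - t$ in the main case $S \ne \emptyset$ (the case $S = \emptyset$, $\rho = -1$, is addressed separately below). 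The key structural fact, which we prove by an exchange argument --- if Hall's condition failed for some $T \subseteq N(S)$, replacing $S$ by $S \setminus N_P(T)$ would strictly increase $|S| - |N(S)|$, contradicting the maximality of $S$ --- is that $P$ contains a matching between $S$ and $N(S)$ that saturates $N(S)$. This yields a partition of $V(P)$ into four disjoint parts $N(S)$, $S_1$, $S_2$, $R$: the $t$ matched pairs form $t$ vertex-disjoint edges of $P$ covering $N(S)$ together with a size-$t$ set $S_1 \subseteq S$; $S_2 := S \setminus S_1$ consists of $\rho$ further vertices, pairwise non-adjacent and with all $P$-neighbors in $N(S)$; and $R := V(P) \setminus (S \cup N(S))$, of size $|R| = k - \rho - 2t$, contains in particular all isolated vertices of $P$. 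The algorithm is organized around the resulting identity $t_P(n,m) = n^{\rho} \cdot m^{t} \cdot m^{|R|/2}$.

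The algorithm is of the enumerate-a-skeleton-then-complete-by-matrix-multiplication type familiar from $k$-Dominating Set. It first enumerates images of the $t$ matched $P$-edges as $t$ vertex-disjoint (oriented) edges of $G$ ($O(m^{t})$ choices) and images of the $\rho$ vertices of $S_2$ as arbitrary vertices of $G$ ($O(n^{\rho})$ choices), discarding branches where the placed $2t + \rho$ vertices fail to realize the prescribed adjacencies and non-adjacencies. It remains, in each branch, to place $R$ so that the resulting induced copy dominates $G$, and this must fit the remaining budget $m^{|R|/2}$ --- even though $R$ itself may carry no exploitable edges, as when $R$ is a set of isolated vertices of $P$ (e.g.\ Dominating Independent Set). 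For this we meet in the middle inside $R$: split $R$ into two parts of size about $|R|/2$; for each part enumerate all branch-compatible placements, recording for each the set of vertices of $G$ it leaves undominated; and detect, via a Boolean matrix product whose middle dimension is the undominated set, a pair of placements that is simultaneously compatible (correct edges/non-edges across the split and against the skeleton) and jointly dominating. Each half is itself enumerated Alon-style, repeatedly committing a not-yet-placed vertex to the $G$-neighborhood of an already-placed $P$-neighbor when possible and otherwise paying a full vertex of $G$; it is precisely this bookkeeping --- amortized $\sqrt m$ per matched pair versus $n$ per genuinely free vertex --- sharpened by high-degree/low-degree case distinctions on the vertices that would have to carry most of the domination, that recovers the factor $m^{|R|/2}$, with the undominated vectors absorbing exactly those vertices that $R$ must dominate but is not incident to. In the remaining regime $\rho = -1$ (e.g.\ cliques $K_k$ with $k \ge 3$, or independent sets) there is no independent set with positive surplus to exploit, and one instead follows the domination-driven enumeration for Dominating Cliques and Independent Sets of Künnemann and Redzic, which is what produces the extra $1/n$ in $t_P(n,m) = m^{(k+1)/2}/n$.

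Finally we bound the total cost: summed over the outer enumeration it equals $n^{\rho} m^{t}$ times the cost of the inner matrix products, and this inner cost is $m^{|R|/2}$ up to $n^{o(1)}$ factors provided that the two half-dimensions and the ($\le n$) middle dimension are balanced. I expect this uniform analysis of the (rectangular) matrix-multiplication completion to be the main obstacle: one has to rule out, over all patterns $P$ simultaneously, any worse trade-off among the three ``currencies'' $n^{\rho}$ (paid for $S_2$), $m^{t}$ (paid for the matched edges) and $m^{|R|/2}$ (which must simultaneously account for the sparse structure of $R$ and for the domination of the rest of $G$), with the delicate sub-cases --- most notably small patterns containing isolated vertices, where the naive completion overshoots --- handled by hand. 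When $\omega = 2$ every rectangular product costs merely its output size, the balancing is automatic, and $t_P(n,m)^{1+o(1)}$ follows for all $P$; when $\omega > 2$ the balancing needs $|R|$, hence $k$, to be large enough, which is the source of the ``at least $16$ vertices'' hypothesis. One pattern is a genuine exception and must be excluded: for $K_3$ the value $t_{K_3}(n,m) = m^{2}/n$ lies below the best known running time for detecting \emph{any} triangle in a sparse graph, so the algorithm cannot meet it --- this is why $K_3$ is excluded even under $\omega = 2$, and, together with the matrix-multiplication balancing, why the unconditional statement is confined to $k \ge 16$.
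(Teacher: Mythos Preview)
Your high-level architecture is close to the paper's, but two ingredients you treat as routine are in fact the crux, and without them the plan does not go through.

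First, your enumeration of $R$ in $m^{|R|/2}$ is under-justified. You write that each half is enumerated ``Alon-style'' with ``amortized $\sqrt m$ per matched pair versus $n$ per genuinely free vertex,'' but you never argue that $P[R]$ \emph{has} enough matched pairs for this amortization to succeed. The paper makes this precise via a structural lemma: the maximality of $S$ forces that for every $X\subseteq R$, the set $P[R]-X$ has at most $|X|$ isolated vertices (otherwise the isolated part could be moved into $S$, improving or matching $|S|-|N(S)|$ while enlarging $S$), and a theorem of Cornu\'ejols--Hartvigsen--Pulleyblank then guarantees that $P[R]$ (and also $P[R]-v$ for any $v$) has a spanning subgraph that is a disjoint union of edges and odd cycles. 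This is what yields the $m^{|R|/2}$ enumeration, with a separate high/low-degree argument per odd cycle. Relatedly, the paper does \emph{not} enumerate the $S$--$N(S)$ skeleton first and then run a matrix product only over $R$; it splits all of $P$ (or $P-v_h$ in the $\rho=-1$ case) into two balanced halves $B_1,B_2$ and runs a single product $\MM(|\text{enum }B_1|,n,|\text{enum }B_2|)$. Your two-phase scheme pays, for every one of the $n^{\rho}m^{t}$ skeletons, the cost of computing the undominated set and a fresh matrix product with middle dimension up to $n$; when $|R|$ is small this overhead exceeds the $m^{|R|/2}$ budget. Also, the case $\rho=-1$ is not just cliques and independent sets (any pattern whose every nonempty non-isolated independent set has $|S|\le |N(S)|$ qualifies, e.g.\ odd cycles), so deferring to K\"unnemann--Redzic does not cover it; the paper handles it uniformly via the heavy-vertex trick plus the edge--cycle decomposition of $P-v_h$.

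Second, and more seriously, isolated vertices cannot be absorbed into $R$ as you do. An isolated vertex of $P$ has no $P$-neighbor at all, so your ``commit to the $G$-neighborhood of an already-placed $P$-neighbor'' step never fires and you pay $n$ per isolated vertex, giving a factor $(n/\sqrt m)$ of slack per isolated vertex---this is not a small-pattern corner case but a failure for every pattern with an isolated vertex, at every size. The paper treats isolated vertices by a genuinely different mechanism: enumerate only the non-isolated part $P[X]$ (which \emph{is} a basic pattern or a small clique), delete its closed neighborhood from $G$, and then solve Maximal $r$-Independent Set on the residual graph using prior work; the case $r=1$ is still delicate and needs a bespoke bloom-filter/Max-Entry-Matrix-Product argument. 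Your proposal neither identifies nor supplies this step.
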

We remark that our algorithms have a running time close to $t_P(n,m)^{1\pm o(1)}$ even under current values of $\omega$ and small pattern sizes -- the small polynomial overhead depends on $\omega$ and vanishes if $\omega=2$ (except for the triangle $K_3$). 
Further, we complement our algorithmic result by a corresponding conditional lower bound of $t_P(n,m)^{1-o(1)}$ based on the $k$-Orthogonal Vectors Hypothesis (and thus the Strong Exponential Time Hypothesis).

\begin{theorem}[Conditional Lower Bound]\label{thm:lower-bound}
    For any pattern graph $P$ with at least $2$ vertices, there is no algorithm solving the $P$-Dominating Set problem in time $\mathcal{O}(t_P(n, m)^{1 - \varepsilon})$ for any $\varepsilon > 0$ unless the $k$-Orthogonal Vectors Hypothesis fails.
\end{theorem}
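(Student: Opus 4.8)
The plan is to prove the conditional lower bound by a reduction from $k$-Orthogonal Vectors (k-OV), adapting and generalizing the construction of Patrascu--Williams and its refinements (as in~\cite{FischerKR24,KuennemannR24}). Recall that $k$-OV asks, given $k$ sets $A_1,\dots,A_k$ of $N$ vectors each in $\{0,1\}^d$, whether there is a tuple $(a_1,\dots,a_k)$ with $\sum_{i} a_1[j]\cdots a_k[j] = 0$; under $k$-OVH this requires $N^{k-o(1)}$ time for $d = \omega(\log N)$. Given a pattern $P$ on $p = |V(P)|$ vertices, let $S$ be the independent set from Definition~\ref{def:parameter} achieving $\rho = \rho(P)$, with $T = N(S)$ of size $|S|-\rho$, and let $R = V(P) \setminus (S \cup T)$ be the remaining $p - 2|S| + \rho$ vertices. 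The idea is to build a host graph $G$ in which any copy of $P$ is forced to align $S$ with ``selector'' gadgets and $T \cup R$ with ``checker'' gadgets so that a Dominating $P$-pattern exists iff the $k$-OV instance is a YES-instance, where we choose the number of OV-sides $k$ so that the natural brute-force search over copies of $P$ costs $N^k$, matching $t_P(n,m)^{1-o(1)}$ for an appropriate sparsity regime $m = \Theta(n^\alpha)$.

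The key steps, in order, are as follows. First, I would handle the \emph{tightness of the exponents}: the host graph will have roughly $n = \Theta(N)$ vertices concentrated in $|S|$ vertex-selection gadgets (one per $S$-vertex, each of size $\Theta(N)$) plus $|T|$ smaller ``coordinate/neighborhood'' gadgets, while the edges are placed so that $m = \Theta(n^\alpha)$; one then checks that $t_P(n,m) = n^{\rho} m^{(p-\rho)/2}$ evaluates to $N^{|S|} \cdot (\text{lower-order in } N)$-type quantities, so that an $\bigO(t_P(n,m)^{1-\varepsilon})$ algorithm beats $N^{|S|-\Omega(\varepsilon)}$ --- but the reduction must actually encode a $|S|$-OV instance (or more precisely amortize a $k$-OV instance with $k$ scaled appropriately), so the real content is choosing the gadget sizes so that the exponent of $N$ coming from $t_P$ is exactly the OV-arity we encode. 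Second, the \emph{gadget construction}: each of the $|S|$ selection gadgets is an independent set of $N$ vertices (one per vector on one OV-side), vertices in $T$ are realized by ``coordinate blocks'' wired so that an $S$-vertex $v$ is adjacent to a coordinate-block vertex iff the chosen vector has a $1$ there, and the non-edges within $S$ and the prescribed adjacencies between $S$ and $T$ are enforced by the pattern $P$ being \emph{induced}; vertices of $R$ get fixed ``anchor'' locations. Third, the \emph{domination check}: we add a pool of vertices that must be dominated exactly when the selected vectors are orthogonal --- a standard trick is a ``coordinate guard'' vertex $c_j$ for each $j\in[d]$, adjacent to exactly those selection-gadget vertices whose vector is $0$ in coordinate $j$ (and also to nothing in $R$'s anchors), so $c_j$ is dominated by the chosen copy iff some selected vector is $0$ at $j$, i.e. iff orthogonality holds coordinatewise; the remaining vertices of $G$ are made trivially dominated by the anchors. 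Fourth, \emph{correctness}: argue that every induced copy of $P$ in $G$ must place $S$-vertices into distinct selection gadgets (using the maximality of $|S|-|N(S)|$ and of $|S|$ to rule out degenerate placements into $T$/$R$ gadgets or collisions), that the $T$-vertices are forced onto the coordinate blocks, and conversely that any orthogonal tuple yields a genuine induced copy that dominates everything.

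The main obstacle I expect is \emph{ruling out unintended copies of $P$}: because $P$ is an arbitrary fixed pattern, a copy of $P$ in $G$ need not respect the intended ``$S$ into selectors, $T\cup R$ into checkers'' structure --- vertices of $S$ could sneak into coordinate blocks, or automorphisms of $P$ could permute roles, or (worst) a sub-pattern of $P$ could embed in a way that still dominates $G$ while not corresponding to an OV-tuple. This is precisely where the extremal choice in Definition~\ref{def:parameter} (an independent set $S\subseteq V(P)\setminus I(P)$ maximizing $|S|-|N(S)|$, then $|S|$) does the heavy lifting: it should guarantee that no alternative placement can both fit and do at least as well, so the only ``profitable'' embeddings are the intended ones --- but turning this into a clean combinatorial lemma (likely by an exchange/averaging argument on any hypothetical copy, swapping misplaced $S$-vertices back into selector gadgets without destroying the induced-subgraph and domination conditions) is the delicate part. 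A secondary obstacle is the \emph{small-pattern / $K_3$ boundary and the $\alpha$ range}: the construction should be robust across all $1 \le \alpha \le 2$ and all patterns with $p \ge 2$, which forces careful bookkeeping of how many edges each gadget contributes, and the triangle exclusion in the upper bound suggests one should double-check that the lower bound construction does not accidentally need $P\ne K_3$ (it does not, since here we only claim the lower bound, but the sparsity scaling for $p=3$ is the tightest case and deserves explicit verification).
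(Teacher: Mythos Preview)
Your proposal has the right high-level shape but diverges from the paper in a way that leaves a real gap. The paper reduces from $k$-OV with $k=|V(P)|$, not from $|S|$-OV: \emph{every} vertex $i$ of $P$ gets its own vector set $A_i$, with $|A_i|=n$ if $i\in S$, $|A_i|=m/n$ if $i\in N(S)$, and $|A_i|=\sqrt m$ if $i\in R$ (and a separate assignment when $S=\emptyset$). The product of these sizes is exactly $t_P(n,m)$, so the unbalanced form of $k$-OVH gives the bound directly. Each $V_i$ is a set of $|A_i|$ vertices; a biclique is placed between $V_i$ and $V_j$ whenever $ij\in E(P)$ (and $V_i$ is made a clique internally if $i$ is isolated in $P$), and each $V_i$ is attached to an enforcer set $R_i$ whose vertices can only be dominated from $V_i$. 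The enforcers force any size-$k$ dominating set to pick exactly one vertex from each $V_i$, and the biclique structure guarantees that any such choice induces $P$.

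Consequently, your main anticipated obstacle --- ruling out unintended copies of $P$ via the extremal choice of $S$ --- does not arise: no exchange argument is needed, because the enforcer gadgets rigidly pin down the placement. The extremal property of $S$ enters only in checking that the host graph has $\tilde\Theta(m)$ edges: since $S$ is independent no $n\times n$ biclique appears, and since $S$ has no edge to $R$ no $n\times\sqrt m$ biclique appears; every remaining pair contributes at most $m$ edges. Your scheme with only $|S|$ large selection gadgets, ``coordinate blocks'' for $N(S)$, and ``fixed anchors'' for $R$ conflates the (subpolynomial) OV dimension with the $N(S)$-sets (which must have polynomial size $m/n$), and does not yield the right arity or edge count; as written it would give at best an $n^{|S|-o(1)}$ bound rather than $t_P(n,m)^{1-o(1)}$.
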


The only pattern $P$ for which we do not obtain an algorithm matching the lower bound of Theorem~\ref{thm:lower-bound} is the triangle $K_3$. For this pattern, our best algorithm only leaves a time overhead of $m^{1/3+o(1)}$ if $\omega=2$, yielding a bound of $(t_{K_3}(n, m) \cdot m^\frac{1}{3})^{1 + o(1)} = (m^\frac{7}{3} / n)^{1 + o(1)}$ in this case. 
\begin{theorem}\label{thm:small-patterns}
 We can solve Dominating Triangle in time $\left(m^{1+\frac{2\omega}{\omega+1}}/n\right)^{1+o(1)}$. %If $\omega=2$, this equals $\left(\frac{m^{7/3}}{n}\right)^{1+o(1)}$.
\end{theorem}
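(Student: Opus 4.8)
The plan is to reduce the problem, in the spirit of the general upper bound (from which $K_3$ is excluded), to a collection of per-vertex residual tasks. The starting point is the elementary observation that if $\{u,v,w\}$ is a dominating triangle and $u$ has the largest degree of the three, then $\deg_G(u)\ge (n-3)/3$; hence $u$ lies in the set $\mathcal U$ of vertices of degree at least $(n-3)/3$, and $|\mathcal U| = \bigO(m/n)$. So it suffices to solve, for each $u\in\mathcal U$, the following: does $G[N(u)]$ contain an edge $\{v,w\}$ with $R_u\subseteq N_G(v)\cup N_G(w)$, where $R_u\coloneqq V\setminus N[u]$? (Note that $v,w\in N(u)$ are automatically not in $R_u$, so such an edge yields a dominating triangle $\{u,v,w\}$.) If we solve each residual task in $\tOh(m^{2\omega/(\omega+1)})$ time, the total is $\tOh\!\big((m/n)\cdot m^{2\omega/(\omega+1)}\big)=\tOh\!\big(m^{1+2\omega/(\omega+1)}/n\big)$, i.e.\ $\tOh\!\big(t_{K_3}(n,m)\cdot m^{(\omega-1)/(\omega+1)}\big)$ since $t_{K_3}(n,m)=m^2/n$; absorbing the polylogarithmic factors into the exponent gives the claimed bound.

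For the residual task I would split on the size of $R_u$ at the threshold $\tau\coloneqq m^{2/(\omega+1)}$. If $|R_u|\ge\tau$, then in any valid edge $\{v,w\}$ one endpoint, say $v$, covers at least half of $R_u$ and thus has $\deg_G(v)\ge |R_u|/2\ge \tau/2$; there are only $\bigO(m/\tau)$ such vertices, and for each one that is a neighbour of $u$ we are left with a ``single vertex cover'' subproblem: find $w\in N_G(u)\cap N_G(v)$ with $R_u\setminus N_G(v)\subseteq N_G(w)$. This is solvable in $\tOh(m)$ time --- if the set to be covered is empty any candidate works, otherwise take its minimum-degree vertex $r_0$, enumerate $N_G(r_0)$ (at most $\bigO(m/|\text{set}|)$ candidates), and test each by hashing in time $\bigO(|\text{set}|)$ --- so this case costs $\tOh\!\big((m/\tau)\cdot m\big)=\tOh(m^{2\omega/(\omega+1)})$ per hub. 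If $|R_u|<\tau$, then $|R_u|$ is small enough to serve as the contracted dimension of a rectangular matrix product, so I would use an Alon--Yuster--Zwick-style degree split of $G[N(u)]$ at threshold $t\coloneqq m^{(\omega-1)/(\omega+1)}$ into a high part $B_{\mathrm{hi}}$ of size $\bigO(m/t)=\bigO(m^{2/(\omega+1)})$ and a low part $B_{\mathrm{lo}}$. Edges of $G[N(u)]$ with both endpoints in $B_{\mathrm{hi}}$ are handled by forming the miss-matrix $M\in\{0,1\}^{B_{\mathrm{hi}}\times R_u}$ with $M[v][r]=[\,r\notin N_G(v)\,]$, computing $MM^{\top}$ (dimensions $\bigO(m^{2/(\omega+1)})\times|R_u|\times\bigO(m^{2/(\omega+1)})$, hence $\bigO(m^{2\omega/(\omega+1)})$ time), and scanning for a zero entry located at an edge. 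Edges incident to $B_{\mathrm{lo}}$ have a low-degree endpoint inside $N(u)$ and hence few incident candidate edges; combined with pinning one endpoint down via a low-degree vertex of $R_u$, these should be enumerable and their domination testable within the per-hub budget.

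The hard part is exactly making this low-degree sub-case of $|R_u|<\tau$ fit. In a sparse host, $|N(u)|$ and $|R_u|$ can both be $\Theta(n)=\Theta(m)$, so one can afford neither to iterate over all of $B_{\mathrm{lo}}$ nor to let any quantity of order $|N(u)|$ become a matrix dimension, and even an $\bigO(|R_u|)$ domination test per candidate edge is too expensive; squeezing this case into $\tOh(m^{2\omega/(\omega+1)})$ per hub is where the real work lies, presumably via finer degree classes, a secondary (rectangular) matrix multiplication, and amortization across the hubs. It is also where the overhead $m^{(\omega-1)/(\omega+1)}$ relative to $t_{K_3}(n,m)$ arises and, we believe, is difficult to remove: the residual problem has the flavour of triangle detection, whose best known running time $\min\{n^{\omega},\,m^{2\omega/(\omega+1)}\}$ already exhibits an $m^{(\omega-1)/(\omega+1)}$-type gap over the edge count. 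The remainder is routine bookkeeping: verifying that building the graphs $G[N(u)]$, assembling the miss-matrices, and scanning residual edge sets all stay within budget, and that the exponent $1+o(1)$ absorbs the logarithmic factors as well as the small polynomial slack between $\omega$ and the exponents of current rectangular matrix multiplication.
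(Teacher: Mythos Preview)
Your reduction to the per-hub residual task is exactly right and matches the paper: guess a heavy vertex $u$ (there are $\bigO(m/n)$ of them), and for each such $u$ decide whether some edge $\{v,w\}$ of $G[N(u)]$ satisfies $R_u\subseteq N_G(v)\cup N_G(w)$, in time $m^{2\omega/(\omega+1)+o(1)}$. Where you diverge is in how to solve this residual task, and here your argument has a genuine gap that you yourself flag: the ``low-degree sub-case'' of $|R_u|<\tau$ is left unfinished. Concretely, edges of $G[N(u)]$ with a low-degree endpoint can number up to $\Theta(m)$, and testing domination of $R_u$ for each one costs $\Theta(|R_u|)$, giving $\Theta(m\cdot\tau)=\Theta(m^{(\omega+3)/(\omega+1)})$ per hub, which exceeds your budget for all $\omega<3$. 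The hints you offer (pinning via a low-degree vertex of $R_u$, finer degree classes, amortization across hubs) do not close this gap, and I do not see a way to make them work without essentially rediscovering the paper's idea.

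The paper sidesteps all of this with a single observation: for an edge $\{v,w\}$ in $G[N(u)]$, the condition $R_u\subseteq N(v)\cup N(w)$ is equivalent, by inclusion--exclusion, to
\[
|N(v)\cap R_u| + |N(w)\cap R_u| - |N(v)\cap N(w)\cap R_u| \;=\; |R_u|.
\]
The first two terms are precomputable for all $v\in N(u)$ in $\bigO(m)$ time. The third term, for an \emph{edge} $\{v,w\}$, is precisely the number of triangles on $\{v,w\}$ whose third vertex lies in $R_u$; computing this for all edges simultaneously is exactly the All-Edges-Triangle-Counting problem, solvable in $m^{2\omega/(\omega+1)+o(1)}$ time. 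One then scans the $\bigO(m)$ edges of $G[N(u)]$ and checks the equation in $\bigO(1)$ each. No case analysis on $|R_u|$, no degree splits, no miss-matrices. Your intuition that the residual task ``has the flavour of triangle detection'' is spot on; the point is that it \emph{is} (per-edge) triangle \emph{counting}, and invoking that black box directly is both the missing idea and the entire proof.
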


\paragraph*{Technical Outline and Overview}

In Section~\ref{sec:algorithm}, we obtain our algorithmic results: We first study sufficiently large connected patterns $P$  (different from $K_3$ and $K_4$), which we call \emph{basic patterns}. We argue that there are only $\bigO(t_P(n,m))$ sets that might form a Dominating $P$-Pattern (using similar arguments to Alon~\cite{Alon81}). Moreover, we show how to enumerate all such candidate sets in time $t_P(n,m)^{1+o(1)}$ in such a way that allows us to reduce to fast matrix multiplication to perform a dominance check as in Eisenbrand and Grandoni~\cite{EisenbrandG04}. Notably, this approach introduces a polynomial overhead for $K_3$ and $K_4$, as the created matrices are too large compared to the lower bound. We handle these cases separately: Using a careful combination of ideas from sparse Triangle Counting~\cite{18-triangle-counting} and sparse 2-Dominating Set~\cite{FischerKR24}, we are able to improve the algorithms for $K_4$ and $K_3$; if $\omega=2$, this completely eliminates the overhead for $K_4$ and reduces the overhead for $K_3$ to $m^{1/3+o(1)}$. Finally, we extend our algorithm beyond connected patterns by showing how to handle isolated nodes: We can no longer enumerate all candidate sets, as this number becomes too large. Instead, we apply a recursive approach similar to the algorithm for Dominating Independent Sets given in~\cite{KuennemannR24}. Here, particularly the case of a single isolated node requires great care.

%Due to space reasons, our conditional lower bounds had to be deferred to the full version of the paper. %appendix (Section~\ref{sec:lower-bounds}). 
Our conditional lower bound construction generalizes the ones in~\cite{KuennemannR24} (which in turn are based on~\cite{PatrascuW10, FischerKR24}): Specifically, we reduce from a version of the $k$-OV problem with carefully chosen set sizes. Intuitively, the parameter $\rho(P)$ spells out how to choose these sizes: (1) for every vertex $v\in S$, we have a set of $n$ vectors, (2) for every vertex $v\in N(S)$, we have a set of $m/n$ vectors,
and (3) for every vertex $v\in V(P)\setminus (S\cup N(S))$, we have a set of $\sqrt{m}$ vectors. The precise construction requires care; in particular if $S=\emptyset$, we need to use an alternative choice of a single set of $m/n$ vectors and $k-1$ sets of $\sqrt{m}$ vectors.

%In section \ref{sec:algorithm}, we show Theorems %\ref{thm:upper-bound} and \ref{thm:small-patterns} by giving algorithms solving the Pattern Dominating Set problem in the given time complexities.
%These algorithms extend previously work on the Dominating Set problem \cite{EisenbrandG04,FischerKR24,KuennemannR24}.
%In section \ref{sec:lower-bounds}, we show Theorem \ref{thm:lower-bound} via reduction from the Orthogonal Vector problem.
%The Orthogonal Vector problem is a common problem to reduce from to show lower bounds, see \cite{17-hardness-assumptions}.

%Finally, in Section \ref{sec:pattern-sets}, 
We additionally study variants of Dominating $P$-Pattern where we are given a set~$\mathcal{Q}$ of patterns rather than a single pattern. The corresponding problem Dominating $\mathcal{Q}$-Pattern asks to detect a dominating set $S$ that induces a subgraph isomorphic to \emph{some} $P\in \mathcal{Q}$. A notable special case is the Non-Induced Dominating $P$-Pattern problem in which the task is to detect a dominating set $S$ such that $G[S]$ contains $P$ as a subgraph. Combining the following theorem with Theorems~\ref{thm:upper-bound} and~\ref{thm:lower-bound}, our results settle the fine-grained complexity of Non-Induced Dominating $P$-Pattern for any pattern $P$ except $K_3$ (assuming $\omega=2$). 
%induces the pattern graph $P$, but a natural question is to consider a version where the vertices of $D$ form $P$ as a subgraph, not necessarily induced.
%This generalizes further to the Pattern Set Dominating Set problem, where the goal is to find a dominating set which induces one pattern $P$ out of a set of possible pattern graph $Q$ in the instance graph $G$.
%While we do not fully answer the question about the difficulty of this problem here, we show the following Theorem:
\begin{theorem}[Dominating $\mathcal{Q}$-Pattern, Informal version]\label{thm:pattern-sets-informal}
    Let $Q$ be a finite set of patterns of identical order.
    The fine-grained complexity of Dominating $\mathcal{Q}$-Pattern is dominated by the pattern $P \in \mathcal{Q}$ with the highest time complexity.
\end{theorem}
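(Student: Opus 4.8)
The plan is to prove the two directions of the classification separately; the upper bound is essentially immediate, and the lower bound requires revisiting the construction behind Theorem~\ref{thm:lower-bound}. Throughout, all patterns $P\in\mathcal{Q}$ share the same number of vertices $k$, and we work in a fixed density regime $m=\Theta(n^\alpha)$ with $1\le\alpha\le 2$; there $t_P(n,m)=n^{\rho(P)(1-\alpha/2)+\alpha k/2}$, so ``the pattern with the highest time complexity'' is any $P^\star\in\mathcal{Q}$ maximising $\rho(P)$ when $\alpha<2$ (and when $\alpha=2$ every pattern is equally hard, with $t_P(n,m)=n^k$). For the \emph{upper bound}, I would simply run the Dominating $P$-Pattern algorithm of Theorem~\ref{thm:upper-bound} --- or that of Theorem~\ref{thm:small-patterns} if $P=K_3$ --- for each of the $\bigO(1)$ patterns $P\in\mathcal{Q}$ and answer ``yes'' iff some call does. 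The total running time is $\sum_{P\in\mathcal{Q}}t_P(n,m)^{1+o(1)}=\left(\max_{P\in\mathcal{Q}}t_P(n,m)\right)^{1+o(1)}$, subject to exactly the same caveats as Theorems~\ref{thm:upper-bound} and~\ref{thm:small-patterns} (an $m^{1/3+o(1)}$ overhead if $K_3\in\mathcal{Q}$, and otherwise the requirement $\omega=2$ or $k\ge 16$).

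For the \emph{lower bound}, fix a hardest pattern $P^\star\in\mathcal{Q}$ and recall that the proof of Theorem~\ref{thm:lower-bound}, instantiated with $P^\star$, maps a $k$-OV instance to a host graph $G$ with $n$ vertices and $m=\Theta(n^\alpha)$ edges such that $G$ has a dominating $P^\star$-pattern iff the $k$-OV instance is satisfiable. I would argue that the \emph{same} reduction already witnesses the lower bound for Dominating $\mathcal{Q}$-Pattern. The implication ``$k$-OV satisfiable $\Rightarrow$ $G$ has a dominating $\mathcal{Q}$-pattern'' is trivial since $P^\star\in\mathcal{Q}$. For the converse, the point is that the Patrascu--Williams-style construction underlying Theorem~\ref{thm:lower-bound} can be taken to have the stronger property that, when the $k$-OV instance is unsatisfiable, $G$ has \emph{no dominating set of size $k$ whatsoever}: the coordinate-checker vertices can only be covered by picking one vertex from each of the $k$ vector groups indexed by $V(P^\star)$, and any such selection that dominates already encodes a $k$-OV solution --- it is only \emph{after} fixing such a set that the remaining edges of the construction pin the induced subgraph down to $P^\star$. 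Since every pattern in $\mathcal{Q}$ has exactly $k$ vertices, any dominating $\mathcal{Q}$-pattern in $G$ is in particular a dominating set of size $k$, hence forces the $k$-OV instance to be satisfiable. Thus a $\bigO(t_{P^\star}(n,m)^{1-\varepsilon})$-time algorithm for Dominating $\mathcal{Q}$-Pattern would refute $k$-OVH exactly as in Theorem~\ref{thm:lower-bound}. As a by-product, specialising to $\mathcal{Q}=\{P' : |V(P')|=|V(P)|,\ E(P')\supseteq E(P)\}$ and combining with Theorems~\ref{thm:upper-bound} and~\ref{thm:lower-bound} yields the stated consequence for Non-Induced Dominating $P$-Pattern.

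The step I expect to be the main obstacle is verifying this ``no spurious $k$-vertex dominating set'' property of the Theorem~\ref{thm:lower-bound} construction: one must check that the extra edges added there to control which subgraph is induced never help dominate the checker vertices, so that the argument forbidding a dominating $P^\star$-pattern in fact forbids \emph{any} dominating set of size $k$, irrespective of which graph it induces. For the clique / independent-set / matching constructions of~\cite{KuennemannR24} this holds because those additional edges stay within the vector groups; making it precise for the general construction --- in particular for patterns with isolated vertices and for the exceptional case $S=\emptyset$, where the group sizes are chosen differently --- is where care is needed. Should the construction as written fail to have this property, the fallback is to augment it with guard gadgets enforcing the canonical one-vertex-per-group structure while leaving $n$ and $m=\Theta(n^\alpha)$ asymptotically unchanged, which again reduces the problem to the domination analysis already carried out for Theorem~\ref{thm:lower-bound}.
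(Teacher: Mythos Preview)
Your proposal is correct and follows the same route as the paper: run each single-pattern algorithm for the upper bound, and reuse the reduction of Theorem~\ref{thm:lower-bound} instantiated with the hardest pattern $\tilde P\in\mathcal{Q}$ for the lower bound. Your flagged obstacle dissolves once you observe---as the paper does---that in the constructed graph \emph{every} dominating set of size $k$ must take exactly one vertex from each group $V_i$ (the $R_i$ gadgets enforce this, and the step-4 edges lie entirely inside $\bigcup_i V_i$, touching neither the $R_i$'s nor the coordinate-checker vertices $X$), and therefore already induces $\tilde P$; hence a dominating $\mathcal{Q}$-pattern exists iff a dominating $\tilde P$-pattern does.
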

%Due to the space restrictions, the proof of this theorem is provided only in the full version of the paper.

\section{Preliminaries}\label{sec:preliminaries}
For a $n \in \mathbb{N}$, $[n]$ denotes the set $\{1, \dots, n\}$.
For a set $X$, its power set is denoted by $2^X$ and the set of all subsets of size $k$ by $\binom{X}{k}$.
For two sets $X,Y$, by $X \times Y$ we denote a set of unordered pairs, i.e. $X \times Y := \{\{x, y\} \mid x \in X, y \in Y\}$.
For a pattern graph $P$ and a host graph $G$, we use $k$ and $n$, respectively, to denote their order, i.e., number of vertices. As a shorthand for $\{u, v\}$, the notation $uv$ denotes an edge between $u$ and $v$.
The set $N(v) = \{u \mid uv \in E\}$ is the \emph{neighborhood} of $v$.
The \emph{closed neighborhood} of $v$ is $N[v] = N(v) \cup \{v\}$.
For a set of vertices $S \subseteq V$, by $N(S)$ (resp. $N[S]$), we denote the set $\bigcup_{v\in S}N(v)$ (resp. $\bigcup_{v\in S}N[v]$).
Moreover, the subgraph of $G$ \emph{induced} by $S$ is denoted by $G[S]$.
For a vertex $v \in V$, the graph $G - v$ denotes the graph $G$ with the vertex $v$ deleted.
Likewise, for a set $X \subseteq V$, we use $G - X := G[V \setminus X]$.
A set of vertices $D \subseteq V$ dominates the graph $G$ if every vertex of $v$ is either in $D$ or has a neighbor in $D$.

The matrix multiplication exponent $\omega \geq 2$ is the smallest constant such that there is an algorithm multiplying any two $n \times n$ matrices in $n^{\omega + o(1)}$ time.
For two rectangular matrices of size $n_1 = n \times n_2$ and $n_2 \times n_3$, $\MM(n_1, n_2, n_3)$ denotes the time complexity of multiplying these matrices.
Similarly, for matrix of size $n^a \times n^b$ and $n^b \times n^c$, $n^{\omega(a, b, c) + o(1)}$ denotes the time complexity of multiplying them.
Further, $\alpha \leq 1$ is the largest constant such that $\omega(1, \alpha, 1) = 2$.
%The statements $\omega = 2$ and $\alpha = 1$ are equivalent.
The best known bounds for $\alpha$ and $\omega$ are $\omega\leq 2.3713$ \cite{15-faster-matrix-mult}, and $\alpha \geq 0.3213$ \cite{14-fast-matrix-mult}.
\subsection{Hardness Assumptions}\label{ssec:hardness}
Consider the $k$-Orthogonal Vectors problem ($k$-OV) stated as follows:
Given sets $A_1,\dots, A_{k}\subseteq \{0,1\}^d$, decide if there are vectors $a_1\in A_1,\dots, a_k\in A_k$ such that for all $t\in [d]$, it holds that $\prod_{i=1}^{k} a_i[t] = 0$.
A simple brute force approach solves the $k$-OV in time $\bigO(d\cdot \prod_{i\in [k]}|A_i|)$.
%On the other hand, it is well known that an algorithm solving $k$-OV with $|A_1| = \dots = |A_k| = n$ and $d=\log^2 n$ in time $\bigO(n^{k-\varepsilon})$ would refute SETH\footnote{$d=\log^2 n$ can be replaced by any $d=\omega(\log n)$.}, which follows by combining a split-and-list reduction~\cite{PatrascuW10} with the sparsification lemma~\cite{ImpagliazzoPZ01}, see~\cite{VassilevskaW18} for details. This conjecture is known as \emph{(low-dimensional) $k$-OV Hypothesis}.
%For the purpose of this paper, we consider a more general formulation where we allow the sets to be of different sizes and hence state the $k$-OVH as follows.
On the other hand, it has been a very popular hypothesis that this is the best possible way to solve this problem (up to subpolynomial factors).
In fact, it is a well-known fact that for sufficiently large~$d$ (e.g., $d=\log^2(|A_1| + \dots + |A_k|)$), any polynomial improvement over this running time would refute Strong Exponential Time Hypothesis (see \cite{VassilevskaW18,Williams05}) and has since been used as a hypothesis to show many conditional lower bounds (see e.g. \cite{abboud2018more,bringmann2024dynamic,BringmannFOCS15,buchin2019seth,chen2019equivalence} and~\cite{VassilevskaW18} for an overview).
For convenience, we state this hypothesis.
\begin{hypothesis}[$k$-OVH]\label{hyp:kovh}
    Let $n_1, n_2, \ldots, n_k \in \mathbb{N}$ be constant.
    There is no algorithm which solves all instances of $k$-Orthogonal Vector with $|A_i| = n_i$ for all $i \in [k]$ in $\mathcal{O}(\prod_{i = 1}^{k} n_i \cdot \text{poly}(d))$ time for any $\varepsilon > 0$.
\end{hypothesis}
We note that $k$-OVH is often used in the special case when $|A_1| = \dots = |A_k| = n$, however by using a simple split-and-list approach, one can show that these two hypotheses are in fact equivalent\footnote{In the sense that refuting one also refutes the other.} (see e.g. \cite{FischerKR24}).
%While there is a newer bound on $\omega$ in \cite{15-faster-matrix-mult}, namely $\omega \leq 2.371339$, the authors do not give a new bound on $\alpha$.
\section{Algorithm for Pattern Dominating Set}\label{sec:algorithm}
In this section we develop the algorithms for Dominating $P$-Pattern problem for different choices of patterns $P$ and prove the following main theorem as stated in the Introduction. 
\thmUpperBound*
In order to prove this theorem, we proceed in three steps. We first develop algorithms for a class of patterns $P$ we call \emph{basic}: all patterns that contain no isolated vertices and are not isomorphic to a $K_3$ or $K_4$.
We then handle patterns with isolated vertices and the remaining small cliques $K_3, K_4$ separately. 
%Note that the theorem above does not state anything about the Dominating Triangle problem, but in fact we also construct two algorithms for this problem that achieve a small polynomial overhead over the conditional lower bound.
An important ingredient that we use to speed up the dominance check is an approach via fast matrix multiplication due to Eisenbrand and Grandoni \cite{EisenbrandG04}.
\begin{lemma}\label{lem:guessing-approach}
    Let $X,Y$ be the sets of vertices and $\phi:2^X\to \{0,1\}$ be a predicate such that for any $D\subseteq X$ we can check $\phi(D)$ in constant time.
    Let $\mathcal V_A, \mathcal V_B\subseteq 2^X$ be sets of subsets of $X$ such that any subset $D\subseteq X$ that satisfies $\phi$ can be written as a union of two sets $A\in \mathcal V_A$ and $B\in \mathcal{V}_B$.
    Then in time $\MM(|\mathcal V_A|, |Y|, |\mathcal{V}_B|)$ we can enumerate all subsets of $X$ that satisfy $\phi$ and dominate $Y$.
\end{lemma}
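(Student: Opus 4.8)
The plan is to realize the dominance condition as a Boolean matrix product. First I would build a $|\mathcal V_A| \times |Y|$ matrix $M_A$ whose rows are indexed by the sets $A \in \mathcal V_A$, with $M_A[A,y] = 1$ iff $y \in N[A]$ (i.e. $y$ is dominated by $A$); here I exploit that $|\mathcal V_A|, |\mathcal V_B|, |Y|$ are all polynomially bounded so we can afford to write these matrices down (each entry costs $\bigO(1)$ with appropriate preprocessing of closed neighborhoods). Symmetrically I would build a $|Y| \times |\mathcal V_B|$ matrix $M_B$ with $M_B[y,B] = 1$ iff $y \notin N[B]$, that is, $y$ is \emph{not} dominated by $B$. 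Then the $(A,B)$ entry of the product $M_A \cdot M_B$ counts the number of $y \in Y$ that are dominated by $A$ but not by $B$; hence this entry is $0$ exactly when every $y \in Y$ dominated by $A$ is also dominated by $B$, which — since $N[A] \cup N[B] = N[A \cup B]$ — is equivalent to $A \cup B$ dominating $Y$. Computing this product takes $\MM(|\mathcal V_A|, |Y|, |\mathcal V_B|)$ time.

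Next I would use this product to enumerate the desired subsets. For each pair $(A,B)$ with a zero entry in $M_A \cdot M_B$, the set $D := A \cup B$ dominates $Y$; we then additionally check $\phi(D)$ in constant time and output $D$ if it holds. Conversely, by hypothesis every $D \subseteq X$ satisfying $\phi$ that dominates $Y$ can be written as $D = A \cup B$ for some $A \in \mathcal V_A$, $B \in \mathcal V_B$, and for that pair the product entry is zero, so $D$ is indeed discovered. The number of pairs is $|\mathcal V_A|\cdot |\mathcal V_B| \le \MM(|\mathcal V_A|, |Y|, |\mathcal V_B|)$ (the output size of the matrix), so this enumeration loop does not dominate the running time. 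One should note that the same $D$ may arise from several pairs $(A,B)$; this only affects whether we want a list with or without repetition, and deduplication (e.g. via a hash set keyed by $D$) costs at most $\tilde{\bigO}(|\mathcal V_A|\cdot|\mathcal V_B|)$, which is within budget.

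The only genuinely delicate point — and the one I would state carefully — is the direction of the ``not dominated'' encoding: we need the product to detect the \emph{existence} of a bad witness $y$ (dominated by $A$, missed by $B$), so $M_A$ records domination and $M_B$ records non-domination, and we look for \emph{zero} entries rather than nonzero ones. Getting this polarity right is the crux; everything else is bookkeeping. I would also remark that if one prefers to avoid searching for zero entries, one can instead multiply $M_A$ by the complement matrix and compare against $|Y|$, but the zero-entry formulation is cleanest. Finally, since the lemma only promises to enumerate (not to bound the total count), no further argument about $|\mathcal V_A|, |\mathcal V_B|$ is needed here — the callers of the lemma will be responsible for ensuring these sets are small enough for $\MM(|\mathcal V_A|, |Y|, |\mathcal V_B|)$ to match the target running time $t_P(n,m)^{1+o(1)}$.
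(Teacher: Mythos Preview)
There is a genuine gap in your encoding. With $M_A[A,y]=1$ iff $y\in N[A]$ and $M_B[y,B]=1$ iff $y\notin N[B]$, the product entry $(M_A M_B)[A,B]$ indeed counts the $y\in Y$ that are dominated by $A$ but not by $B$. You correctly note that this entry is $0$ iff every $y$ dominated by $A$ is also dominated by $B$, i.e.\ $N[A]\cap Y\subseteq N[B]$. But this is \emph{not} equivalent to $A\cup B$ dominating $Y$: if some $y\in Y$ lies in neither $N[A]$ nor $N[B]$, it contributes $0\cdot 1=0$ to the sum, so the entry is still $0$ even though $A\cup B$ fails to dominate $Y$. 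A ``bad witness'' for domination is a $y$ missed by \emph{both} $A$ and $B$, not one dominated by $A$ and missed by $B$.

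The fix is exactly the polarity issue you flagged, but in the other direction: set \emph{both} matrices to record non-domination, $M_A[A,y]=1$ iff $y\notin N[A]$ and $M_B[y,B]=1$ iff $y\notin N[B]$. Then $(M_A M_B)[A,B]=|\{y\in Y: y\notin N[A]\text{ and }y\notin N[B]\}|$, which is $0$ iff $Y\subseteq N[A]\cup N[B]=N[A\cup B]$. With this correction the rest of your argument (scanning zero entries, checking $\phi$ in $O(1)$, bounding the scan by the output size $|\mathcal V_A|\cdot|\mathcal V_B|\le \MM(|\mathcal V_A|,|Y|,|\mathcal V_B|)$) goes through and matches the Eisenbrand--Grandoni approach the paper cites. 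Your closing remark about ``multiplying by the complement matrix and comparing against $|Y|$'' also does not work as stated; the correct asymmetric variant compares against the row sum $\sum_y M_A[A,y]$ rather than $|Y|$ (this is the Max-Entry formulation used later in the paper).
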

We remark that in the $k$-Dominating Set algorithm~\cite{EisenbrandG04}, the predicate $\phi(D)$ from the previous lemma is true if $|D|=k$ and false otherwise. In our case, for a fixed pattern $P$, the predicate $\phi(D)$ will be true if the subgraph of $G$ induced by $D$ is isomorphic to $P$ (i.e. $G[D]\cong P$).\footnote{Note that verifying $\phi$ takes $f(|P|) = \bigO(1)$ time, for constant size patterns.}
Recall also our definition of the parameter $\rho(P)$, which will be relevant in this section.
\defParameter*
Let $S$ be a (possibly empty) independent set of non-isolated vertices in $P$ that maximizes the value $|S|- |N(S)|$ as in Definition \ref{def:parameter}.
We define the \emph{remainder} set $R$ to be $V(P) \setminus N[S]$.
The following observation follows directly from the fact that $S$ is an independent set.
\begin{observation}\label{obs:s-t-partition}
    Let $P$ be a pattern on $k$ vertices and $S, N(S), R$ be as defined above. 
    Then $S, N(S),R$ form a partition of $V(P)$.
\end{observation}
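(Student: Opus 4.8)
The statement asserts that the three sets $S$, $N(S)$, and $R = V(P) \setminus N[S]$ together cover $V(P)$ and are pairwise disjoint. The plan is to verify these two properties directly from the definitions, isolating the single point where the independence of $S$ is actually needed.

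First I would establish coverage. By definition $N[S] = S \cup N(S)$, so $R = V(P) \setminus N[S] = V(P) \setminus (S \cup N(S))$; hence $S \cup N(S) \cup R = (S \cup N(S)) \cup (V(P) \setminus (S \cup N(S))) = V(P)$. This part is purely set-theoretic and uses nothing about the structure of $P$.

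Next I would check pairwise disjointness. Two of the three intersections are immediate: since $R$ is defined as the complement of $S \cup N(S)$ in $V(P)$, we have $R \cap S = \emptyset$ and $R \cap N(S) = \emptyset$. The only nontrivial case is $S \cap N(S) = \emptyset$, and this is exactly where the hypothesis that $S$ is an independent set enters. Indeed, if some vertex $v$ lay in $S \cap N(S)$, then $v \in N(S)$ would give a vertex $u \in S$ with $uv \in E(P)$; but then $u,v \in S$ are adjacent, contradicting that $S$ is independent. Hence $S \cap N(S) = \emptyset$, completing the argument.

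There is no real obstacle here: the statement is a direct unpacking of the definition of $R$ together with the defining property of an independent set. The only thing to be careful about is that the argument does not need $S$ to be the \emph{optimal} maximizer from Definition~\ref{def:parameter} (nor that it consists of non-isolated vertices) — independence alone suffices — so I would phrase it for an arbitrary independent set $S$ and the corresponding $R = V(P)\setminus N[S]$.
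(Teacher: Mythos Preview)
Your proposal is correct and matches the paper's approach: the paper simply states that the observation ``follows directly from the fact that $S$ is an independent set,'' and your write-up is exactly the unpacking of that remark, with the only nontrivial disjointness $S\cap N(S)=\emptyset$ derived from independence.
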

Before proceeding with the algorithm, we state another observation, that gives us a nice way to think about the value of $t_P(n,m)$ depending on whether set $S$ is empty or not.
\begin{observation}\label{obs:tp-breakdown}
    Let $P$ be a pattern on $k$ vertices and $S, N(S), R$ be the partition of $V(P)$ as defined above. 
    \begin{itemize}
        \item If $S = \emptyset$, then $t_P(n,m) = \frac{m^{\left(k +1 \right)/2+o(1)}}{n}$.
        \item If $S\neq \emptyset$, then $t_P(n,m) = t_{P[N[S]]}(n,m)\cdot m^{|R|/2}$.
    \end{itemize}
\end{observation}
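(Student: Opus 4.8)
The plan is to unfold the definition of $t_P(n,m) = n^{\rho(P)} m^{(k-\rho(P))/2}$ in each of the two cases and simplify.

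\medskip
\noindent\textbf{Case $S=\emptyset$.} First I would note that by Definition~\ref{def:parameter}, $S=\emptyset$ forces $\rho(P)=-1$. Plugging $\rho(P)=-1$ into the formula gives
\[
t_P(n,m) = n^{-1}\cdot m^{(k-(-1))/2} = \frac{m^{(k+1)/2}}{n},
\]
which is exactly the claimed expression (the $o(1)$ in the exponent of $m$ in the statement is just bookkeeping, absorbing lower-order slack and matching how $t_P$ is used elsewhere). This case is essentially immediate.

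\medskip
\noindent\textbf{Case $S\neq\emptyset$.} Here $\rho(P)=|S|-|N(S)|\ge 0$. Write $P' \coloneqq P[N[S]]$, the pattern induced on $S\cup N(S)$; it has $|S|+|N(S)|$ vertices. The key sub-claim is that $\rho(P')=\rho(P)$: the set $S$ is an independent set of non-isolated vertices of $P'$ (every vertex of $S$ has a neighbor in $N(S)\subseteq V(P')$, since vertices of $S$ are non-isolated in $P$ and $S$ is independent, so all their neighbors lie in $N(S)$), and it achieves $|S|-|N_{P'}(S)| = |S|-|N_P(S)| = \rho(P)$ because $N_{P'}(S)=N_P(S)$ by construction. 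Maximality of this value over independent sets of non-isolated vertices of $P'$ follows since any such set is also an independent set of non-isolated vertices of $P$ (an independent set in an induced subgraph is independent in $P$, and its $P'$-neighborhood is contained in its $P$-neighborhood, so the difference can only be smaller in $P'$), hence cannot beat $\rho(P)$. Thus $\rho(P')=\rho(P)$, and using $|V(P')| = |S|+|N(S)| = k - |R|$ (by Observation~\ref{obs:s-t-partition}) we compute
\[
t_{P'}(n,m)\cdot m^{|R|/2}
 = n^{\rho(P)} m^{(|V(P')|-\rho(P))/2}\cdot m^{|R|/2}
 = n^{\rho(P)} m^{(k-|R|-\rho(P))/2 + |R|/2}
 = n^{\rho(P)} m^{(k-\rho(P))/2} = t_P(n,m).
\]

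\medskip
\noindent The only point requiring a little care is the sub-claim $\rho(P')=\rho(P)$ — in particular checking that restricting to $P' = P[N[S]]$ neither destroys the optimality of $S$ nor creates a better independent set of non-isolated vertices; everything else is a direct substitution into the definition. The tie-breaking clause (maximizing $|S|$) plays no role here since we only need the value $\rho(P')$, not uniqueness of the optimizer.
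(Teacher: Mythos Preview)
The paper states this observation without proof, so there is nothing to compare against; I will simply assess correctness. Your first case and the overall computation in the second case are fine, and you have correctly identified that the whole content of the second bullet is the equality $\rho(P')=\rho(P)$ for $P'=P[N[S]]$. The direction $\rho(P')\ge\rho(P)$ is established exactly as you say.

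The gap is in the other direction. From $N_{P'}(S')\subseteq N_P(S')$ you get $|S'|-|N_{P'}(S')|\ge |S'|-|N_P(S')|$, which is the \emph{wrong} inequality: knowing $|S'|-|N_P(S')|\le\rho(P)$ does \emph{not} let you conclude $|S'|-|N_{P'}(S')|\le\rho(P)$. In words, passing to the induced subgraph can only shrink the neighborhood, hence can only \emph{increase} the difference you are trying to bound. So your sentence ``the difference can only be smaller in $P'$'' is backwards, and the maximality argument as written fails.

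A correct argument uses the optimality of $S$ more substantially. From the maximality of $|S|-|N(S)|$ one gets a Hall-type condition: for every $T\subseteq N(S)$, $|N_P(T)\cap S|\ge |T|$ (otherwise $S\setminus(N_P(T)\cap S)$ would strictly improve the objective). Now for any independent set $S'\subseteq S\cup N(S)$ of non-isolated vertices in $P'$, write $A=S'\cap S$ and $C=S'\cap N(S)$. Since $N_P(A)\subseteq N(S)$ and $N_P(C)\cap S\subseteq S$ are disjoint and both lie in $N_{P'}(S')$, we obtain
\[
|N_{P'}(S')|\ \ge\ |N_P(A)|+|N_P(C)\cap S|\ \ge\ |N_P(A)|+|C|,
\]
using Hall's condition with $T=C$. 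Hence $|S'|-|N_{P'}(S')|\le |A|-|N_P(A)|\le\rho(P)$, since $A\subseteq S$ is itself a valid candidate in $P$. This gives $\rho(P')\le\rho(P)$ and closes the gap. (The paper invokes exactly this Hall-type matching from $N(S)$ into $S$ elsewhere; see the remark after Definition~\ref{def:parameter} and the proof sketch for the $S\neq\emptyset$ case.)
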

If we consider a partition into sets $S,N(S)$ and $R$ similar as above, intuitively we have some structure on the parts $S$ (independent set in $P$) and $N(S)$ (there exists a maximal matching that matches each vertex in $N(S)$ to a vertex in $S$), but we have very little structure on how the remainder set looks like. 
In the following lemma we show that we can decompose the remainder set $R$ into much simpler subgraphs (disjoint edges and cycles).
\begin{restatable}{lemma}{lemmaDecompR}\label{lem:decomp-T}
    Let $S$, $N(S)$ and $R$ be a partition of $V(P)$ as defined above and assume that the induced subgraph $P[R]$ contains no isolated vertices.
    Then the following holds.
    \begin{itemize}
        \item There exists a spanning subgraph of $P[R]$ that is isomorphic to a disjoint union of edges and odd cycles.
        \item For any vertex $x\in R$, there exists a spanning subgraph of $P[R-x]$ that is isomorphic to a disjoint union of edges and odd cycles.
    \end{itemize}
    %If the induced subgraph $P[R]$ contains no isolated vertices, then 
\end{restatable}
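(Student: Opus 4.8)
The plan is to show that $P[R]$ (and $P[R-x]$) has a spanning subgraph which is a disjoint union of edges and odd cycles by invoking a classical structural fact: a graph $H$ has a spanning subgraph that is a disjoint union of edges and odd cycles if and only if $H$ has no isolated vertices, provided we rule out the obstruction that a component of $H$ is a single vertex. Actually the cleanest route is via matchings: I claim that if $H$ has no isolated vertices, then $H$ has a spanning subgraph that is a vertex-disjoint union of edges and odd cycles precisely when\dots we always can. Here is the argument. Consider a maximum matching $M$ of $H$. Let $U$ be the set of $M$-unsaturated vertices. Since $H$ has no isolated vertices, every $u\in U$ has at least one neighbor, and by maximality of $M$ every neighbor of $u$ is $M$-saturated (otherwise we could extend $M$); moreover $U$ is an independent set. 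For each $u\in U$ pick a neighbor $m(u)$, which is matched by some edge $\{m(u), w\}\in M$. The key point is that the three vertices $u, m(u), w$ together with the edges $\{u,m(u)\}$ and $\{m(u),w\}$ form a path $P_3$; if we could close it into a triangle we would get an odd cycle, but in general we only get paths here, which is not allowed. So this naive approach needs refinement.

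Instead I would argue directly, structurally. The correct classical statement (see e.g.\ the theory of $\{K_2\}\cup\{\text{odd cycles}\}$-factors, essentially the Gallai--Edmonds style analysis) is: a connected graph on at least two vertices has a spanning subgraph consisting of a single edge or a disjoint union of edges and odd cycles covering all vertices. More usefully: \emph{every graph without isolated vertices has a spanning forest of stars}, and then one processes each star. Let me take the star-forest route. Take any spanning forest $F$ of $H$ in which every tree is a star $K_{1,t}$ with $t\ge 1$ (such a star forest exists whenever $H$ has no isolated vertices: greedily, pick any vertex of positive degree as a center, attach all currently-uncovered neighbors, remove, repeat; any leftover covered vertex was already attached). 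Now in each star $K_{1,t}$ with center $c$ and leaves $\ell_1,\dots,\ell_t$: if $t$ is odd, match $c$ with $\ell_1$ and pair up $\ell_2,\dots,\ell_t$ — wait, that leaves $\ell_2,\dots,\ell_t$ which is $t-1$ leaves, an even number, so they pair into edges using\dots but $\{\ell_i,\ell_j\}$ need not be an edge of $H$. This is again the obstruction: leaves of a star are mutually non-adjacent in general.

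So the real content, and what I expect to be \textbf{the main obstacle}, is precisely handling stars of even leaf-count: a star $K_{1,t}$ has $t+1$ vertices, and if $t$ is even this is odd, so no perfect matching and no edge-factor exists on that vertex set alone — we genuinely need to borrow an odd cycle or reach outside the star. The right fix is to not insist the pieces come from a star forest but to use the following fact, which I would prove by induction on $|V(H)|$: \emph{if $H$ has no isolated vertices then $H$ has a spanning subgraph that is a disjoint union of edges and odd cycles.} Induction step: pick an edge $\{u,v\}$. If deleting both $u$ and $v$ creates no isolated vertex, recurse on $H-\{u,v\}$ and add the edge $\{u,v\}$. Otherwise some vertex $w$ becomes isolated, i.e.\ $N_H(w)\subseteq\{u,v\}$; if $N_H(w)=\{u\}$ replace the chosen edge by $\{u,w\}$ (now $w$ is covered) and if this still isolates something push the problem along a path — formally, grow a path $w_1 w_2 \dots$ of degree-$\le 2$-forced vertices; since $H$ is finite this either terminates at a vertex with an escape edge (absorb it and recurse) or closes into a cycle, and if that cycle is even we drop one edge to split it into two paths that attach\dots hmm, even cycles are the nuisance. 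The clean resolution: when the forced path closes into an \emph{even} cycle $C$, note $C$ itself has a perfect matching, so take that matching as part of the factor and recurse on $H-V(C)$ — but $H-V(C)$ could have new isolated vertices. Handling that cascade carefully is the crux; I would organize it as: repeatedly contract/remove such forced structures, maintaining the invariant "no isolated vertices" by always removing an \emph{even}-sized vertex set that carries a perfect matching, or an \emph{odd} cycle, and show the process can always make progress. Finally, for the "$P[R-x]$" claim, I would simply observe that by the definition of $S$ as maximizing $|S|-|N(S)|$ among independent sets of non-isolated vertices, no vertex $x\in R$ can be isolated in $P[R-x]$ either (if it were, $x$ would be isolated in $P[R]$ contrary to hypothesis, since $N_{P[R]}(x)\subseteq R\setminus\{x\}$), so $P[R-x]$ again has no isolated vertices and the same decomposition applies verbatim.
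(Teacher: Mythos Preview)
Your central claim---``if $H$ has no isolated vertices then $H$ has a spanning subgraph that is a disjoint union of edges and odd cycles''---is false. The path $P_3$ on three vertices has no isolated vertex, no triangle, and no perfect matching, so it has no such factor. Your inductive/greedy attempts keep running into obstacles precisely because the statement you are trying to prove does not hold. The correct criterion (cited in the paper from \cite{08-edge-cycle-packing}) is a Hall-type condition: $H$ has such a factor if and only if for every $X\subseteq V(H)$ the number of isolated vertices in $H-X$ is at most $|X|$. In $P_3$, deleting the middle vertex isolates two vertices, so the condition fails.

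What you are missing is that the lemma is not a statement about arbitrary graphs without isolated vertices; it is a statement about $P[R]$ for the \emph{extremal} choice of $S$. The paper's proof verifies the Hall-type condition directly from that extremality: if some $X\subseteq R$ left a set $Y$ of more than $|X|$ isolated vertices in $P[R]-X$, then $Y$ is an independent set of non-isolated vertices of $P$ with $|Y|-|N_{P[R]}(Y)|>0$, so $S\cup Y$ would strictly improve $|S|-|N(S)|$, contradicting maximality. For $P[R-x]$ the same argument gives $|Y|\ge |N_{P[R]}(Y)|$, and now the tie-breaking rule (maximize $|S|$ among optimizers) is what yields the contradiction. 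You never invoke either maximality property, and your treatment of the second bullet is also off: you check whether $x$ is isolated in $P[R-x]$, but $x$ has been deleted; the issue is whether some \emph{other} vertex $y$ with $N_{P[R]}(y)=\{x\}$ becomes isolated, and ``no isolated vertices'' alone cannot rule that out.
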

The proof follows from a structural theorem proved in \cite{08-edge-cycle-packing} together with the choice of the set $S$ in Definition \ref{def:parameter}.
\begin{proof}
    In \cite{08-edge-cycle-packing} the authors show that any graph $G$ has a spanning subgraph that is isomorphic to a disjoint union of edges and odd cycles if for every set $X \subseteq V$, the number of isolated vertices in $G - X$ is at most $|X|$.
    First, we look at the case where we do not delete a vertex $v$ from $T$.
    Assume a set of vertices $X \subset R$ exists such that $P[R] - X$ has more than $|X|$ isolated vertices.
    Let $Y$ be the set of isolated vertices in $P[R] - X$.
    Then we have $N_{P[R]}(Y) \subseteq X$ and $|Y| > |X|$, which implies $|Y| > |N_{P[R]}(Y)|$.
    The existence of $Y$ does however contradict the construction of $T$, because adding $Y$ to $S$ and $N_{P[R]}(Y)$ to $N(S)$ increases $|S| - |N(S)|$.
    Therefore, $X$ cannot exist and the condition for the theorem from \cite{08-edge-cycle-packing} is fulfilled.
    Now, we show that this property holds even after deleting any one vertex $v$ from $T$.
    Proceed with defining $Y$ and $X$ in the same way for $P[R] - v$ instead of $P[R]$.
    In this case, we have $N_{P[R]}(X) \subseteq Y \cup \{v\}$ and $|X| > |Y|$, which implies $|X| \geq |N_{P[R]}(X)|$.
    This contradicts the condition that the choice of $S$ maximizes its size.
    Again, $X$ cannot exist and the condition for the theorem from \cite{08-edge-cycle-packing} is fulfilled.
\end{proof}
%For details we refer the reader to the full version of the paper. %Appendix \ref{app:algorithms}.
We are now equipped with all the tools we need to construct the algorithm for the first family of patterns $P$ that we call \emph{basic patterns}.

\subsection{Basic Patterns}\label{ssec:general-algorithm}

In this section we construct an algorithm that solves the $P$-Dominating Set problem in the (conditionally) optimal time for most of the patterns. 
More precisely, we say a pattern graph~$P$ is \emph{basic} if: 1) it has at least three vertices; 2) it has no isolated vertices; 3) it is neither isomorphic to a $K_3$ nor a $K_4$.
We prove the following theorem for basic patterns.
\begin{theorem}
    For any basic pattern $P$ on at least 16 vertices, there exists an algorithm solving the Dominating $P$-Pattern problem in time $t_P(n,m)^{1+o(1)}$. Moreover, if $\omega=2$, this time complexity can be achieved for all basic patterns.
\end{theorem}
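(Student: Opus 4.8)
The plan is to instantiate Lemma~\ref{lem:guessing-approach} with the pattern-isomorphism predicate $\phi(D) = [G[D]\cong P]$ and with carefully chosen candidate families $\mathcal V_A, \mathcal V_B$. The guiding principle is the partition $V(P) = S \cup N(S) \cup R$ from Observation~\ref{obs:s-t-partition}, together with Observation~\ref{obs:tp-breakdown}: we want to enumerate candidate vertex sets $D = S_D \cup N_D \cup R_D$ where $S_D$ plays the role of $S$, $N_D$ the role of $N(S)$, and $R_D$ the role of $R$. The key counting fact (following Alon~\cite{Alon81}) is that the number of such candidate sets is $\bigO(t_P(n,m))$: a vertex mapped into $S$ can be chosen in $n$ ways, but each vertex mapped into $N(S)$ is forced to be adjacent to some already-chosen vertex of $S_D$, and since each vertex in $N(S)$ is matched (Hall's theorem, as in the commented Lemma~\ref{lem:decomp}) to a distinct vertex of $S$, we can instead charge it: picking an edge incident to the relevant $S$-vertex costs $\deg$, and summing gives an $m/n$-type factor per $N(S)$-vertex on average once we also account for heavy vertices. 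Finally, by Lemma~\ref{lem:decomp-T}, the remainder $P[R]$ has a spanning subgraph that is a disjoint union of edges and odd cycles, so the vertices mapped into $R$ can be enumerated edge-by-edge (each edge of the matching/cycle packing contributing a factor $m$ for two vertices, i.e.\ $\sqrt m$ per vertex), giving the $m^{|R|/2}$ factor.

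Concretely, I would first fix an isomorphism type, i.e.\ a labeling of $V(P)$ by the parts $S, N(S), R$ and a spanning edge/odd-cycle decomposition of $P[R]$ from Lemma~\ref{lem:decomp-T}. Then I would enumerate candidate images as follows: guess the images of $S$ (at most $n^{|S|}$ choices, but I will need the heavy-vertex trick to shave this when $S$ is large — more below); for each vertex of $N(S)$, having already committed its matched $S$-partner, guess an incident edge (so $\sum_v \deg(v)$-type sums, bounded by $m$ per $N(S)$-vertex, i.e.\ $m^{|N(S)|}$ total, which combined with the $n$-per-$S$-vertex bound is exactly $n^{|S|-|N(S)|} \cdot (m/n)^{|N(S)|}\cdot\dots$, matching $t_{P[N[S]]}(n,m)$); for each edge in the $R$-decomposition guess an edge of $G$ ($m$ per edge, $m^{|R|/2}$ total), and for odd cycles use the standard trick of treating a cycle of length $2\ell+1$ as $\ell$ edges plus one extra vertex adjacent to two committed vertices, which costs $\sqrt m$ per vertex (this is where "odd" matters — even cycles would cost too much, hence Lemma~\ref{lem:decomp-T}'s odd-cycle guarantee). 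I would split this enumeration into two halves $\mathcal V_A$ and $\mathcal V_B$, each of size roughly $\sqrt{t_P(n,m)}$, so that Lemma~\ref{lem:guessing-approach} runs in $\MM(\sqrt{t_P}, n, \sqrt{t_P})$; since every candidate vertex count above is $\ge n$ when $S\ne\emptyset$ (so the "middle" dimension $|Y|\le n$ is dominated), this is $t_P(n,m)^{1+o(1)}$ when $\omega=2$, and for $|V(P)|\ge 16$ one has enough slack (the $\sqrt{t_P}$ dimensions are polynomially larger than $n$) to absorb the rectangular-MM overhead even for current $\omega$. The case $S=\emptyset$ is handled by Observation~\ref{obs:tp-breakdown}: then $t_P(n,m) = m^{(k+1)/2}/n$, and one enumerates using one "heavy" vertex of the dominating set (degree $\ge n/k-1$, contributing an $m/n$ factor for the pair of its incident edge endpoints, hence the $-1$ in the exponent of $n$) plus a spanning edge/odd-cycle decomposition of the rest.

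The main obstacle I anticipate is making the enumeration of the candidate sets actually run in time $t_P(n,m)^{1+o(1)}$ rather than just proving the count is $\bigO(t_P(n,m))$ — in particular, ensuring the split into $\mathcal V_A,\mathcal V_B$ respects the connectivity/adjacency constraints so that each half can be listed independently and in balanced size, and handling the heavy-vertex argument so that a large independent set $S$ doesn't blow up the bound (a vertex of $S$ of degree $<$ threshold can be "absorbed" into a cheap enumeration, while heavy vertices of $S$ are few). A second subtlety is that $P$ need not be connected across the parts in a convenient way, and a naive decomposition might force a dimension of the matrix product to exceed $\sqrt{t_P}$; this is exactly why the theorem is stated for basic patterns on at least $16$ vertices (resp.\ all basic patterns if $\omega=2$), and why $K_3, K_4$ are excluded — there the remainder/heavy structure is too small to give the needed polynomial slack, and those cases are deferred to Theorem~\ref{thm:small-patterns}. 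I would also need to verify the correctness direction: every Dominating $P$-Pattern does decompose according to some labeling consistent with Lemma~\ref{lem:decomp-T} applied to the copy of $P$ in $G$, which follows because the decomposition is a statement about the abstract pattern $P$ and hence transfers to any induced copy.
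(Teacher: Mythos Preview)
Your high-level plan matches the paper's: enumerate candidates according to the $S\cup N(S)\cup R$ partition (matched $S$--$N(S)$ edges costing $m$ per pair, isolated $n$-choices for the surplus $|S|-|N(S)|$ vertices, and the edge/odd-cycle decomposition of $R$ from Lemma~\ref{lem:decomp-T}), split into two balanced halves, and apply Lemma~\ref{lem:guessing-approach}. Two substantive technical steps are left vague, though, and these are precisely where the paper does real work.

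First, the balancing of $\mathcal V_A,\mathcal V_B$ is not automatic: the building blocks have different granularities (a $K_2$ contributes $m$, a $P_3$ extracted from an odd cycle contributes $m^{3/2}$, an unmatched $S$-vertex contributes $n$), and the paper argues explicitly that the imbalance in vertex count can be held to at most $2$ (or $3$ when only odd cycles are present), then checks that for $k\ge 16$ this yields $\omega(1,2/|V(B_i)|,1)=2$ with current $\alpha$. Second, in the $S=\emptyset$ case with no spare $K_2$'s ($\beta=0$), there is nothing to redistribute, so the paper performs a further case split on whether some odd-cycle vertex maps to a $G$-vertex of degree $\ge\sqrt m$: if not, the set of vertices not yet dominated by the heavy vertex has size $\le (k-1)\sqrt m$, which shrinks the middle dimension of the matrix product and compensates for the worse imbalance. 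Your assertion that the odd-cycle ``extra vertex'' costs $\sqrt m$ is only true after this degree case split.

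Two smaller points: your displayed count $n^{|S|-|N(S)|}\cdot (m/n)^{|N(S)|}$ is off by $n^{|N(S)|}$ from $t_{P[N[S]]}(n,m)=n^{|S|-|N(S)|}m^{|N(S)|}$ --- the correct accounting guesses the $|N(S)|$ matched edges directly (cost $m$ each) and the $|S|-|N(S)|$ unmatched $S$-vertices separately (cost $n$ each), rather than guessing all of $S$ first. And the heavy-vertex trick is used only when $S=\emptyset$ (to produce the $m/n$ factor), never to ``shave $n^{|S|}$ when $S$ is large''; when $S\ne\emptyset$ no heavy vertex is needed.
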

The high level idea is to first decompose the pattern into (not necessarily induced) odd cycles, edges and isolated vertices, and then enumerate all possible \emph{valid} choices for each of those parts efficiently, and then use Lemma \ref{lem:guessing-approach} to check if the union of guessed parts induces a subgraph isomorphic to $P$, and if it dominates $G$.
We first show that we can enumerate all subgraphs isomorphic to the remainder set $R$ efficiently.
Let $k:=|V(P)|$ and $n:= |V(G)|$. We say a vertex $v\in V(G)$ is \emph{heavy} if $\deg(v)\geq \frac{n}{k}-1$.
We now prove the following lemma.% in the full version of the paper. %Appendix \ref{app:algorithms}.
\begin{restatable}{lemma}{lemmaEnumeratingT}\label{lemma:enumerating-T}
    Let $P$ be a basic pattern. Let $S,N(S),R$ be a partition of $V(P)$ as defined in Definition \ref{def:parameter}. Then the following holds.
    \begin{enumerate}
        \item We can enumerate all subgraphs of $G$ that are isomorphic to $P[R]$ in time $\bigO(m^{|R|/2})$.
        \item We can enumerate all subgraphs of $G$ that are isomorphic to $P[R]$ and contain a heavy vertex in time $\bigO(\frac{m^{(|R|+1)/2}}{n})$.
        %\item For any vertex $v\in T$, the graph $T-v$ can be decomposed into two sets $\mathcal{C}_T$ and $\mathcal E_T$, such that the induced subgraph $P[\mathcal E_T]$ admits a perfect matching and the graph $P[\mathcal{C}_T]$ contains a disjoint union of odd cycles as a spanning subgraph.
    \end{enumerate}
\end{restatable}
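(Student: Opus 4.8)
The plan is to prove both parts of Lemma~\ref{lemma:enumerating-T} by leveraging the structural decomposition of $P[R]$ from Lemma~\ref{lem:decomp-T}, which says $P[R]$ (and $P[R-x]$) has a spanning subgraph that is a disjoint union of edges and odd cycles. The key observation is that enumerating copies of a disjoint union of small graphs reduces to enumerating copies of each component independently and then taking the product (Cartesian join), so it suffices to count copies of a single edge and a single odd cycle, and then verify that the full induced structure matches $P[R]$.

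For part (1): First I would fix the spanning decomposition of $P[R]$ into components $C_1, \dots, C_t$, each being an edge ($2$ vertices) or an odd cycle ($\ell_i$ vertices, $\ell_i$ edges). For each component, I enumerate all copies of it in $G$: a single edge has $\bigO(m)$ copies and can be listed in $\bigO(m)$ time; a cycle $C_\ell$ with $\ell$ odd has $\bigO(m^{\ell/2})$ copies and can be enumerated in $\bigO(m^{\ell/2})$ time (this is a standard sparse-graph fact for cycles — e.g., orient edges and do a meet-in-the-middle on paths of length $\lfloor \ell/2\rfloor$ and $\lceil\ell/2\rceil$; for $\ell$ odd the path counts are $m^{(\ell-1)/2}$ and $m^{(\ell+1)/2}$... actually one argues each $C_\ell$ copy is determined by $\lceil \ell/2 \rceil$ edges, giving the $m^{\ell/2}$ bound after accounting for the shared vertex). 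Since the components are vertex-disjoint in the decomposition and every edge of $P[R]$ is counted once across components, $\sum_i (\text{edges of }C_i) = |E(P[R])|$ and $\sum_i |V(C_i)| = |R|$. Because each edge contributes one vertex "for free" (two vertices, one edge) and each odd cycle contributes zero free vertices ($\ell$ vertices, $\ell$ edges), the number of copies of the \emph{whole} spanning subgraph is $\bigO(\prod_i m^{|V(C_i)|/2}) = \bigO(m^{|R|/2})$; I enumerate this product and, for each candidate tuple of vertices, check in $\bigO(1)$ time whether it induces a graph isomorphic to $P[R]$ (discarding those that do not). This yields part (1).

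For part (2): I want to only enumerate copies containing a heavy vertex. The idea is to branch on which vertex $x$ of the pattern $P[R]$ plays the role of the heavy vertex, and then on its heavy image $v$ in $G$. By the second bullet of Lemma~\ref{lem:decomp-T}, $P[R-x]$ has a spanning decomposition into edges and odd cycles, so $|V(P[R-x])| = |R|-1$ and by part (1)'s argument there are $\bigO(m^{(|R|-1)/2})$ copies of it, enumerable in that time. For each of the $\bigO(n)$ choices of heavy vertex $v$ — actually there are $\bigO(m/n)$ heavy vertices (sum of degrees is $2m$, each heavy vertex has degree $\ge n/k - 1$, so $\#\text{heavy} = \bigO(m/n)$) — and each of the $k-1$ pattern positions $x$, I enumerate the $\bigO(m^{(|R|-1)/2})$ copies of $P[R-x]$ and try to extend each by placing $x \mapsto v$, checking adjacency and the induced-isomorphism condition in $\bigO(1)$. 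Total time: $\bigO\!\left(\frac{m}{n} \cdot m^{(|R|-1)/2}\right) = \bigO\!\left(\frac{m^{(|R|+1)/2}}{n}\right)$, as claimed. One subtlety: a copy containing several heavy vertices could be enumerated multiple times, but that only over-counts by a constant factor (at most $k$), which is harmless for an enumeration bound; if exact enumeration without duplicates is needed one can fix a canonical heavy vertex (say the lexicographically smallest) per copy and discard others in $\bigO(1)$.

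The main obstacle I expect is making the "enumerate a single odd cycle $C_\ell$ in sparse graphs in $\bigO(m^{\ell/2})$ time" step airtight, since the naive meet-in-the-middle must be set up so that the join of the two halves is genuinely bounded by $m^{\ell/2}$ rather than a larger intermediate quantity, and one must be careful that the $\bigO(m^{\ell/2})$ bound on the \emph{number of output copies} of $P[R]$ indeed dominates the work (the intermediate list of candidate tuples that fail the induced-isomorphism test must also be $\bigO(m^{|R|/2})$, which follows because we only generate tuples realizing the spanning subgraph, not arbitrary tuples). A secondary care point is the edge case where $R$ is small (e.g. $|R| \le 1$, in which case $P[R]$ is empty or a single vertex and the bounds hold trivially) or where $P[R]$ itself has isolated vertices — but the lemma hypothesis via Definition~\ref{def:parameter} and Lemma~\ref{lem:decomp-T} rules the latter out, so the decomposition always applies.
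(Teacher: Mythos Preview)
Your overall approach matches the paper's: use the spanning decomposition of $P[R]$ into edges and odd cycles from Lemma~\ref{lem:decomp-T}, enumerate candidates component-wise, then filter by the induced-isomorphism check; for part (2), branch over the pattern position $x$ of the heavy vertex, guess its image among the $\bigO(m/n)$ heavy vertices of $G$, and apply the decomposition to $P[R-x]$.

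The one place your argument is incomplete is precisely the obstacle you flag: enumerating copies of an odd cycle $C_{2r+1}$ in time $\bigO(m^{r+1/2})$. Your sketch (``determined by $\lceil \ell/2\rceil$ edges, accounting for the shared vertex'') does not give this bound as written---choosing $r+1$ edges yields $\bigO(m^{r+1})$ candidates, and a naive meet-in-the-middle on paths can blow up because the number of length-$r$ paths is not bounded by $m^{r/2}$ in general (a star already has $\Theta(m^2)$ paths of length two). The paper fills this gap with a degree-threshold split at $\sqrt{m}$: if some cycle vertex has degree $\ge \sqrt{m}$ there are $\bigO(\sqrt{m})$ choices for it, and the remaining $2r$ vertices admit a perfect matching, enumerable in $\bigO(m^r)$; if every cycle vertex has degree $<\sqrt{m}$, guess $r$ disjoint edges of the cycle in $\bigO(m^r)$ (covering $2r$ vertices), after which the last vertex lies in the $\bigO(\sqrt{m})$-size neighborhood of one of the guessed low-degree vertices. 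Both branches give $\bigO(m^{r+1/2})$. With this in place, the remainder of your argument goes through unchanged.
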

\begin{proof}
    Since $P$ is a basic pattern (i.e. contains no isolated vertices), by Lemma \ref{lem:decomp-T}, there exists a spanning subgraph of $R$ that is isomorphic to a disjoint union of independent edges and cycles. 
    Hence, it is sufficient to argue that we can enumerate any disjoint union of independent edges and cycles that consist of $|R|$ vertices in the claimed time, since any additional time overhead required to check if the enumerated subgraph is isomorphic to $R$ is bounded by some function $f(|R|)$ independent of the host graph $G$.
    It is clear that we can enumerate any set of $r$ independent edges in time $O(m^r)$, so it remains to prove that we can also efficiently enumerate the odd cycles.
    Let $C_{2r+1}$ be an odd cycle of size $2r+1$. We first enumerate only those subgraphs of $G$ that are isomorphic to $C_{2r+1}$ and contain a vertex of degree at least $\sqrt{m}$. 
    To this end, we notice that there are at most $\bigO(\sqrt{m})$ choices for this vertex, and moreover, the graph $C_{2r+1}-v$ has a perfect matching (for any $v$), that we can exhaustively enumerate in time $\bigO(m^{r})$, giving a time of $\bigO(m^{r+\frac{1}{2}})$ for this case.
    On the other hand, we can similarly enumerate all subgraphs of $G$ that are isomorphic to $C_{2r+1}$ and contain no vertices of degree at least $\sqrt{m}$. 
    In particular, we guess $r$ many (independent) edges with both endpoints of degree at most $\sqrt{m}$, and then the remaining missing vertex will be a neighbor of one of the guessed vertices, so we can guess it in $\bigO(\sqrt{m})$ time by enumerating all the neighbors of the guessed vertices. 
    This clearly yields a running time of $\bigO\left(f(r)m^{r+\frac{1}{2}}\right) = \bigO\left(m^{r+\frac{1}{2}}\right)$, concluding the proof of the first statement.

    For the second statement, if we want to enumerate those subgraphs of $G$ that are isomorphic to $P[R]$ \emph{and} contain a heavy vertex, we guess the heavy vertex in time $\bigO(\frac{m}{n})$ and then use that, by Lemma \ref{lem:decomp-T}, for any vertex $v$, the graph $P[R-v]$ has a spanning subgraph isomorphic to a disjoint union of independent edges and odd cycles. Then the same argument as above applies.
    %By Lemma \ref{lem:decomp-T}, for any vertex $v\in T$, the set $T-v$ has an induced 
\end{proof}

We now use the lemma above to handle the case when the decomposition into $S$, $N(S)$ and $R$ as in Definition \ref{def:parameter} yields $S = \emptyset$. %Due to space constraints, we only sketch the proof here. The details can be found in the full version. %Appendix \ref{app:algorithms}.
\begin{restatable}{lemma}{lemmaBasicSEmpty}
    Let $P$ be a basic pattern and $S, N(S), R$ be a partition of $V(P)$ as defined in Definition \ref{def:parameter}, such that $S=\emptyset$.
    If the basic pattern $P$ has at least $16$ vertices,
    then there exists an algorithm solving Dominating $P$-Pattern in time $\left(\frac{m^\frac{k - 1}{2}}{n}\right)^{1 + o(1)} = t_P(n, m)^{1 + o(1)}$.
    If $\omega = 2$, this time complexity can be achieved for all basic patterns $P$.
\end{restatable}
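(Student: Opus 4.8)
The plan is to reduce to the fast-matrix-multiplication dominance check of Lemma~\ref{lem:guessing-approach}, after enumerating two ``halves'' of a candidate dominating $P$-pattern via the edge/odd-cycle decomposition of Lemmas~\ref{lem:decomp-T} and~\ref{lemma:enumerating-T}, exploiting that a dominating set always contains a \emph{heavy} vertex to gain the extra factor $n/\sqrt m$ over the naive bound $m^{k/2}$.

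\textbf{Setup.} First I dispose of degenerate inputs: if $G$ has an isolated vertex $w$ the answer is ``no'' ($w$ cannot be dominated unless $w\in D$, but then $w$ is isolated in $G[D]\cong P$, contradicting that a basic pattern has no isolated vertex), so I may assume $\deg_G(v)\ge 1$ for all $v$ and hence $n\le 2m$. Since $S=\emptyset$ we have $R=V(P)$, $\rho(P)=-1$, and by Observation~\ref{obs:tp-breakdown} the target is $t_P(n,m)^{1+o(1)}$ with $t_P(n,m)=m^{(k+1)/2+o(1)}/n$, $k:=|V(P)|$. Any dominating set $D$ has $|D|=k$ and $\sum_{u\in D}|N[u]|\ge n$, so it contains a heavy vertex (degree $\ge n/k-1$); $G$ has only $\bigO(m/n)$ heavy vertices. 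Also $k\ge 5$, since the only graphs on $3$ or $4$ vertices without isolated vertices and with $S=\emptyset$ are $K_3$ and $K_4$, both excluded.

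\textbf{The algorithm.} I iterate over the $k$ choices of a vertex $v_0\in V(P)$, interpreted as the position of a heavy vertex; it suffices to detect, for each $v_0$, those dominating $P$-patterns whose heavy vertex plays the role of $v_0$. Fix $v_0$, and by Lemma~\ref{lem:decomp-T}(2) fix an edge/odd-cycle spanning subgraph $H_{v_0}$ of $P-v_0$. I split $V(P)\setminus\{v_0\}=U_A\sqcup U_B$ (with $a:=|U_A|$, $b:=|U_B|=k-1-a$) so that $P[U_A]$ and $P[U_B]$ both contain an edge/odd-cycle spanning subgraph — take $U_A$ a union of components of $H_{v_0}$, cutting a large odd-cycle component into sub-paths of controlled lengths if necessary — and with $a,b$ balanced (see below). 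With $X=Y=V(G)$ and $\phi(D)=[\,G[D]\cong P\,]$ (checkable in $\bigO(1)$ time), set
\[
\mathcal V_A=\{\,A\subseteq V(G):G[A]\cong P[U_A]\,\},\qquad
\mathcal V_B=\{\,\{v\}\cup B' : v\text{ heavy},\ G[B']\cong P[U_B]\,\}.
\]
By (the proof of) Lemma~\ref{lemma:enumerating-T} — which only uses that the induced subgraph in question has an edge/odd-cycle spanning subgraph — one lists $\mathcal V_A$ in $\bigO(m^{a/2})$ time and $\mathcal V_B$ in $\bigO((m/n)\,m^{b/2})$ time, and every copy of $P$ in $G$ whose heavy vertex is at $v_0$ equals $A\cup B$ for some $A\in\mathcal V_A$, $B\in\mathcal V_B$. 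Lemma~\ref{lem:guessing-approach} then outputs, in time $\MM(|\mathcal V_A|,n,|\mathcal V_B|)$, all such $A\cup B$ that satisfy $\phi$ and dominate $V(G)$; taking the union over the $\bigO(1)$ values of $v_0$ solves the problem.

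\textbf{Running time.} Regardless of the split, $|\mathcal V_A|\cdot|\mathcal V_B|=\bigO\big(m^{a/2}\cdot (m/n)\,m^{b/2}\big)=\bigO(m^{(k+1)/2}/n)=\bigO(t_P(n,m))$. For $\omega=2$, $\MM(|\mathcal V_A|,n,|\mathcal V_B|)$ is $\tilde\bigO$ of the product of the two largest of its arguments; since $a,b\ge 2$ (possible as $a+b=k-1\ge 4$) and $m\ge n/2$ give $|\mathcal V_A|,|\mathcal V_B|=\Omega(n)$, this product is $\bigO(t_P)$. For general $\omega$, $\MM(n^p,n,n^q)$ has exponent $p+q+o(1)$ whenever $\min(p,q)\ge 1/\alpha$; as $m\ge n/2$ yields $\log_n|\mathcal V_A|\ge (a/2)-o(1)$ and $\log_n|\mathcal V_B|\ge (b/2)-o(1)$, it suffices to take $a,b\ge 7\ge 2/\alpha$, i.e.\ $a+b=k-1\ge 14$; the bound $k\ge 16$ leaves room for parity and divisibility losses in choosing the split. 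Then $\MM(|\mathcal V_A|,n,|\mathcal V_B|)=|\mathcal V_A||\mathcal V_B|\cdot n^{o(1)}=t_P(n,m)^{1+o(1)}$, since $t_P(n,m)\ge n^{\Omega(1)}$.

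\textbf{Main obstacle.} The crux is guaranteeing that \emph{some} $v_0$ admits a split with $\min(a,b)$ in the required range. This is immediate when $H_{v_0}$ has many small components, and it can only fail when every edge/odd-cycle spanning subgraph of $P-v_0$ is very lopsided — e.g.\ a single long odd cycle, or a few disjoint odd cycles plus a little. The key point is that such rigidity forces $\delta(P)\ge 0$, hence $S\ne\emptyset$, contradicting the hypothesis — \emph{unless} $v_0$ has large degree into the cyclic part (as for an odd wheel), in which case deleting a cyclic vertex instead produces a triangle-plus-matching decomposition that is freely splittable; a more involved variant of this dichotomy handles patterns assembled from several disjoint odd cycles plus a pendant-type vertex, where one argues that no single deleted vertex can make all residual decompositions rigid. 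Turning this into a clean statement — ``for every basic $P$ with $S=\emptyset$, some $v_0$ yields a split with $\min(a,b)\ge 7$ (only $\ge 2$ if $\omega=2$)'' — is the delicate part of the proof; everything else is bookkeeping around Lemmas~\ref{lem:decomp-T}, \ref{lemma:enumerating-T}, and~\ref{lem:guessing-approach}.
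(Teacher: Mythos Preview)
Your high-level architecture matches the paper's: iterate over the position $v_0\in V(P)$ of a heavy vertex, use the second bullet of Lemma~\ref{lem:decomp-T} to decompose $P-v_0$ into edges and odd cycles, split the $k-1$ remaining pattern vertices into two halves, and feed the two candidate families (one carrying the $\bigO(m/n)$ heavy-vertex factor) into Lemma~\ref{lem:guessing-approach}. The setup and the running-time accounting are fine.

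The genuine gap is precisely where you flag it. Your ``Main obstacle'' paragraph is not a proof: the sentence ``such rigidity forces $\delta(P)\ge 0$, hence $S\ne\emptyset$'' is vacuous ($\delta(P)\ge 0$ holds for every graph, and it does not imply $S\ne\emptyset$), and the odd-wheel discussion is one example, not an argument covering all basic patterns. There is also a concrete local error: when you ``cut a large odd-cycle component into sub-paths,'' an odd-length sub-path has no edge/odd-cycle spanning subgraph, so your appeal to Lemma~\ref{lemma:enumerating-T} (``which only uses that the induced subgraph in question has an edge/odd-cycle spanning subgraph'') fails for that piece. Whether your component-splitting approach can be rescued is unclear, and you yourself concede the statement is ``the delicate part of the proof.''

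The paper does \emph{not} try to prove that a balanced component split always exists. Instead it further decomposes each odd cycle $C_{2r+1}$ into $r-1$ independent $K_2$'s plus one $P_3$, then balances by distributing the $P_3$'s and $K_2$'s between $B_1$ and $B_2$; when the original decomposition contains at least one matching edge ($\beta>0$) this already gives $\bigl||V(B_1)|-|V(B_2)|\bigr|\le 2$. The idea you are missing handles $\beta=0$: there the imbalance may be $3$, but the paper performs a further case split on whether the odd-cycle vertices map to low-degree ($\le\sqrt m$) vertices of $G$. In the all-low-degree subcase, after fixing the heavy vertex $v_h$ the set $Y=V(G)\setminus N[v_h]$ that remains to be dominated has size at most $(k-1)\sqrt m$ (otherwise there is no solution of this type), so the \emph{middle} dimension in $\MM(\cdot,|Y|,\cdot)$ is $O(\sqrt m)$ rather than $n$; this is exactly what compensates for the looser balance. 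If instead some odd-cycle vertex is high-degree, that cycle is re-decomposed as one high-degree vertex plus a perfect matching, recovering enough $K_2$ pieces to rebalance as in the $\beta>0$ case. This degree-based case analysis, not a structural splitting lemma, is what closes the argument.
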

\begin{proof}
    If $S = \emptyset$, then $V(P) = R$ and Lemma \ref{lemma:enumerating-T} applies to all of $P$.
    We show that we can construct two sets $\mathcal V_1, \mathcal V_2$ consisting of subsets of $V(G)$, such that the size of $\mathcal V_1$ and $\mathcal V_2$ is similar, and for any subgraph $H$ of $G$ that is isomorphic to $P$ there are sets $X_1\in \mathcal V_1$ and $X_2\in \mathcal V_2$ such that $V(H) = X_1\cup X_2$.
    As observed in~\cite{FischerKR24}, any dominating set of size $k$ contains a heavy vertex $v_h$ (a vertex of degree at least $n/k-1$), hence we can first guess this vertex in time $\bigO(\frac{m}{n})$.
    By Lemma \ref{lem:decomp-T}, there is a spanning subgraph of $P$ that is isomorphic to a disjoint union of edges and odd cycles and we can find such a spanning subgraph in $f(k) = \bigO(1)$ time.
    Assume that this decomposition yields $\alpha$ odd cycles and $\beta$ disjoint edges.
    
    We first show that we can find any solution that maps all vertices that are contained in an odd cycle to a vertex in $G$ of degree at most $\sqrt{m}$. 
    We can decompose each cycle of length $2r+1$ further into $r-1$ independent edges and one subgraph isomorphic to a $P_3$ (similarly as in Lemma \ref{lemma:enumerating-T}), such that there are $\bigO(m)$ valid choices for each edge and $\bigO(m\sqrt{m})$ choices for the $P_3$.
    We can now decompose those subgraphs into two graphs $B_1$ and $B_2$ such that $B_1$ corresponds to $\lfloor\frac{\alpha}{2}\rfloor$ copies of $P_3$ and $\lceil\frac{k-1-3\alpha-2\beta}{4}\rceil$ copies of $K_2$, and
    $B_2$ corresponds to $\lceil\frac{\alpha}{2}\rceil$ copies of $P_3$ and $\lfloor\frac{k-1-3\alpha-2\beta}{4}\rfloor$ copies of $K_2$.\footnote{Note that by our construction the number $k-1-3\alpha-2\beta$ is even.}
    It is easy to see that $\left||V(B_1)|- |V(B_2)|\right|\leq 3$. 
    
    Now, if $\beta>0$, we can greedily distribute the remaining $\beta$ independent edges between $B_1$ and $B_2$ to make sure that in the end we obtain new graphs $B'_1$ and $B'_2$ that satisfy the following conditions: 
    1) without loss of generality $|V(B'_1)| \geq |V(B'_2)|$, 2) $|V(B'_1)| - |V(B'_2)|\leq 2$, 3) $B'_1\cup B'_2$ is a spanning subgraph of $P$ and 4) we can enumerate all subgraphs of $G$ that contain $B'_i$ as a spanning subgraph in time $\bigO\left(m^{\frac{|V(B'_i)|}{2}}\right)$.
    Hence, by Lemma \ref{lem:guessing-approach}, we can solve the problem in this case in time $\bigO\left(\frac{m}{n}\MM\left(m^{\frac{|V(B'_1)|}{2}}, n, m^{\frac{|V(B'_2)|}{2}} \right)\right)$.
    Now we use that $|V(B'_1)| + |V(B_2')| = k-1$ and if $\omega=2$, this clearly yields the time $\left(\frac{m^{(k-1)/2}}{n}\right)^{1+o(1)}$ for all basic patterns.
    We now show that this running time still evaluates to the one above if $k\geq 16$. 
    Note that if $k\geq 16$, we have $|V(B'_1)| + |V(B_2')| \geq 15$ and moreover $|V(B_1')| \geq |V(B_2')| \geq |V(B_1')|-2$ and from this we can conclude that $|V(B_1')|\geq |V(B_2')|\geq 7$. 
    Using that the rectangular matrix multiplication constant\footnote{For more details see Preliminaries.} $\alpha\geq 0.321$, and that $m\geq n$, we get that the exponent 
    \begin{equation}\label{eq:RMM-primes}
        \omega(1, \frac{2}{|V(B'_1)|}, 1) = \omega(1, \frac{2}{|V(B'_2)|}, 1) = 2
    \end{equation} and hence we obtain for all $k\geq 16$:
    \begin{align*}
        \bigO\left(\frac{m}{n}\MM\left(m^{\frac{|V(B'_1)|}{2}}, n, m^{\frac{|V(B'_2)|}{2}} \right)\right) & \leq  \bigO\left(\frac{m}{n}\MM\left(m^{\frac{|V(B'_1)|}{2}}, m, m^{\frac{|V(B'_2)|}{2}} \right)\right) & (m\geq n)\\
        &\leq \bigO\left(\frac{m}{n}\cdot m^{\frac{|V(B'_1)| + |V(B'_2)|}{2} + o(1)} \right) 
        & \text{(Eq. \ref{eq:RMM-primes})}\\
        & = \bigO\left(\frac{m}{n}\cdot m^{\frac{k-1}{2} + o(1)} \right) = \left(\frac{m^{\frac{k+1}{2}}}{n} \right)^{1+o(1)}.
    \end{align*}
    We now return to the case if $\beta = 0$. 
    Note that in this case the graphs $B_1$ and $B_2$ satisfy the following conditions: 
    1) (without loss of generality) $|V(B_1)|- |V(B_2)|\leq 3$, 2) $B_1\cup B_2$ is a spanning subgraph of $P$ and 3) we can enumerate all subgraphs of $G$ that contain $B_i$ as a spanning subgraph in time $\bigO\left(m^{\frac{|V(B_i)|}{2}}\right)$.
    We now construct the set $Y$ consisting of all the vertices in $G$ that are non-adjacent to $v_h$.
    Recall that for now we are only focused on finding solutions that map all vertices that are contained in an odd cycle to a vertex in $G$ of degree at most $\sqrt{m}$. 
    Hence, if $|Y|\geq (k-1)\sqrt{m} + 1$, we can conclude that no such solution exists and proceed with the next choice of heavy vertex $v_h$. 
    If on the other hand $|Y|\leq (k-1)\sqrt{m}$, by Lemma \ref{lem:guessing-approach}, we can find any such solution in time bounded by 
    \[
    \bigO\left(\frac{m}{n}\MM\left(m^{\frac{|V(B_1)|}{2}}, |Y|, m^{\frac{|V(B_2)|}{2}}\right)\right) \leq \bigO\left(\frac{m}{n}\MM\left(m^{\frac{|V(B_1)|}{2}}, \sqrt{m}, m^{\frac{|V(B_2)|}{2}}\right)\right).
    \]
    Now if $\omega=2$, by using $|V(B_1)|+|V(B_2)| = k-1$, we directly get that this time complexity is bounded by $\left(\frac{m^{(k+1)/2}}{n}\right)^{1+o(1)}$.
    On the other hand, with the current value of $\omega$, we get similarly as above for $k\geq 16$ that $|V(B_1)|\geq |V(B_1)|\geq 6$ and again since $\alpha\geq 0.321$, we obtain 
    \begin{equation}\label{eq:RMM-b1-b2}
    \omega(1,\frac{1}{|V(B_1)|},1) = \omega(1,\frac{1}{|V(B_2)|},1) = 2.
    \end{equation}
    Hence, we can bound the time complexity as
    \begin{align*}
        \bigO\left(\frac{m}{n}\MM\left(m^{\frac{|V(B_1)|}{2}}, \sqrt{m}, m^{\frac{|V(B_2)|}{2}}\right)\right) &\leq \bigO\left(\frac{m}{n}\MM\left(\left(\sqrt{m}\right)^{|V(B_1)|}, \sqrt{m}, \left(\sqrt m\right)^{{|V(B_2)|}}\right)\right) 
        \\&\leq \bigO\left(\frac{m}{n} \cdot \left(\sqrt{m}\right)^{|V(B_1)|+|V(B_2)|+o(1)} \right) & \text{(Eq \ref{eq:RMM-b1-b2})}\\
        & = \bigO\left(\frac{m}{n} \cdot \left(\sqrt{m}\right)^{k-1+o(1)} \right)  = \left(\frac{m^{\frac{k+1}{2}}}{n}\right)^{1+o(1)}        .
    \end{align*}
    It remains to show that we can also detect solutions that map a vertex contained in some odd cycle to a vertex in $G$ of degree at least $\sqrt{m}$.
    We remark that the case when $\beta>0$ can be handled very similarly as above and we omit this case and only focus on the case when $\beta=0$. 
    Here we let $0\leq \alpha'\leq \alpha-1$ be the number of odd cycles whose each vertex is mapped to a vertex in $G$ of degree at most $\sqrt{m}$. 
    We then construct the sets $B_1$ and $B_2$ similarly as before, but only on the $\alpha'$ "low-degree" odd cycles. 
    We then decompose the remaining $\alpha-\alpha'$ odd cycles into a single vertex (corresponding to the vertex that will be mapped to a "high-degree" vertex in $G$), and independent edges naturally.
    It is now easy to see that we are again in the same case as $\beta>0$, since we can distribute the remaining edges and vertices to balance the sets $V(B_1)$ and $V(B_2)$ similarly as above.
\end{proof}
The second case for basic patterns is that the decomposition into $S$, $N(S)$ and $R$ as in Definition \ref{def:parameter} yields $S \neq \emptyset$. 
Intuitively this case is simpler than the one above, since we can use the sets $S,N(S)$ to simulate $\beta>0$, even if $\beta=0$ and "balance" the sets $V(B_1)$ and $V(B_2)$. 
We only sketch the proof, as the details are to a large extent similar to the ones above.
\begin{lemma}
    Let $P$ be a basic pattern and $S, N(S), R$ be a partition of $V(P)$ as defined in Definition \ref{def:parameter}, such that $S \neq \emptyset$.
    If the basic pattern $P$ has at least $16$ vertices,
    then there exists an algorithm solving Dominating $P$-Pattern in time $\left({n^{|S|-|N(S)|}}{m^{|N(S)|+\frac{|R|}{2}}}\right)^{1 + o(1)} = t_P(n, m)^{1 + o(1)}$.
    If $\omega = 2$, this time complexity can be achieved for all basic patterns $P$.
\end{lemma}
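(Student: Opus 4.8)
The plan is to follow the template of the case $S=\emptyset$ treated above; the difference is that here the parts $S$ and $N(S)$ already supply the ``cheap'' pieces ($K_1$'s and $K_2$'s) used to balance the two sides of the matrix product, so no heavy vertex has to be guessed and no degree case distinction on cycle vertices is forced.

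First I fix a decomposition of $P$. By the maximality of $|S|-|N(S)|$, every $Y\subseteq N(S)$ has at least $|Y|$ neighbours in $S$, so by Hall's theorem there is a matching pairing each vertex of $N(S)$ with a distinct vertex of $S$; together with Lemma~\ref{lem:decomp-T} applied to $P[R]$ (which has no isolated vertices since $P$ is basic), this partitions $V(P)$ into the $|N(S)|$ matching edges, the $\rho=|S|-|N(S)|\ge 0$ remaining vertices of $S$, and a spanning subgraph of $P[R]$ that is a disjoint union of odd cycles and independent edges. Splitting every odd cycle $C_{2r+1}$ into one $P_3$ and $r-1$ independent edges (as inside the proof of Lemma~\ref{lemma:enumerating-T}), I obtain a list of \emph{atoms} partitioning $V(P)$, each atom a $K_1$ (a free vertex of $S$), a $K_2$ (a matching edge, an edge of $R$, or a matching edge of a cycle), or a $P_3$ (one per odd cycle). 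Since $P$ is basic and $S\ne\emptyset$ --- hence $N(S)\ne\emptyset$, because $S$ contains no isolated vertices --- there are at least two atoms.

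Next I reduce to matrix multiplication. As in Lemma~\ref{lemma:enumerating-T}, for any sublist $\mathcal A$ of the atoms spanning $s$ vertices and containing $a$ many $K_1$'s, all families of subgraphs of $G$ realizing the atoms of $\mathcal A$ can be enumerated in time $\bigO(n^{a}\,m^{(s-a)/2})$ (a $K_1$ has $n$ realizations, a $K_2$ has $\bigO(m)$, a $P_3$ has $\bigO(m^{3/2})$ after the split according to whether it contains a vertex of degree $\ge\sqrt m$); applied to the full list this gives $n^{\rho}m^{(k-\rho)/2}=t_P(n,m)$ candidates, using $\rho+2|N(S)|+|R|=k$ and Observation~\ref{obs:tp-breakdown}. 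I split the atom list into two non-empty sublists $\mathcal A_1,\mathcal A_2$, form the corresponding families $\mathcal V_{B_1},\mathcal V_{B_2}$ of vertex subsets of $G$, note that any $D$ with $G[D]\cong P$ is a union $A\cup B$ with $A\in\mathcal V_{B_1}$, $B\in\mathcal V_{B_2}$ (since $P$ itself has the prescribed spanning subgraph), and apply Lemma~\ref{lem:guessing-approach} with $X=Y=V(G)$ and $\phi(D)=[\,G[D]\cong P\,]$ to solve Dominating $P$-Pattern in time $\MM(|\mathcal V_{B_1}|,n,|\mathcal V_{B_2}|)$; correctness is immediate from that lemma.

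It remains to bound the cost. If $\omega=2$, any split into two non-empty sublists works: with the enumeration bound $|\mathcal V_{B_i}|=\bigO(n^{a_i}m^{b_i})$ and $m\ge n$ one checks $|\mathcal V_{B_1}|\cdot|\mathcal V_{B_2}|=\bigO(t_P(n,m))$ and $|\mathcal V_{B_i}|\cdot n \le t_P(n,m)^{1+o(1)}$, so all three pairwise products defining $\MM(|\mathcal V_{B_1}|,n,|\mathcal V_{B_2}|)$ are $\bigO(t_P(n,m))$ and the running time is $t_P(n,m)^{1+o(1)}$; this settles all basic patterns with $S\ne\emptyset$. For general $\omega$ and $k\ge 16$ I instead pick a balanced split: distribute the $K_1$'s evenly, then greedily distribute the $P_3$'s and $K_2$'s, the $K_2$-atoms (present since $N(S)\ne\emptyset$) absorbing the half-integer discrepancy coming from the $P_3$'s --- exactly the role $\beta>0$ played in the case $S=\emptyset$. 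Then $|\mathcal V_{B_1}|$ and $|\mathcal V_{B_2}|$ agree up to an $m^{\bigO(1)}$ factor, and since $\rho+2(|N(S)|+|R|/2)=k\ge 16$ and $\alpha\ge 0.321$, the relevant rectangular matrix-multiplication exponent equals $2$, giving $t_P(n,m)^{1+o(1)}$ once more. The main hurdle is this final bookkeeping --- checking that for $k\ge 16$ the balanced split makes the rectangular product collapse --- but it is the same computation already carried out for the case $S=\emptyset$.
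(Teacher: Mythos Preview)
Your sketch follows the same approach as the paper's: decompose $P$ via the Hall matching between $S$ and $N(S)$, the $\rho=|S|-|N(S)|$ free vertices of $S$, and the edge/odd-cycle cover of $P[R]$ from Lemma~\ref{lem:decomp-T}; split these pieces into two balanced halves; invoke Lemma~\ref{lem:guessing-approach}; and use that $N(S)\ne\emptyset$ supplies at least one $K_2$-atom to absorb the parity mismatch coming from the $P_3$'s (playing the role of ``$\beta>0$'' from the $S=\emptyset$ case). The paper builds up the two sides in stages ($B_i\to B_i'\to B_i'\cup Q_i$) whereas you flatten everything into a single list of atoms, but this is only presentational.

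One point where your wording is slightly misleading: the statement ``a $P_3$ has $\bigO(m^{3/2})$ realizations after the split according to whether it contains a vertex of degree $\ge\sqrt m$'' is not literally true for arbitrary $P_3$'s in $G$ (their number is $\sum_v\binom{\deg v}{2}$, which can be $\Theta(n^2)$ while $m=\Theta(n)$). What is true---and what both the paper's $S=\emptyset$ proof and Lemma~\ref{lemma:enumerating-T} actually use---is that the atom decomposition of each odd cycle \emph{changes with the degree profile}: in the all-low-degree case the cycle is one low-degree $P_3$ plus $K_2$'s (and those $P_3$'s are indeed $\bigO(m^{3/2})$ many), whereas if some cycle vertex has degree $\ge\sqrt m$ the cycle becomes one high-degree $K_1$ ($\bigO(\sqrt m)$ choices) plus $r$ $K_2$'s. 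So one runs the algorithm once per degree profile of the cycles, with the appropriate atom list each time; the balancing argument goes through verbatim in every profile because the $N(S)$-matching $K_2$'s are always present. Your sketch seems aware of this but does not quite say it---worth making explicit, since otherwise the claimed $\bigO(n^{a}m^{(s-a)/2})$ enumeration bound fails.
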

\begin{proof}[Proof (sketch)]
    Similarly as above, construct the sets $B_1$ and $B_2$ by finding a spanning subgraph of $P[R]$ that consists of union of independent $P_3$'s and $K_2$'s and distributing the connected components as equally as possible. 
    Clearly, we have that $|V(B_1)| + |V(B_2)| = |R|$, and (without loss of generality) $0\leq |V(B_1)| - |V(B_2)| \leq 3$.
    Since $S\neq \emptyset$, also $N(S)\neq \emptyset$ (by definition $S$ contains no isolated vertices), hence we can distribute the edges from a perfect matching between the set $N(S)$ and some subset of $S$ of size $|N(S)|$ (such perfect matching always exists by Hall's marriage theorem) between the sets $B_1$ and $B_2$ similarly as in previous lemma to obtain graphs $B_1', B_2'$.
    Clearly, it holds that $|V(B'_1)| + |V(B'_2)| = 2|N(S)| + |R|$ and (without loss of generality) $0\leq |V(B'_1)| - |V(B'_2)| \leq 2$.
    We then define the graphs $Q_1, Q_2$ that consist of $\lfloor\frac{|S|-|N(S)|}{2}\rfloor$ and $\lceil\frac{|S|-|N(S)|}{2}\rceil$ isolated vertices respectively.
    Similarly as above, we can enumerate all subgraphs of $G$ isomorphic to $B'_i\cup Q_i$ in time $\bigO\left(n^{|Q_i|}m^{\frac{|B'_i|}{2}}\right)$.
    Also, by similar arguments as above, we have that for $k\geq 16$, the matrices are "thin enough" that rectangular matrix multiplication techniques can be used to obtain the tight running time of 
    \begin{align*}
        \MM\left(n^{|Q_1|}m^{\frac{|B'_1|}{2}}, n, n^{|Q_2|}m^{\frac{|B'_2|}{2}} \right) &\leq \left(n^{|Q_1|+|Q_2|}m^{(|B'_1|+|B'_2|)/2}\right)^{1+o(1)}&(k\geq 16)\\
        &\leq \left(n^{|S|-|N(S)|}m^{\frac{2|N(S)|+|R|}{2}}\right)^{1+o(1)} \qedhere
    \end{align*}
\end{proof}

%Theorem \ref{thm:upper-bound} also states that using current matrix multiplication techniques, all patterns on at least $16$ vertices without isolated vertices can already be solved in optimal time.
%Using a more even division of the set of all possible choices for the Dominating Set, one can achieve more rectangular matrices where the inner dimension is small enough for the rectangular matrix multiplication of \cite{14-fast-matrix-mult} to solve this problem in time $t_P(n, m)^{1 + o(1)}$.

\subsection{Small Cliques}\label{ssec:small-graphs-algorithms}
In this section, we deal with the remaining small patterns, i.e., the cliques of size at most 4. We note that in \cite{KuennemannR24} the $K_2$ was already settled, so it remains to settle the cliques of size $3$ and $4$.
In particular, this section is dedicated to proving the following two main theorems. 
\begin{theorem}\label{thm:triangle-algo}
    For a Dominating Triangle, there exists a deterministic $\left(\displaystyle\frac{m^{1+\frac{2\omega}{\omega+1}}}{n}\right)^{1+o(1)}$-time algorithm and a randomized $m^{\frac{\omega+1}{2}+o(1)}$-time algorithm.
\end{theorem}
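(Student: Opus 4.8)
The plan is to reduce Dominating Triangle to a family of \emph{restricted dominating‑edge} problems and then attack each of them by marrying the sparsity‑aware triangle detection/listing technique of~\cite{18-triangle-counting} with the fast‑matrix‑multiplication dominance check that underlies the sparse $2$-Dominating Set algorithm of~\cite{FischerKR24} (and, via Lemma~\ref{lem:guessing-approach}, the $k$-Dominating Set algorithm of~\cite{EisenbrandG04}). The first step is common to both algorithms: since $|N[a]|+|N[b]|+|N[c]|\ge n$ for any dominating triangle $\{a,b,c\}$, some vertex of it is \emph{heavy} (degree $\ge n/3-1$), there are only $\bigO(m/n)$ heavy vertices, and $n/3-1=\Omega(\sqrt m)$ because $m\le n^2$. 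We enumerate the heavy vertices $v_h$; for each one it remains to decide whether there is an edge $\{b,c\}$ with $b,c\in N(v_h)$ and $N[b]\cup N[c]\supseteq Y_{v_h}$, where $Y_{v_h}:=V(G)\setminus N[v_h]$.

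For the deterministic bound I would solve this restricted dominating‑edge problem in time $m^{\frac{2\omega}{\omega+1}+o(1)}$ per $v_h$ (the running time of plain sparse triangle detection), which multiplied by $\bigO(m/n)$ gives the claimed bound. Fix a degree threshold $\Delta:=m^{\frac{\omega-1}{\omega+1}}$ and call a vertex \emph{dense} if its degree exceeds $\Delta$ (there are $\bigO(m/\Delta)$ of them). If both $b,c$ are non‑dense, then $|N[b]\cup N[c]|\le 2\Delta+2$, so we may abort unless $|Y_{v_h}|\le 2\Delta+2$; otherwise we list all edges with both endpoints non‑dense in $\bigO(m)$ time and check domination of $Y_{v_h}$ in $\bigO(\Delta)$ time each, for a total of $\bigO(m\Delta)=m^{\frac{2\omega}{\omega+1}+o(1)}$. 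If at least one of $b,c$ is dense, we invoke Lemma~\ref{lem:guessing-approach} with the predicate ``induces a triangle containing $v_h$'': the dense endpoint is chosen from the $\bigO(m/\Delta)$ dense vertices, and a rectangular Boolean matrix product — rows indexing these candidates, columns indexing $Y_{v_h}$ — simultaneously runs the dominance test for all candidate edges; a further split on the degree of the second endpoint (aborting when the residual set $Y_{v_h}\setminus N[b]$ it must dominate is larger than $\Delta$) keeps the effective inner dimension at $\bigO(\Delta)$, so that the product costs $\MM(m/\Delta,\Delta,m/\Delta)\le m^{\frac{2\omega}{\omega+1}+o(1)}$ as well.

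For the randomized bound, when the residual set is large we can afford to branch on \emph{two} triangle vertices both taken from the $\bigO(m/n)\le\bigO(\sqrt m)$ heavy vertices: after fixing a heavy $v_h$, if $|Y_{v_h}|\ge\sqrt m$ then the second vertex $b$ has degree $\ge|Y_{v_h}|/2=\Omega(\sqrt m)$, hence is also heavy, and the last vertex $c$ need only dominate $Y_{v_h}\setminus N[b]$; a single Boolean matrix product of shape $\MM(\sqrt m,n,\sqrt m)$ (rows/columns indexing the heavy candidates for $b$ and $c$, inner dimension $|Y_{v_h}|\le n$) decides this, and since $\MM(\sqrt m,n,\sqrt m)\le n\cdot m^{\frac{\omega-1}{2}+o(1)}$ for $n\ge\sqrt m$, summing over the $\bigO(m/n)$ choices of $v_h$ gives $m^{\frac{\omega+1}{2}+o(1)}$. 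When $|Y_{v_h}|<\sqrt m$ the two remaining vertices need not be heavy, and here I would use a randomized sampling/colour‑coding step on $Y_{v_h}$ (followed by a deterministic re‑verification that discards the false positives this introduces) to again reduce the task to a matrix product whose dimensions are governed by $\sqrt m$ rather than by $n$.

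The main obstacle will be the ``intermediate'' regimes, where no degree argument forces a triangle vertex to be heavy and yet the residual set $Y_{v_h}$ (or $Y_{v_h}\setminus N[b]$) is too large to be scanned per candidate edge. In the deterministic algorithm the delicate point is to choose $\Delta$ and the sub‑case split so that in every branch \emph{either} the number of candidate edges is at most $\bigO(m/\Delta)$ \emph{or} the effective inner dimension of the dominance product is at most $\bigO(\Delta)$ (or $\bigO(\sqrt m)$), keeping both $m\Delta$ and $\MM(m/\Delta,\cdot,m/\Delta)$ at $m^{\frac{2\omega}{\omega+1}}$; checking that the rectangular‑matrix exponents indeed collapse to this value for all $2\le\omega<3$ is a routine but necessary computation. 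In the randomized algorithm the analogous difficulty is bounding the cost of the per‑pivot matrix products when $|Y_{v_h}|$ greatly exceeds the number of relevant candidate endpoints — precisely what the sampling of $Y_{v_h}$, with subsequent verification, is designed to handle.
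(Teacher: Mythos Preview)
Your deterministic argument has a real gap in the ``at least one dense'' case. The claim that a further split ``keeps the effective inner dimension at $\bigO(\Delta)$'' does not hold: the residual set $Y_{v_h}\setminus N[b]$ varies with $b$, so you cannot package the dominance checks for all dense $b$ into a single product with inner dimension $\Delta$. In particular, when \emph{both} $b$ and $c$ are dense, the product you actually need is $\MM(m/\Delta,\,|Y_{v_h}|,\,m/\Delta)$ with $|Y_{v_h}|$ as large as $\Theta(n)$, which for sparse graphs ($m=\Theta(n)$, $\omega=2$) costs $m^{5/3}$ rather than $m^{4/3}$. The paper sidesteps this entirely: instead of thresholding, it observes that for an edge $uv\subseteq N(v_h)$ the test $X_2\subseteq N(u)\cup N(v)$ is, by inclusion--exclusion, equivalent to $|N_2(u)|+|N_2(v)|-|N_2(u)\cap N_2(v)|=|X_2|$ (where $N_2(\cdot):=N(\cdot)\cap X_2$), and the only nontrivial term $|N_2(u)\cap N_2(v)|$ is exactly the number of triangles on $uv$ with third vertex in $X_2$. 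One call to All-Edges-Triangle-Counting (Lemma~\ref{lem:triangle-counting-time}) therefore supplies this quantity for \emph{every} edge in $m^{2\omega/(\omega+1)+o(1)}$ time, after which each edge is checked in $O(1)$.

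Your randomized plan also has a gap. You conflate ``heavy'' (degree $\ge n/3-1$; at most $O(m/n)$ such vertices) with ``high-degree'' (degree $\ge\sqrt m$; at most $O(\sqrt m)$ such vertices): from $\deg(b)\ge|Y_{v_h}|/2\ge\sqrt m/2$ you only get that $b$ is high-degree, not heavy. More importantly, even after fixing a high-degree $b$, the third vertex $c$ need not be high-degree at all, so the product $\MM(\sqrt m,n,\sqrt m)$ with both outer dimensions $\sqrt m$ misses the sub-case ``$b$ high-degree, $c$ low-degree'' entirely. The paper obtains the $m^{(\omega+1)/2+o(1)}$ bound differently: it does \emph{not} iterate over heavy $v_h$ but runs the Case-2 machinery of the $K_4$ proof directly with $X_1=X_2=V$. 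That machinery splits by how many of the three triangle vertices lie in $V_\ell=\{v:\deg(v)\le\sqrt m\}$ and, for the nontrivial sub-cases, reduces to the Max-Entry Matrix Product problem solved by the Bloom-filter-based randomized routine of~\cite{FischerKR24} (Lemmas~\ref{lemma:max-entry-mm} and~\ref{lemma:gen-max-entry-mm}); each of the four sub-cases then lands at $m^{(\omega+1)/2+o(1)}$.
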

\begin{restatable}{theorem}{thmCliqueDSAlgo}\label{thm:K4-DS}
    There is a randomized algorithm solving $K_4$-Dominating Set in time $\left(\frac{m^{(\omega+3)/2}}{n}\right)^{1+o(1)}$. If $\omega=2$, this is $\left(\frac{m^{5/2}}{n}\right)^{1+o(1)} = t_{K_4}(n,m)^{1+o(1)}$.
\end{restatable}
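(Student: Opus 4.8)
The plan is to make one ``heavy vertex'' guess and thereby reduce Dominating $K_4$ to the sparse, randomized Dominating Edge problem, whose complexity $m^{\omega/2+o(1)}$ is the source of the $\omega$ in the claimed bound. First a triviality: since every vertex of a $K_4$ has degree $\ge 3$, a host graph with an isolated vertex is a no-instance, so I may assume $m\ge n/2$. Now in any dominating $K_4$ the four closed neighbourhoods cover $V(G)$, so, after relabelling, one vertex $a$ has $\deg(a)\ge n/4-1$; there are only $O(m/n)$ such \emph{heavy} vertices and I would enumerate them. For a fixed heavy $a$ put $Y:=V(G)\setminus N[a]$ and $H_a:=G[N(a)]$, with $m_a:=|E(H_a)|\le m$; building all the $H_a$ costs $\sum_{a}\sum_{v\in N(a)}\deg(v)=O(m^2/n)$, which is within budget. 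The task becomes: detect a triangle $\{b,c,d\}$ in $H_a$ with $Y\subseteq N(b)\cup N(c)\cup N(d)$, equivalently with $\{a,b,c,d\}$ dominating. If this ``triangle-in-$H_a$-dominating-$Y$'' problem can be solved in $m^{(\omega+1)/2+o(1)}$ randomized time, the total is $O(m/n)\cdot m^{(\omega+1)/2+o(1)}=\left(m^{(\omega+3)/2}/n\right)^{1+o(1)}$, as required.

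For the inner problem I would guess, in turn, one vertex $b\in N(a)$ of the sought triangle. Given $b$, set $G_b:=G[N(a)\cap N(b)]$, $m_b:=|E(G_b)|$ and $Y_b:=Y\setminus N(b)$, and run the sparse randomized Dominating Edge routine of \cite{FischerKR24} (with the routine generalization to a target set and a restricted candidate set, as also needed inside the Dominating-Triangle algorithm of Theorem~\ref{thm:triangle-algo}): with candidate vertex set $N(a)\cap N(b)$, target $Y_b$, and the constraint that the two chosen vertices be adjacent, it decides in $m_b^{\omega/2+o(1)}+\tilde O(|Y_b|+|N(a)\cap N(b)|)$ time whether there are adjacent $c,d\in N(a)\cap N(b)$ with $Y_b\subseteq N[c]\cup N[d]$, and such a pair yields the dominating $K_4$ $\{a,b,c,d\}$. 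The decisive accounting point is that an edge of $G_b$ together with $a$ and $b$ forms a $K_4$, so $\sum_{b\in N(a)}m_b=3\cdot(\text{number of }K_4\text{'s through }a)\le 3\cdot(\text{number of triangles in }H_a)\le O(m_a^{3/2})\le O(m^{3/2})$; by convexity (using $\omega\ge 2$ and $m_b\le m$), $\sum_{b\in N(a)}m_b^{\omega/2}\le m^{\omega/2-1}\sum_{b\in N(a)}m_b=O(m^{(\omega+1)/2})$. To construct the $G_b$ themselves within this budget I would split $V(H_a)$ by $H_a$-degree at $\sqrt{m_a}$, exactly as in sparse triangle counting~\cite{18-triangle-counting}: for each low-degree $b$ enumerate the $\le\binom{\deg_{H_a}(b)}{2}$ candidate pairs directly (total $O(m_a^{3/2})$), and for each of the $O(\sqrt{m_a})$ high-degree $b$ scan its $H_a$-neighbourhood (total $O(m_a^{3/2})$).

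The step I expect to be the real obstacle is the size of $Y_b$, which is only bounded by $|Y|\le\tfrac34 n$: any Dominating Edge routine must spend $\Omega(|Y_b|)$ time even to test a single edge, and $\sum_{b\in N(a)}|Y_b|$ can be $\Theta(n^2)$, too much in very sparse graphs. I would handle this by a threshold on $|Y|$. If $|Y|\ge\sqrt m$, then any covering triangle contains a vertex of $G$-degree $\ge|Y|/3\ge\sqrt m/3$, of which there are only $O(\sqrt m)$; so it suffices to run the per-$b$ step only for those $O(\sqrt m)$ choices of $b$ — the contributions $m_b^{\omega/2}$ still sum to $O(m^{(\omega+1)/2})$, and the $O(\sqrt m)$ terms $|Y_b|\le n\le 2m$ now sum to $O(m^{3/2})$. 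If instead $|Y|<\sqrt m$, I loop over all $b\in N(a)$, but now $|Y_b|<\sqrt m$, so $\sum_{b}|Y_b|\le|N(a)|\sqrt m=O(m^{3/2})$. In either case every overhead is $O(m^{(\omega+1)/2})$ per heavy $a$, and summing over the $O(m/n)$ heavy vertices gives $\left(m^{(\omega+3)/2}/n\right)^{1+o(1)}$; the randomization and the $m^{o(1)}$ factors come entirely from the $2$-Dominating-Set subroutine, and when $\omega=2$ this is $\left(m^{5/2}/n\right)^{1+o(1)}=t_{K_4}(n,m)^{1+o(1)}$.
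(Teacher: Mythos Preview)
There is a genuine gap in the inner running-time claim. You assert that, for fixed $a$ and $b$, the ``Dominating Edge'' subproblem --- find an edge $\{c,d\}$ of $G_b=G[N(a)\cap N(b)]$ with $Y_b\subseteq N_G[c]\cup N_G[d]$ --- can be solved in $m_b^{\omega/2+o(1)}+\tilde O(|Y_b|+|N(a)\cap N(b)|)$ time, where $m_b=|E(G_b)|$. But the $m^{\omega/2}$ algorithm of \cite{FischerKR24} measures $m$ as the number of edges of the \emph{graph in which domination is checked}: in the Max-Entry Matrix Product formulation (Lemmas~\ref{lemma:max-entry-mm} and~\ref{lemma:gen-max-entry-mm}), the matrix $C$ encodes adjacencies between candidate vertices and target vertices, and its nonzero count is what drives the cost. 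Here those adjacencies are the $G$-edges between $N(a)\cap N(b)$ and $Y_b$, and $m_b$ --- which counts only edges \emph{inside} $N(a)\cap N(b)$ --- is simply a different quantity. Your key accounting $\sum_{b\in N(a)} m_b = O(m^{3/2})$ bounds the number of $K_4$'s through $a$, which is correct, but this is not the quantity that governs the subroutine's runtime. With the correct parameter (the number of candidate-to-target edges) in place of $m_b$, the convexity step no longer yields $m^{(\omega+1)/2}$.

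For comparison, the paper's proof also begins by fixing a heavy vertex and reducing to ``find a triangle in $X_1$ that dominates $X_2$'', but then proceeds quite differently: it case-splits first on whether $|X_2|\le 3\sqrt m$, and within the large-$X_2$ case on how many of the three triangle vertices have $G$-degree at least $\sqrt m$. Each of the resulting subcases is handled by a tailored argument --- direct rectangular matrix multiplication, inclusion--exclusion via All-Edges Triangle/$K_4$ Counting, or Max-Entry Matrix Product with carefully chosen dimensions --- and in every subcase the degree thresholds are precisely what bound the relevant matrix dimensions (or column sparsities) by $\sqrt m$, $m/N$, etc., so that each instance costs $m^{(\omega+1)/2+o(1)}$. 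Your approach of recursing once more on a second vertex $b$ is elegant in spirit, but to make it work you would need to control the candidate-to-$Y_b$ edge count, not $m_b$; the paper's degree-based case analysis is exactly the device that supplies this control.
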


We start with the simple deterministic algorithm for Dominating Triangle problem that uses a reduction to the \emph{All-Edges-Triangle-Counting} problem.
This problem asks for a given tripartite graph $G = (V_1 \cup V_2 \cup V_3, E)$ to determine, for every edge $e$ in $(V_1 \times V_2) \cap E$, how many triangles in $G$ contain $e$.
It is well known that this problem can be solved in time $m^{\frac{2\omega}{\omega + 1} + o(1)}$ (see e.g. \cite{18-triangle-counting}). We remark that any $\bigO(m^{\frac{4}{3}-\varepsilon})$-time algorithm would refute the $3$-SUM and the APSP Hypothesis~\cite{Patrascu10, VassilevskaWX20}, giving matching lower bounds if $\omega=2$.
\begin{lemma}[All-Edge Triangle Counting, see~\cite{18-triangle-counting}]\label{lem:triangle-counting-time}
    There is an algorithm solving All-Edges-Triangle-Counting in time $m^{\frac{2\omega}{\omega + 1} + o(1)}$.
\end{lemma}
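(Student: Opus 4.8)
The plan is to prove this via the classical high-degree/low-degree dichotomy combined with fast matrix multiplication, following the approach of Alon, Yuster and Zwick for finding and counting triangles. First I would fix a degree threshold $\Delta$, to be optimized at the end, and call a vertex of $G$ \emph{heavy} if its degree exceeds $\Delta$ and \emph{light} otherwise; since the degrees sum to $2m$, there are at most $2m/\Delta$ heavy vertices. Every triangle of the tripartite graph consists of one vertex $u \in V_1$, one vertex $v \in V_2$ and one vertex $w \in V_3$, so I would partition all triangles into three types -- those in which $w$ is light, those in which $w$ is heavy while at least one of $u,v$ is light, and those in which $u,v,w$ are all heavy -- and compute, for each type and each edge $e \in (V_1 \times V_2)\cap E$, the number of triangles of that type through $e$, accumulating into a single counter per edge.

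For the first type I would, for each light $w \in V_3$, enumerate all pairs $(u,v) \in (N(w)\cap V_1)\times(N(w)\cap V_2)$ and, whenever $uv \in E$ (tested in $\bigO(1)$ time via a hash table of the edge set), increment the counter of the edge $uv$; this costs $\sum_{w\text{ light}} \deg(w)^2 \le \Delta\sum_w \deg(w) = \bigO(m\Delta)$ and counts exactly the triangles with light third vertex. For the second type I would iterate over the edges $uv \in (V_1\times V_2)\cap E$: if $\deg(u) \le \Delta$, I run through the at most $\Delta$ vertices $w \in N(u)\cap V_3$ and increment the counter of $uv$ for each heavy such $w$ with $vw \in E$; if instead $\deg(u) > \Delta$ but $\deg(v)\le\Delta$, I do the symmetric thing from $v$ (and if both endpoints are heavy I skip). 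Charging each triangle from this fixed endpoint counts every triangle with heavy $w$ and a light endpoint exactly once, again in total time $\bigO(m\Delta)$. For the third type there are at most $2m/\Delta$ candidate vertices in each of the three parts, so I would build Boolean matrices $A$ (rows indexed by heavy vertices of $V_1$, columns by heavy vertices of $V_3$, with $A[u,w]=1$ iff $uw\in E$) and $B$ (rows by heavy vertices of $V_3$, columns by heavy vertices of $V_2$, with $B[w,v]=1$ iff $vw\in E$), compute $C = AB$ in time $\MM(2m/\Delta, 2m/\Delta, 2m/\Delta) = (m/\Delta)^{\omega+o(1)}$, and for each edge $uv$ with both endpoints heavy add $C[u,v]$, which is exactly the number of heavy common $V_3$-neighbors of $u$ and $v$, to its counter. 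Summing the three passes' contributions gives, for every edge, its exact triangle count.

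The overall running time is $\bigO(m\Delta) + (m/\Delta)^{\omega+o(1)}$, and choosing $\Delta = m^{(\omega-1)/(\omega+1)}$ balances the two terms, both becoming $m^{2\omega/(\omega+1)}$; this yields the claimed bound $m^{\frac{2\omega}{\omega+1}+o(1)}$.

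I expect the only real pitfalls to be bookkeeping rather than any deep difficulty. One must check that the three passes genuinely partition the triangles, so that, e.g., a triangle with both of $u,v$ light is charged to a single fixed endpoint in the second pass and not double-counted; and, more importantly, one must restrict the matrix-multiplication pass to triangles all of whose vertices are heavy. If one instead tried to handle every heavy-$w$ triangle by one matrix product with a whole part $V_i$ as a dimension, that dimension could be as large as $\Theta(m)$, and a $\Theta(m)\times\Theta(m/\Delta)$ by $\Theta(m/\Delta)\times\Theta(m)$ product would be far too slow; it is exactly the all-heavy restriction that keeps every matrix dimension down to $\bigO(m/\Delta)$.
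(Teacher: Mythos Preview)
The paper does not prove this lemma at all; it is stated with a citation and introduced in the text as ``well known'' (see e.g.~\cite{18-triangle-counting}). Your proof is correct and is precisely the standard Alon--Yuster--Zwick high/low-degree argument that underlies the cited bound, so there is nothing to compare.
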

By using this lemma, we are able to solve the problem in $\bigO\left(t_{K_3}(n,m)\right)\cdot m ^{\frac{\omega-1}{\omega+1}+o(1)}$, which evaluates to $\bigO\left(t_{K_3}(n,m)\right)\cdot m ^{\frac{1}{3}+o(1)}$ if $\omega=2$. 
By a more careful approach, we are able to construct a randomized algorithm that uses a bloom-filter inspired approach similar to the one in \cite{FischerKR24}, to achieve $\bigO\left(t_{K_3}(n,m)\right)\cdot \frac{n}{\sqrt{m}}$, which is better whenever $m\in \Omega(n^{1.10283})$ with current value of $\omega$ and $m\in \Omega(n^{1.2})$ if $\omega=2$, giving us the second part of the Theorem \ref{thm:triangle-algo}. 
We remark that although this value can be slightly improved by using a more careful analysis, it seems that in order to match the lower bound, a new technique will be required.
We now provide the approach of the deterministic algorithm and remark that the details for the randomized algorithm will follow from the proof of Theorem \ref{thm:K4-DS}. %Appendix \ref{app:small-cliques}.
\begin{restatable}{proposition}{propDetDomTriangle}\label{prop:dom-triangle}
    There exists a deterministic algorithm solving Dominating Triangle in time $\left(\displaystyle\frac{m^{1+\frac{2\omega}{\omega+1}}}{n}\right)^{1+o(1)}$. If $\omega=2$, this equals $\left(\frac{m^{7/3}}{n}\right)^{1+o(1)}$.
\end{restatable}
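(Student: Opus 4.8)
The plan is to branch over one vertex $w$ of the sought dominating triangle and reduce each branch to All-Edges-Triangle-Counting. First, observe that any dominating set $\{u,v,w\}$ of size $3$ satisfies $|N[u]| + |N[v]| + |N[w]| \geq n$, so at least one of the three vertices has open degree $\geq n/3 - 1$; call such a vertex \emph{heavy}. Since every heavy vertex contributes at least $n/3-1$ to $\sum_{v} \deg(v) = 2m$, there are only $\bigO(m/n)$ heavy vertices, and it suffices to show that, for a fixed heavy $w$, we can decide in time $m^{\frac{2\omega}{\omega+1}+o(1)}$ whether some edge $uv$ with $u,v \in N(w)$ yields a dominating triangle $\{u,v,w\}$ (note that $u,v$ themselves need not be heavy).

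Fix a heavy $w$ and put $A := N(w)$ and $Y := V(G) \setminus N[w]$. Any triangle through $w$ has its remaining two vertices in $A$, and since $w$ dominates $N[w]$ while $A \cap Y = \emptyset$, the triangle $\{u,v,w\}$ (for $u,v \in A$, $uv \in E(G)$) dominates $G$ if and only if $Y \subseteq N(u) \cup N(v)$, i.e.\ there is no $y \in Y$ with $uy \notin E(G)$ and $vy \notin E(G)$. Counting such $y$ directly would involve the non-edges between $A$ and $Y$, which can number $\Theta(n^2)$; to avoid this I would use inclusion--exclusion,
\[
 \#\{\,y \in Y : uy \notin E(G),\ vy \notin E(G)\,\} \;=\; |Y| - |N(u) \cap Y| - |N(v) \cap Y| + |N(u) \cap N(v) \cap Y|,
\]
which only refers to actual edges incident to $A$ and $Y$. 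The values $|N(u) \cap Y|$ over all $u \in A$ are obtained in $\bigO(m)$ total time. For the common-neighbor term, build the tripartite graph $H_w$ on parts $A_1, A_2$ (two copies of $A$) and $Y'$ (a copy of $Y$): add an $A_1$--$A_2$ edge for every $G$-edge inside $A$, an $A_1$--$Y'$ edge for every $G$-edge between $A$ and $Y$, and analogously for $A_2$--$Y'$. Then the number of triangles of $H_w$ through the $A_1$--$A_2$ edge representing $uv$ equals $|N(u) \cap N(v) \cap Y|$. As $H_w$ has $\bigO(m)$ edges, Lemma~\ref{lem:triangle-counting-time} computes all of these in time $m^{\frac{2\omega}{\omega+1}+o(1)}$, after which each candidate edge $uv$ is tested for the count being $0$ in constant time, and a dominating triangle is reported if one is found.

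Putting this together, the total running time is $\bigO(m/n) \cdot \bigl(\bigO(m) + m^{\frac{2\omega}{\omega+1}+o(1)}\bigr) = \bigl(m^{1+\frac{2\omega}{\omega+1}}/n\bigr)^{1+o(1)}$, using $\frac{2\omega}{\omega+1} \geq 1$; for $\omega = 2$ this is $\bigl(m^{7/3}/n\bigr)^{1+o(1)}$. The one genuinely delicate step is the inclusion--exclusion rewriting, which is what keeps the reduction to All-Edges-Triangle-Counting on a graph with only $\bigO(m)$ edges rather than on its dense complement; the heavy-vertex pruning and the correctness of the tripartite gadget are then routine, and degenerate corner cases (such as $n$ too small for the heavy-vertex count, or $G$ containing no triangle at all) are handled directly.
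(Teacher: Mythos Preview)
Your proposal is correct and follows essentially the same approach as the paper: branch over the $\bigO(m/n)$ heavy vertices $w$, set $A=N(w)$ and $Y=V\setminus N[w]$ (the paper writes these as BFS layers $X_1,X_2$ and explicitly discards the case where a third layer is nonempty, which your formulation handles implicitly), use inclusion--exclusion to express the domination test in terms of $|N(u)\cap N(v)\cap Y|$, and compute these common-neighbor counts for all edges $uv\subseteq A$ via All-Edges-Triangle-Counting on the same tripartite gadget. The only cosmetic difference is that you count undominated vertices in $Y$ while the paper counts dominated ones and compares to $|X_2|$.
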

\begin{proof}
    Guess a heavy vertex $v_h\in \mathcal{H}$. Let $X_0,X_1,X_2$ be the first three layers in a BFS layering from $v_h$. If $V\neq X_0\cup X_1\cup X_2$, we conclude that there is no dominating $K_3$ that contains $v_h$ and we proceed with the next heavy vertex.
    Otherwise, any dominating $K_3$ that contains $v_h$ faithfully corresponds to an edge $\{u,v\}\in X_1$ such that $X_2\subseteq N(u)\cup N(v)$. 
    More preciselly, let $N_2(u):=N(u)\cap X_2$. Then any dominating $K_3$ that contains $v_h$ faithfully corresponds to an edge $\{u,v\}\in X_1$ such that $ |N_2(u)\cup N_2(v)| = |X_2|$.
    Using the principle of inclusion-exclusion, we can write $|N_2(u)\cup N_2(v)| = |N_2(u)| + |N_2(v)| + |N_2(u)\cap N_2(v)|$. We can in linear time precompute the value $|N_2(x)|$ for each $x\in X_1$, but it remains to compute the value of $|N_2(u)\cap N_2(v)|$.
    By noticing that this value is only relevant when $u,v\in X_1$ form an edge, we can observe that this value corresponds exactly to the number of triangles of type $u,v,x$ where $x\in X_2$. 
    To this end we employ the reduction to All-Edges-Triangle-Counting to conclude the algorithm.
    Let $G_h$ be the tripartite graph whose vertex set consists of two copies of $X_1$ and one copy of $X_2$, i.e. $V = X_1^{(1)}\cup X_1^{(2)}\cup X_2$ and the edges are added naturally, that is, there is an edge between two vertices $u,v$ in different parts if and only if this edge is present in $G$.
    It is easy to see that for any $u\in X_1^{(1)}$ and $v\in X_1^{(2)}$, such that $\{u,v\}\in E(G_h)$ it holds that the number of triangles in $G_h$ that contain the edge $u,v$ is equal to $|N_2(u)\cap N_2(v)|$. 
    Hence, by running the algorithm from Lemma \ref{lem:triangle-counting-time}, we can compute this value for all edges in time $m^{2\omega/(\omega+1)+o(1)}$.

    It remains to argue the time complexity of this algorithm. There are $\bigO\left(\frac{m}{n}\right)$ many choices for a heavy vertex $v_h$ in $G$. For each of them, we run BFS and construct a graph $G_h$ ($\bigO(m)$), and finally run the All-Edge-Triangle-Counting algorithm from Lemma \ref{lem:triangle-counting-time} ($m^{2\omega/(\omega+1)+o(1)}$). In total, this clearly gives us the desired time complexity.
\end{proof}

\begin{comment}
    \begin{proof}[Proof (sketch)]
    We first guess a heavy vertex $w$ (a vertex satisfying $\deg(w)\geq n/3-1$). 
    We then run a breadth-first-search from $w$ and let $\left(X_0=\{w\},X_1,\dots, X_n\right)$ be the obtained BFS layering. 
    Note that for any $\ell\geq 3$, if $X_\ell$ is non-empty, we can conclude that there is no dominating clique that contains $w$ and we can proceed with the next choice of the heavy vertex. 
    Otherwise, finding any dominating triangle in $G$ that contains $w$ is equivalent to finding an edge $u,v$ in $X_1$ that dominates all vertices in $X_2$. 
    For any vertex $u$, we write $N_2(u) := N(u)\cap X_2$.
    Using the principle of inclusion-exclusion, we can write $|N_2(u)\cup N_2(v)| = |N_2(u)| + |N_2(v)| + |N_2(u)\cap N_2(v)|$.
    We note that for any pair of vertices $u,v\in X_1$ the value $|N_2(u)\cap N_2(v)|$ corresponds to the number of triangles containing the edge $uv$.
    Hence, by employing the algorithm from Lemma \ref{lem:triangle-counting-time}, we obtain the desired time.
\end{proof}
\end{comment}
%As already mentioned above, the algorithm for $K_4$-Dominating Set follows the similar high level approach, but the details are quite a bit more intricate. 
%We now present the full algorithm for solving this problem.
On the other hand, by coming up with a more involved argument, we are able to construct a randomized algorithm that matches the lower bound (up to resolving the matrix multiplication exponent $\omega$) for the remaining small pattern, namely Dominating $4$-Clique. 
The algorithm starts similarly as above.
That is, we first guess a heavy vertex $w$ in time $\bigO\left(\frac{m}{n}\right)$, and then we partition the vertices into two sets $X_1$ and $X_2$ based on the distance from $w$ and thus reduce to
detecting a triangle in $X_1$ that dominates $X_2$. 
We then consider two cases based on the size of $X_2$.
If $|X_2|\leq 3\sqrt{m}$, this turns out to be quite easy, and boils down to using standard matrix multiplication algorithms (with a bit more careful analysis).
If on the other hand $|X_2|> 3\sqrt{m}$, we then either reduce to counting cliques (triangles and $K_4$'s) in a certain restricted setting, or use the randomized bloom-filter based approach of \cite{FischerKR24} (with a finer-grained analysis) to achieve the desired running time.
%This approach gives us the proof of Theorem \ref{thm:K4-DS} and we provide the detailed proof in the full version of the paper. %Appendix \ref{app:small-cliques}.
We dedicate the rest of this section to giving the full proof of Theorem \ref{thm:K4-DS}. 
\thmCliqueDSAlgo*
As earlier, we first guess the heavy vertex in time $\bigO\left(\frac{m}{n}\right)$ and then for each choice of heavy vertex $v$, we partition the graph into two sets $X_1$ and $X_2$ consisting of vertices at distance one and two from $v$ respectively. Recall that if there is a vertex $w\in V\setminus (X_1\cup X_2)$, we can immediately conclude that no dominating clique contains $v$ and continue with the next choice of the heavy vertex. 
We now reduce this problem to finding a triangle in $X_1$ that dominates $X_2$. 
In the rest of this section we discuss how to find such a triangle efficiently.
But first, we introduce some useful notation. Let $V_\ell := \{x\in X_1\mid \deg(x)\leq \sqrt{m}\}$, and $V_h := \{x\in V_1\mid \deg(x)>\sqrt{m}\}$. We will refer to the vertices in $V_\ell$ (resp. $V_h$) as \emph{low-degree} (resp. \emph{high-degree}) vertices.
We now consider two cases based on the size of the set $X_2$. 
\paragraph*{Case 1: $|X_2|\leq 3 \sqrt m$.}
We consider two simple sub-cases based on whether there is a solution that contains a high-degree vertex. 
We start with the simple case for checking solutions that contain a high degree vertices. 
\begin{claim}\label{claim:dom-k4-small-x2}
    For $|X_2|\in \bigO(\sqrt{m})$, there is an algorithm that checks if there exists a triangle $u,v,w$ in $X_1$ that dominates $X_2$, with $u\in V_h$ in time bounded by $\MM\left(\sqrt{m}, m, \sqrt m\right)$.\footnote{With state-of-the-art matrix multiplication, this is bounded by $m^{1.6252+o(1)}$ (see \cite{14-fast-matrix-mult})} If $\omega=2$, this time is bounded by $m^{3/2+o(1)}$.
\end{claim}
\begin{proof}
    We simply consider all possible choices for $u$ ($|V_h|$ many choices) and since $u,v,w$ form a triangle, we consider all the possible choices for $v,w$ forming an edge. Using Lemma \ref{lem:guessing-approach}, we directly get the running time $\MM\left(|V_h|, |X_2|, m\right)$. Plugging in $|X_2|\leq \sqrt{m}$ and $|V_h| \leq \sqrt{m}$, we obtain the desired running time.
    It only remains to prove that if $\omega=2$, this time is bounded by $m^{3/2+o(1)}$. 
    This follows directly by using $\MM(a,b,c) \overset{\omega = 2}{=} \max(ab,ac,bc)$.
\end{proof}
We now proceed to construct an algorithm for the remaining case, namely finding solutions that contain no high-degree vertices.
\begin{claim}
    There is an algorithm that checks if there exists a triangle $u,v,w$ in $X_1$ that dominates $X_2$ with $u,v,w\in V_\ell$ in time 
    $m^{(\omega+1)/2+o(1)}$. If $\omega=2$, this is bounded by $m^{3/2+o(1)}$. 
\end{claim}
\begin{proof}
    Clearly, in any solution, there exists a vertex $u$ such that $\deg(u)\geq |X_2|/3$. 
    Moreover, since we are only looking for the solutions where each solution vertex is contained in $V_\ell$, we have $|X_2|/3\leq \deg(u)\leq \sqrt{m}$.
    We now proceed by guessing this vertex and then using the matrix multiplication approach as in Lemma \ref{lem:guessing-approach} to obtain the following running time.
    \begin{align*}
        T(m,n) &\leq \sum_{u\in V_\ell}\MM\left(\deg(u), |X_2|, \deg(u)\right) \\& \leq \sum_{u\in V_\ell}\MM\left(\deg(u), 3\deg(u), |N(u)|\right) & (|X_2|\leq 3\deg(u)) \\ &\leq \sum_{u\in V_\ell}\deg(u)^{\omega+o(1)}\\
        & \leq m^{(\omega-1)/2+o(1)}\sum_{u\in V_\ell}\deg(u) & (\deg(u)\leq \sqrt m) \\
        & \leq m^{(\omega-1)/2+o(1)}\cdot 2m \\
        &= m^{(\omega + 1)/2+o(1)}. & &\qedhere
    \end{align*}
\end{proof}
By combining the last two claims, we get the desired time.
\begin{corollary}\label{cor:dom-k4-small-x2}
    If $|X_2|\leq 3\sqrt{m}$, we can check if there exists a triangle $u,v,w\in X_1$ that dominates $X_2$ in time $m^{(\omega+1)/2+o(1)}$.
\end{corollary}
\begin{proof}
    It is sufficient to bound the time from Claim \ref{claim:dom-k4-small-x2} by $m^{(\omega+1)/2+o(1)}$.
    \[
    \MM(\sqrt{m},m\sqrt{m})\leq \bigO\left(\sqrt{m}\right)\MM(\sqrt{m},\sqrt{m},\sqrt{m})\leq \bigO\left(\sqrt{m}\right)m^{\omega/2+o(1)} \leq m^{(\omega+1)/2+o(1)}.
    \]
\end{proof}
\paragraph*{Case 2: $|X_2|> 3 \sqrt m$.}
We now move on to a technically more interesting case. For each $0\leq i \leq 3$, we will separately construct algorithms that detect a triangle $u,v,w$ in $X_1$ that dominates $X_2$ and such that $|\{u,v,w\}\cap V_\ell| = i$. 
We start with the easiest case, when $|\{u,v,w\}\cap V_\ell| = 3$. 
Since all three vertices belong to $V_\ell$, their respective degrees are at most $\sqrt{m}$, hence they can dominate at most $3\sqrt{m}$ vertices in total. However, since we are in the case $|X_2|>3\sqrt{m}$, it follows that no such solution can exist. We formally state this as an observation.
\begin{observation}
    For $|X_2|>3\sqrt{m}$, there is no triangle $u,v,w$ that dominates $|X_2|$, such that $|\{u,v,w\}\cap V_\ell| \geq 3$.
\end{observation}
On the other end of the spectrum is another relatively easy case, namely $\{u,v,w\}\cap V_\ell = \emptyset$. 
Here we proceed similarly as in the proof of Proposition \ref{prop:dom-triangle}, by reducing to All-Edges-Triangle-Counting.
\begin{lemma}
    There is an algorithm that checks if there exists a triangle $u,v,w$ in $X_1$ that dominates $X_2$ with $u,v,w\in V_h$ in time 
    $m^{\frac{\omega+1}{2}+o(1)}$. If $\omega=2$, this is bounded by $m^{3/2+o(1)}$.
\end{lemma}
\begin{proof}
    We first observe that in order for the triangle $u,v,w$ to dominate $X_2$, (without loss of generality) $\deg(w)\geq |X_2|/3$, and hence there are only $\bigO(\frac{m}{|X_2|})$ choices for the vertex $w$. 
    Also, since $|V_h|\leq \bigO(\sqrt{m})$, there are only at most $\bigO(\frac{m^2}{|X_2|})$ many possible choices for a suitable triple $u,v,w$.
    Recall that for any $x\in X_1$, by $N_2(x)$ we denote the set $N(x)\cap X_2$. By the principle of inclusion-exclusion, any triple $u,v,w$ dominates precisely 
    \begin{align}\label{eq:Duvw}
        D(u,v,w) &= |N_2(u)|+|N_2(v)|+ |N_2(w)|\\& - \left(|N_2(u)\cap N_2(v)| + |N_2(u)\cap N_2(w)|+ |N_2(v)\cap N_2(w)|\right) \\& + |N_2(u)\cap N_2(v)\cap N_2(w)|
    \end{align}
    many vertices in $X_2$.
    Similarly as in Proposition \ref{prop:dom-triangle}, we notice that the number $|N_2(u)\cap N_2(v)|$ is precisely equal to the number of triangles $u,v,y$ for $y\in X_2$
    (symmetrically for the pairs $u,w$ and $v,w$), and hence by Lemma \ref{lem:triangle-counting-time} we can compute it for all pairs in time $m^{\frac{2\omega}{\omega+1}+o(1)}$.
    It remains to compute the number $|N_2(u)\cap N_2(v)\cap N_2(w)|$ for all triples. We note that this number is equal to the number of cliques of size $4$ of type $u,v,w,y$ with $y\in X_2$.
    By using the classical rectangular matrix multiplication algorithm, we can compute, for all suitable triples $u,v,w$, the number of such cliques in time $\MM(\frac{m\sqrt{m}}{n}, N, \sqrt{m})$.
    We now proceed to bound this time by the desired value.
    \begin{align*}
        \MM(\frac{m\sqrt{m}}{N}, N, \sqrt{m}) &\leq \bigO\left(\frac N {\sqrt m}\right) \MM(\frac{m\sqrt m}{N}, \sqrt m, \sqrt{m}) & (N\ge \sqrt{m})\\
        & \leq \bigO\left(\sqrt m\right) \MM(\sqrt{m},\sqrt{m}, \sqrt{m}) & (m\ge N) \\
        & = \bigO\left(\sqrt m\right) m^{\omega/2 + o(1)}\\
        &\leq m^{\frac{\omega+1}{2}+o(1)}
    \end{align*}
    Finally, we get the following full algorithm. 
    First compute for all edges $u,v\in V_h$ the number of triangles that contain $u,v$ and an endpoint in $X_2$ in time $m^{\frac{2\omega}{\omega+1}+o(1)}$, and for all triangles $u,v,w$ with $u,v\in V_h$, $\deg(w)\geq \frac{N}{3}$ the number of $K_4$'s that contain an endpoint in $X_2$ in time $m^{\frac{\omega+1}{2}+o(1)}$. 
    Then iterate over each such triangle ($\bigO(\frac{m^{5/2}}{N})\leq \bigO(m^{3/2})$ many), and for each such triangle $u,v,w$ in constant time fetch the value $D(u,v,w)$ as in Equation \ref{eq:Duvw} and check if this value is $\geq |X_2|$. 
    In total, this running time is clearly dominated by the term $m^{\frac{\omega+1}{2}+o(1)}$.
\end{proof}
Before proceeding with the remaining two cases, we need to first introduce an intermediate problem, as studied in the context of finding dominating sets of size two in sparse graphs efficiently \cite{FischerKR24}. 
\begin{definition}[Max-Entry Matrix Product]
    Consider $0-1$ matrices $B$ of size $p\times N$ and $C$ of size $N\times q$, where $C$ contains at most $pq$ nonzeros. The \emph{Max-Entry Matrix Product} problem is to decide whether there exists $i,j$ such that
    \[
    \left(B\cdot C\right)[i,j] = \sum_{k}B[i,k]\cdot C[k,j] = \sum_k B[i,k].
    \]
\end{definition}
As already discussed in \cite{FischerKR24}, this problem is a special case of the \emph{Subset Query} problem, which has been studied in previous work (e.g. \cite{GaoIKW17,CharikarIP02}). 
In \cite{FischerKR24}, they also give a randomized algorithm that uses a Bloom-filter inspired construction to solve a special case when $p = \frac{m}{N}$ and $q=N$ in time $m^{\omega/2 + o(1)}$. 
For convenience, we will formally state this result as a lemma, but will also use a finer analysis in some other cases.
\begin{lemma}\label{lemma:max-entry-mm}
    Let $B$ be an $\bigO(\frac{m}{N})\times N$ binary matrix and $C$ be an $N\times N$ binary matrix with at most $m$ non-zeros. Then there exists a randomized algorithm computing the Max-Entry Matrix Product of $B$ and $C$ in time $m^{\omega/2+o(1)}$. Moreover, this algorithm correctly enumerates all witnesses w.h.p.
\end{lemma}
Furthermore, another useful corollary of the Lemma 3.4 in \cite{FischerKR24} is the following.
\begin{lemma}\label{lemma:gen-max-entry-mm}
    Let $B$ be a $p\times N$ binary matrix and $C$ be an $N\times q$ binary matrix with at most $pq$ non-zeros, such that each column of $C$ has at most $\Delta$ non-zeros. Then there exists a randomized algorithm computing the Max-Entry Matrix Product of $B$ and $C$ in time
    \[
        T(p,q,\Delta) \leq  \tilde\bigO\left(\max_{1\leq d\leq \Delta}\MM\left( p,d,\min\left\{ q, \frac{pq}{d}\right\}\right)  \right).
    \]
    Moreover, this algorithm correctly enumerates all witnesses w.h.p.
\end{lemma}
Our strategy for solving the remaining two cases is to construct an algorithmic reduction to the Max-Entry-Matrix-Product (using a similar approach as in \cite{FischerKR24}), and then solve this problem efficiently. 
\begin{lemma}
    There is a randomized algorithm that checks if there exists a triangle $u,v,w$ in $X_1$ that dominates $X_2$ with $u\in V_\ell$, $v,w\in V_h$ in time $m^{\frac{\omega+1}{2}+o(1)}$.
\end{lemma}
\begin{proof}
    Let $N:=|X_2|$.
    We consider two subcases based on the degree of $u$. 
    \begin{claim}
        There is a randomized algorithm that checks if there exists a triangle $u,v,w$ in $X_1$ that dominates $X_2$ with $\frac{2m}{N}\leq \deg(u)\leq \sqrt{m}$, $v,w\in V_h$ in time $m^{\frac{\omega+1}{2}+o(1)}$.
    \end{claim}
    \begin{proof}[Proof (of the claim)]
        Since the degree of $u$ is at least $\frac{2m}{N}$, there are at most $N$ choices for $u$.
        Let $C$ be an $N\times N$ binary matrix where columns correspond to the possible choices of $u$ (if there are fewer than $N$ such choices, the remaining columns are understood to be all-zero vectors) and rows correspond to the vertices in $X_2$.
        Set the value of $C[i,j]$ to one if the corresponding vertices are adjacent and to $0$ otherwise (i.e. $C$ can be thought of as a restricted adjacency matrix).
        Clearly, $C$ has at most $m$ many non-zeros.
        We now use that for any possible triple $u,v,w$, (without loss of generality) $\deg(v)\geq \frac{N}{3}$, and hence there are at most $\bigO(\frac{m}{N})$ choices for the vertex $v$. Moreover, since $w\in V_h$, there are at most $\bigO(\sqrt{m})$ many choices for $w$.
        Now, let $B$ be a $\bigO\left(\sqrt{m}\cdot \frac{m}{N}\right)\times N$ matrix, where columns correspond to vertices in $X_2$, and each row corresponds to a possible choice for a pair of vertices $v,w$. 
        Set the value of $B[(v,w),j]$ to $0$ if $j$ is adjacent to either $w$ or $v$, and to $1$ otherwise.
        It is now easy to see that a triple $u,v,w$ forms a dominating set if and only if 
        $\left(B\cdot C\right)[(v,w),u]  = \sum_k B[(v,w),k]$ (intuitively each vertex in $X_2$ that is \emph{not} dominated by $v$ nor $w$, is dominated by $u$).
        By Lemma \ref{lemma:max-entry-mm}, it is easy to see that we can compute this value for all triples w.h.p. in time $\bigO(\sqrt{m})m^{\omega/2+o(1)} = m^{\frac{\omega+1}{2}+o(1)}$.
        Finally, it remains to go over all the witnesses (at most $m\sqrt{m}$ many) and for each check if the corresponding vertices form a triangle, which can be done in constant time.
    \renewcommand{\qedsymbol}{$\vartriangleleft$}
    \end{proof}
    It remains to verify the case when degree of $u$ is at most $\bigO(\frac{m}{N})$.
    \begin{claim}
        There is a randomized algorithm that checks if there exists a triangle $u,v,w$ in $X_1$ that dominates $X_2$ with $\deg(u)\leq \frac{2m}{N}$, $v,w\in V_h$ in time $m^{\frac{\omega+1}{2}+o(1)}$.
    \end{claim}
    \begin{proof}[Proof (of the claim)]
        We assume (without loss of generality) that $\deg(w)\geq N/3$.
        Let $t\geq \sqrt{m}$ be such that $t\leq \deg(w)\leq 2t$. 
        There are clearly at most $\bigO(\frac{m}{t})$ many choices for the vertex $w$. 
        Iterate over each such choice for vertex $w$ and label the vertices in $X_2$ that are adjacent to $w$ as dominated. 
        Let $X_2^w$ be the set of non-dominated vertices by any fixed $w$ and write $N_w$ to denote the number of non-dominated vertices in $X_2^w$.
        We now use that any solution vertex $u$ must be adjacent to $w$ (solution forms a clique), and hence there are at most $\bigO(t)$ choices for $u$. 
        We proceed construct the matrices $B$ and $C$ similarly as above.
        Let $C$ be an $N_w\times \bigO\left(t\right)$ binary matrix where columns correspond to the possible choices of $u$ and rows correspond to the vertices in $X_2^w$, and set the value of $C[i,j]$ to one if the corresponding vertices are adjacent and to $0$ otherwise.
        Notice that each column of $C$ has at most $\frac{2m}{N}$ many ones.
        Let $B$ be an $\bigO(\frac{m}{N}) \times N_w$ matrix, where each row corresponds to a choice of the vertex $v$ and the columns correspond to vertices in $X_2^w$.
        Set the value of $B[v,j]$ to $0$ if $j$ is adjacent to $v$, and to $1$ otherwise.
        As above, a triple $u,v,w$ forms a dominating set if and only if 
        $\left(B\cdot C\right)[v,u]  = \sum_k B[v,k]$. Hence it is sufficient to compute the Max-Entry Matrix Product on $(B,C)$.
        Since $N_w=\bigO(N)$, for simplicity, we will use $N$ to denote $N_w$.
        By using Lemma \ref{lemma:gen-max-entry-mm}, we can bound the time complexity for computing this product as follows.
        \begin{align*}
            T_w(m,N) &\leq \tilde\bigO\left(\max_{1\leq d\leq \frac{2m}{N}}\MM\left( \frac{m}{N},d,t\right)  \right) \\
            & \leq \MM\left( \frac{m}{N}, \frac{m}{N}, t \right) \\
            & \leq \MM\left(\sqrt{m}, \sqrt{m}, t\right) & (N\geq \sqrt{m})\\
            & \leq \bigO\left(\frac{t}{\sqrt{m}}\right)\MM\left(\sqrt{m}, \sqrt{m}, \sqrt{m}\right) & (t\geq \sqrt{m})\\
            & = \bigO\left(\frac{t}{\sqrt{m}}\right) m^{\omega/2 + o(1)}
        \end{align*}
        Now, we run this product for each choice of $w$ with degree roughly $t$. Hence for a fixed $t$, we need to compute the matrix product $\bigO(\frac{m}{t})$ many times, and hence we have
        \begin{align*}
        T(m,N,t) & \leq \sum_{\substack{w\in X_1 \\ \deg(w)\in[t,2t]}} T_w(m,n) \\
        & \leq \bigO\left(\frac{m}{t}\right)T_w(m,n)\\
        &\leq \bigO\left(\sqrt m\right) m^{\omega/2 + o(1)} \\&\leq m^{(\omega+1)/2 + o(1)}.
        \end{align*}
        Finally, since the algorithm above yields the desired time for any fixed value of $t$, in order to cover the whole search space for $w$, we run the routine above $s \leq \bigO(\log m)$ times, once for each $t_1,\dots, t_s$, with $t_1 = \sqrt{m}$, $t_s = n$, and $t_{i+1} = 2t_i$. 
        This yields a total running time of
        \begin{align*}
            T(m,N) & = \sum_{i\in [s]} T(m,N,t_i)\\
            & \leq \bigO\left(\log m\right) T(m,n,t_1)\\
            & \leq m^{(\omega+1)/2 + o(1)} & (\log m = m^{o(1)})
        \end{align*}
        Hence, the claimed running time is correct.
        \renewcommand{\qedsymbol}{$\vartriangleleft$}
    \end{proof}
    The last two claims conclude the proof of the lemma.
\end{proof}
Finally, it remains to prove that we can achieve the desired running time for $|\{u,v,w\}\cap V_\ell| = 2$.
\begin{lemma}\label{lemma:k4-2-light-vertices}
    There is a randomized algorithm that checks if there exists a triangle $u,v,w$ in $X_1$ that dominates $X_2$ with $u,v\in V_\ell$, $w\in V_h$ in time $m^{\frac{\omega+1}{2}+o(1)}$.
\end{lemma}
\begin{proof}
    We will construct matrices $B$ and $C$ similarly as above. Let $B$ be an $\bigO(\sqrt{m})\times N$ matrix where each row corresponds to a vertex in $V_h$ and each column to a vertex in $X_2$. Set the value of $B[w,x]$ to $0$ if $\{w,x\}\in E$ and to $1$ otherwise. 
    Similarly, let $C$ be an $N\times \bigO(m)$ matrix, such that each row corresponds to a vertex in $X_2$ and each column to an edge $uv$ where both $u$ and $v$ are contained in $V_\ell$.
    We set an entry $C[x,(u,v)]$ to $1$ if $x$ is adjacent to either $u$ or $v$ and to $0$ otherwise.
    It is easy to verify that $u,v,w$ dominate $X_2$ if and only if $(B\cdot C)[w, (u,v)] = \sum_k B[w,k]$.
    Hence, computing the Max-Entry Matrix Product on $(B,C)$ suffices. To this end, we notice that each column of $C$ corresponds to an edge with both endpoints in $V_\ell$ and hence has at most $2\sqrt{m}$ many non-zeros.
    Thus, by Lemma \ref{lemma:gen-max-entry-mm}, we can bound the complexity of computing Max-Entry Matrix Product on $(B,C)$ as follows.
    \begin{align*}
        T(m,N) &\leq \tilde \bigO\left( \max_{1\leq d\leq 2\sqrt{m}} \MM \left(\sqrt{m}, d, \min\left\{m, \frac{m^{3/2}}{d}\right\}\right)\right) \\
        &\leq \tilde \bigO\left( \max_{1\leq d\leq 2\sqrt{m}} \MM \left(\sqrt{m}, d, m,\right\}\right) \\
        & \leq \MM\left( \sqrt{m}, \sqrt{m}, m \right) & (d\leq \bigO(\sqrt{m}))\\
        &\leq \bigO(\sqrt{m}) m^{\omega/2+o(1)} \\
        & \leq  m^{(\omega+1)/2+o(1)}
    \end{align*}\qedhere
\end{proof}
Theorem \ref{thm:K4-DS} now follows by running the corresponding algorithm for each of the $\bigO\left(\frac{m}{n}\right)$ heavy vertices.
Moreover, a direct consequence of the algorithm for case 2, by setting $|X_2| = n$ is a randomized algorithm for Dominating Triangle that runs in $m^{(\omega+1)/2+o(1)}$ and that together with Proposition \ref{prop:dom-triangle} gives the proof of Theorem \ref{thm:triangle-algo}.

\subsection{Handling Isolated Vertices}\label{ssec:isolated-vertices-algorithm}
In the last two subsections we considered the graphs that have no isolated vertices. It remains to prove that we can obtain a tight classification even for patterns with isolated vertices.
In particular, we prove the following theorem.
\begin{theorem}
    Let $P$ be any pattern graph with $k$ vertices and $1\leq r\leq k$ isolated vertices. There is a randomized algorithm that enumerates all dominating $P$-patterns in time $t_P(n,m)\cdot m^{\frac{\omega-2}{2}+o(1)}$ with high probability. If $\omega=2$, this is $t_P(n,m)^{1+o(1)}$.
\end{theorem}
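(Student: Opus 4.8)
Let $I\coloneqq I(P)$ be the set of $r$ isolated vertices and $P'\coloneqq P[V(P)\setminus I]$ the non-isolated part, a pattern on $k'\coloneqq k-r$ vertices with no isolated vertex. An isolated vertex of $P$ never lies in $N(S)$ for an independent $S\subseteq V(P)\setminus I$, so Definition~\ref{def:parameter} makes the same choice of $S$ for $P$ and for $P'$; hence $\rho(P)=\rho(P')$ and, unwinding the definition, $t_P(n,m)=t_{P'}(n,m)\cdot m^{r/2}$. A set $S\subseteq V(G)$ is a dominating $P$-pattern exactly when it decomposes as $S=A\sqcup B$ with $G[A]\cong P'$, with $B$ an independent $r$-set having no edge to $A$, and with $A\cup B$ dominating $G$; equivalently, $B$ is an independent dominating set of size $r$ in $G_A\coloneqq G[V(G)\setminus N[A]]$. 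The plan is to argue by induction on $r$. For the base case $r=0$ one only needs to enumerate all dominating $P'$-patterns, which is exactly what the algorithms of Sections~\ref{ssec:general-algorithm} and~\ref{ssec:small-graphs-algorithms} do (and, when $P'$ is empty --- the case $P=\overline{K_r}$ --- the Dominating Independent Set algorithm of~\cite{KuennemannR24}); all of these proceed through Lemma~\ref{lem:guessing-approach}, hence genuinely enumerate. So assume $r\ge 1$. Any dominating $P$-pattern $S$ has $|S|=k$ and dominates $G$, so it contains a \emph{heavy} vertex $v_h$ with $\deg_G(v_h)\ge n/k-1$; there are $\bigO(m/n)$ such vertices, and I iterate over all of them and over the $\bigO(1)$ choices of the role $v_h$ plays in $P$.

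If $v_h$ realizes an isolated vertex of $P$, then $A$ and $B\setminus\{v_h\}$ avoid $N[v_h]$, so --- since $v_h$ dominates $N[v_h]$ --- they form a dominating $(P-v_h)$-pattern of $G_{v_h}\coloneqq G[V(G)\setminus N[v_h]]$, and $P-v_h$ has $r-1$ isolated vertices; I recurse. With $n_{v_h},m_{v_h}\le m$ denoting the size of $G_{v_h}$: if $\rho(P)\ge 0$ then by monotonicity $t_{P-v_h}(n_{v_h},m_{v_h})\le t_{P-v_h}(n,m)=t_P(n,m)/m^{1/2}$, so summing the inductive bound over the $\bigO(m/n)$ heavy vertices yields $t_P(n,m)\cdot m^{(\omega-2)/2+o(1)}$ because $m^{1/2}/n\le 1$; if $\rho(P)=-1$ the same conclusion holds for every \emph{non}-super-heavy $v_h$ (those with $n_{v_h}\ge\sqrt m$), since then $t_{P-v_h}(n_{v_h},m_{v_h})=m_{v_h}^{k/2}/n_{v_h}\le m^{(k-1)/2}$ and $\bigO(m/n)\cdot m^{(k-1)/2}=t_P(n,m)$.

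The crux is a \emph{super-heavy} $v_h$ (so $n_{v_h}<\sqrt m$) when $\rho(P)=-1$: here I do not recurse but solve the residual instance on $G_{v_h}$ directly, splitting the $k-1$ vertices of the residual pattern into halves of sizes $\lceil(k-1)/2\rceil$ and $\lfloor(k-1)/2\rfloor$ and applying Lemma~\ref{lem:guessing-approach} with $Y=V(G_{v_h})$. Since $|Y|$ and $|\mathcal V_A|,|\mathcal V_B|$ are all bounded by powers of $n_{v_h}<\sqrt m$, the rectangular product tiles into $\sqrt m\times\sqrt m$ blocks with inner dimension $\bigO(\sqrt m)$ and costs $m^{(k-3+\omega)/2+o(1)}$, which over the $\bigO(m/n)$ super-heavy vertices sums to $t_P(n,m)\cdot m^{(\omega-2)/2+o(1)}$ --- equal to $t_P(n,m)^{1+o(1)}$ when $\omega=2$. (The remaining corner $n<2\sqrt m$, i.e.\ $m=\Omega(n^2)$, is the dense regime, where the $n^{k+o(1)}$-time algorithm of~\cite{EisenbrandG04} already beats the bound.) If instead $v_h$ realizes a non-isolated vertex of $P$, I enumerate copies of $P'$ carrying $v_h$ at the prescribed position --- via the edge/odd-cycle decomposition of Lemma~\ref{lem:decomp-T} and the enumeration of Lemma~\ref{lemma:enumerating-T}, exactly as in Section~\ref{ssec:general-algorithm} --- while simultaneously searching for $B$: split the $r$ isolated vertices into balanced halves, invoke Lemma~\ref{lem:guessing-approach}, and carry out the same matrix-dimension bookkeeping as in Section~\ref{ssec:general-algorithm} (guessing a second heavy vertex of $S$ when one exists to peel off another closed neighbourhood, and otherwise observing that the undominated part has size $\bigO(\sqrt m)$); the cost is again $t_P(n,m)^{1+o(1)}$ if $\omega=2$ and $t_P(n,m)\cdot m^{(\omega-2)/2+o(1)}$ in general.

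Finally, $r=1$ needs extra care, because after guessing $v_h$ no isolated vertices remain to recurse on. If $v_h$ realizes the isolated vertex, the residual is a plain dominating $P'$-pattern on $G_{v_h}$, handled by the base case. If $v_h$ realizes a non-isolated vertex, the task becomes ``find a copy $A$ of $P'$ and a vertex $b\notin N[A]$ such that $A\cup\{b\}$ dominates $G$''; after the $P'$-enumeration, I encode the choice of $b$ and the domination requirement as a Max-Entry Matrix Product and run the Bloom-filter algorithm of~\cite{FischerKR24} (Lemmas~\ref{lemma:max-entry-mm} and~\ref{lemma:gen-max-entry-mm}), which supplies the randomized, high-probability guarantee and is where the $m^{\omega/2}=m\cdot m^{(\omega-2)/2}$ cost of a balanced product enters. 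The step I expect to be the main obstacle is precisely this interaction of super-heavy vertices with $\rho(P)=-1$ (pure Dominating Independent Set, and cliques with isolated vertices) and with $r=1$: there the naive recursion blows up, and one must switch to brute-force enumeration inside the collapsed residual graph together with a carefully tiled matrix product, arranging that the only loss is the uniform $m^{(\omega-2)/2+o(1)}$ factor.
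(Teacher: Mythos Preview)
Your inductive scheme diverges from the paper's, and the $r=1$ case with $v_h$ non-isolated is where it breaks. You propose to enumerate all copies $A$ of $P'$ and then cast ``find $b\notin N[A]$ with $A\cup\{b\}$ dominating'' as a single Max-Entry Matrix Product with rows indexed by the copies~$A$. But populating such a row requires knowing $V(G)\setminus N[A]$, and since $A$ contains the heavy $v_h$ this already costs $\Omega(n/k)$ per copy; summed over the $\bigO(t_{P'}(n,m))$ copies that is $\Omega(t_{P'}(n,m)\cdot n)$, whereas the budget is only $t_P(n,m)\cdot m^{(\omega-2)/2}=t_{P'}(n,m)\cdot m^{(\omega-1)/2}$, and $n\gg m^{(\omega-1)/2}$ throughout the sparse regime (always, when $\omega=2$ and $m<n^2$). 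The paper's key idea is precisely to \emph{not} enumerate all of $P'$: it holds back one edge $uv$ (or, when the decomposition yields only odd cycles, one odd cycle) of $P'$, enumerates only copies of $P'-\{u,v\}$ in time $\bigO(t_P(n,m)/m^{3/2})$, and for each such partial copy invokes the dedicated Lemma~\ref{lemma:isolated-plus-edge} (resp.\ Lemma~\ref{lemma:isolated-plus-cycle}) to locate the triple $(u,v,x)$ jointly in $m^{(\omega+1)/2+o(1)}$. This ``reserve an edge to pair with the isolated vertex'' trick is exactly what balances the budget, and it is absent from your plan.

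There is also a gap at the base of your induction: the case $r=0$ does not hold for $P'=K_3$. The Dominating Triangle algorithms of Section~\ref{ssec:small-graphs-algorithms} achieve only $m^{(\omega+1)/2+o(1)}$ (randomized) or $(m^{1+2\omega/(\omega+1)}/n)^{1+o(1)}$, neither of which is bounded by the required $t_{K_3}(n,m)\cdot m^{(\omega-2)/2}=m^{(\omega+2)/2}/n$ when $n>\sqrt m$. Hence for the family $K_3+\overline{K_r}$ your inductive hypothesis is never established at the base, so every subsequent step is unsupported (even if a direct, non-inductive accounting would happen to close for $r=1$). The paper sidesteps this entirely by never recursing into a Dominating Triangle instance: for $r\ge 2$ it merely \emph{enumerates} triangles (Lemma~\ref{lemma:enumerating-T}) and then calls the Maximal $r$-Independent Set algorithm of~\cite{KuennemannR24} on each residual graph, whose $m^{(r-2+\omega)/2+o(1)}$ running time also absorbs the $\bigO(m)$ cost of constructing that residual since $r\ge 2$.
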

We again consider two cases separately, namely when $r\geq 2$ and $r=1$. 
In the first (simpler) case of $r \ge 2$, we aim to reduce to the Maximal $r$-Independent Set problem, which is known to have efficient algorithm in sparse graphs (see \cite{KuennemannR24}). 
We enumerate all subgraphs of $G$ that correspond to the induced subgraph of $P$ that does not contain isolated vertices, and then after removing the closed neighborhood of the enumerated subgraph, we run the Maximal $r$-Independent Set algorithm on the remaining part. 
By a careful analysis, we obtain the desired time.
%We provide a sketch of the proof here, for the full proof, see the full version of the paper. %Appendix \ref{app:isolated-vertices}.
\begin{restatable}{lemma}{lemmaManyIsolated}
    Let $P$ be a pattern with $2\leq r\leq k$ isolated vertices. Then there exists a randomized algorithm solving $P$-Dominating Set in time $\left(t_P(n,m)\cdot m^{(\omega-2)/2}\right)^{1+o(1)}$ with high probability. If $\omega=2$, this is equal to $(t_P(n,m))^{1+o(1)}$.
\end{restatable}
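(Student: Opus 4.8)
The plan is to reduce to the Maximal $r$-Independent Set problem, for which~\cite{KuennemannR24} provides a randomized algorithm that, with high probability, runs in time $t_{\overline{K_r}}(n,m)\cdot m^{(\omega-2)/2+o(1)}$ (equal to $t_{\overline{K_r}}(n,m)^{1+o(1)}$ for $\omega=2$), where $\overline{K_r}$ is the edgeless pattern on $r$ vertices. Write $P'\coloneqq P[V(P)\setminus I(P)]$ for the \emph{core} of $P$: it has $k'\coloneqq k-r$ vertices, no isolated vertices, and --- since every vertex of $I(P)$ is isolated in all of $P$, so the set $S$ of Definition~\ref{def:parameter} and its neighbourhood lie inside $V(P')$ --- it satisfies $\rho(P)=\rho(P')$ and $t_P(n,m)=t_{P'}(n,m)\cdot m^{r/2}$. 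The structural fact I would use is that a dominating $P$-pattern of $G$ is exactly a set $S'\cup J$ with $G[S']\cong P'$, with $|J|=r$, $J$ independent in $G$, $J\cap N_G[S']=\emptyset$, and $J$ dominating $V(G)\setminus N_G[S']$ --- equivalently, $J$ is a maximal independent set of size $r$ in $G-N_G[S']$. So I would enumerate a family of ``candidate cores'' guaranteed to contain the core $S'$ of every dominating $P$-pattern, and for each candidate core run the Maximal $r$-Independent Set subroutine on the residual graph $G-N_G[S']$; since that subroutine enumerates all its witnesses, this enumerates all dominating $P$-patterns.

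The first case is $\rho(P')\ge 0$, i.e.\ the set $S$ of Definition~\ref{def:parameter} is nonempty. By maximality of $S$ the induced graph $P'[R]$ (with $R=V(P')\setminus N[S]$) has no isolated vertices, so Lemma~\ref{lem:decomp-T} applies, and combining its edge/odd-cycle decomposition with the enumeration routine of Lemma~\ref{lemma:enumerating-T} for $R$, Hall's marriage theorem for a perfect matching of $N(S)$ into $S$, and guessing each of the remaining $\rho(P')$ vertices of $S$ among the neighbours of an already-placed vertex (exactly as in Section~\ref{ssec:general-algorithm}), I would enumerate all $\Theta(t_{P'}(n,m))$ induced copies of $P'$ in $G$ in time $t_{P'}(n,m)^{1+o(1)}$. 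For each such core $S'$ I run the subroutine on $H\coloneqq G-N_G[S']$: writing $n_H=|V(H)|$, $m_H=|E(H)|$, this costs $t_{\overline{K_r}}(n_H,m_H)\cdot m_H^{(\omega-2)/2+o(1)}=m_H^{(r+\omega-1)/2+o(1)}/n_H$, which since $m_H\le\min\{m,n_H^2\}$ (hence $n_H\ge\sqrt{m_H}$) is at most $m^{(r+\omega-2)/2+o(1)}$. Over all $\Theta(t_{P'}(n,m))$ candidate cores this sums to $t_{P'}(n,m)\cdot m^{(r+\omega-2)/2+o(1)}=t_P(n,m)\cdot m^{(\omega-2)/2+o(1)}$.

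The second case is $\rho(P')=-1$, so $t_P(n,m)=m^{(k+1)/2}/n$ and enumerating all copies of $P'$ would be too costly; here I would use that any dominating set of size $k$ contains a \emph{heavy} vertex of degree $\ge n/k-1$, of which there are $\bigO(m/n)$, and split on where this heavy vertex sits. If it is in the core $S'$: using the second part of Lemma~\ref{lem:decomp-T} ($P'$ with any one vertex removed still decomposes into edges and odd cycles) I would enumerate all $\bigO(m^{(k'+1)/2}/n)=\bigO(t_{P'}(n,m))$ copies of $P'$ containing a heavy vertex, and finish via the subroutine on each residual graph as above, with per-residual cost $\le m^{(r+\omega-2)/2+o(1)}$, so total $\le t_P(n,m)\cdot m^{(\omega-2)/2+o(1)}$ (this branch actually comes out below budget, the heavy vertex buying an extra factor $n/\sqrt m$). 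If instead it is some $j^*\in J$, then by the isolated-vertex condition $S'$ and $J\setminus\{j^*\}$ both avoid $N_G[j^*]$, so $S'\cup(J\setminus\{j^*\})$ is a dominating $(P'\sqcup\overline{K_{r-1}})$-pattern of $G'\coloneqq G-N_G[j^*]$; I would guess $j^*$ ($\bigO(m/n)$ choices) and recurse on $(G',P'\sqcup\overline{K_{r-1}})$, which still satisfies $\rho=-1$ with one fewer isolated vertex, bottoming out in the separately treated single-isolated-vertex case. By induction on the number of isolated vertices the recursive call costs $t_{P'\sqcup\overline{K_{r-1}}}(n',m')\cdot(m')^{(\omega-2)/2+o(1)}=(m')^{(k+\omega-2)/2+o(1)}/n'$ with $n'=|V(G')|$, $m'=|E(G')|$; using $m'\le(n')^2$ this is $\le m^{(k+\omega-3)/2+o(1)}$, so summing over the $\bigO(m/n)$ choices of $j^*$ gives $\bigl(m^{(k+\omega-1)/2}/n\bigr)^{1+o(1)}=\bigl(t_P(n,m)\cdot m^{(\omega-2)/2}\bigr)^{1+o(1)}$. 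The recursion has depth $\le r=\bigO(1)$, and for $\omega=2$ every bound collapses to $t_P(n,m)^{1+o(1)}$.

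I expect the main obstacle to be the accounting in the $\rho(P')=-1$ case, in particular the ``heavy vertex in $J$'' branch, where the $\bigO(m/n)$ cost of guessing $j^*$ is absorbed only because $m'\le(n')^2$ prevents the residual graph $G'$ from being simultaneously edge-dense and vertex-sparse, and where the recursion must terminate cleanly in the delicate single-isolated-vertex algorithm. A secondary point requiring care is verifying that the per-candidate cost of the Maximal $r$-Independent Set subroutine, summed over all $\Theta(t_{P'}(n,m))$ enumerated cores, stays within the $t_P(n,m)\cdot m^{(\omega-2)/2}$ budget, the binding case being a residual graph with $\Theta(\sqrt m)$ vertices but $\Theta(m)$ edges, where the $1/n_H$ saving built into the subroutine is weakest.
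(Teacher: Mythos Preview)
Your proposal is correct and follows essentially the same approach as the paper: split $P$ into its core $P'$ and its $r$ isolated vertices, enumerate candidate cores, and for each run the Maximal $r$-Independent Set subroutine of~\cite{KuennemannR24} on the residual graph. The only substantive difference is in the sub-case $\rho(P')=-1$ with the heavy vertex in $J$: the paper enumerates all copies of $P'$ (costing an extra $n/\sqrt{m}$), then guesses the heavy isolated vertex and runs the size-$(r-1)$ MIS routine (gaining back $\sqrt{m}/n$), so the two factors cancel; you instead guess the heavy isolated vertex $j^*$ first and recurse on $(G-N_G[j^*],\,P'\sqcup\overline{K_{r-1}})$, bottoming out at the single-isolated-vertex proposition. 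Both arguments give the same bound; your recursive phrasing is arguably cleaner and makes the dependence on the one-isolated-vertex case explicit, while the paper's direct cancellation avoids invoking that proposition as a black box. One small wording issue: ``guessing each of the remaining $\rho(P')$ vertices of $S$ among the neighbours of an already-placed vertex'' is not quite how the $n^{\rho(P')}$ factor arises---those remaining $S$-vertices are simply guessed freely in $V(G)$ (or, as the paper does, one appeals to~\cite{NgoPRR18} for the enumeration bound)---but the resulting count $t_{P'}(n,m)$ is correct.
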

\begin{proof}
    If $r=k$, then this problem is precisely the Maximal $k$-Independent Set problem, and by \cite{KuennemannR24} can be solved in the time bounded by $\left(\frac{m^{(k-1)/2+\omega/2}}{n}\right)^{1+o(1)}$, as desired. Hence, we may assume that $2\leq r\leq k-2$ (if $P$ contains at least one edge, then at least two vertices have degree at least one).
    Let $X:=\{x\in V(P): \deg_P(x)\geq 1\}$ and $Y:=V(P)\setminus X$ be the set of non-isolated and isolated vertices in $P$ respectively.
    A crucial observation that we use is that if $D\subseteq V(G)$ is a $P$-dominating set in $G$, 
    if we let $v_x,v_y$ denote the vertices that correspond to $x$ and $y$ in $D$ respectively, then the distance between $v_x$ and $v_y$ in $G$ is at least two, and moreover any vertex that is adjacent to $v_x$ is not necessarily adjacent to $v_y$. 
    Hence, if we can efficiently guess the vertices in $D$ that correspond to $X$ (denote this set by $D_X$), we only need to solve the Maximal $|Y|$-Independent Set problem on the graph $G-N[D_X]$.

    Let $S,N(S),R$ be the partition of $V(P)$ as in Definition \ref{def:parameter}. 
    If $S\neq \emptyset$, by \cite{NgoPRR18}, we can enumerate all induced subgraphs of $G$ isomorphic to $P[X]$ in time $\bigO(t_{P[X]}(n,m))$. 
    By a very simple modification of the algorithm presented in \cite{KuennemannR24}, there is a randomized algorithm solving Maximal $|Y|$-Independent Set problem on $G-N[D_X]$ in time bounded by $m^{(|Y|-2+\omega)/2+o(1)}$.
    In total, this yields a time complexity of 
    \begin{align*}
    T(m,n) &\leq t_{P[X]}(n,m)\cdot \left(m + m^{(|Y|-2+\omega)/2+o(1)}\right)\\ &\leq \left(t_{P[X]}(n,m) \cdot m^{\frac{|Y|-2+\omega}{2}}\right)^{1+o(1)} & (|Y|\geq 2)\\
    & {=}\left(T_P(n,m) \cdot m^{(\omega-2)/2}\right)^{1+o(1)} & \text{(Observation \ref{obs:tp-breakdown})}
    \end{align*}
    On the other hand, if $S=\emptyset$, then we use that any dominating set must contain a heavy vertex in $G$. 
    Moreover, $P[X]$ is now either isomorphic to $K_3$ or $K_4$, or is a basic. If it is a basic, by Lemma \ref{lemma:enumerating-T}, we can enumerate all subgraphs of $G$ that are isomorphic to $P[X]$ and contain a heavy vertex in time $\bigO\left( \frac{m^{(|X|+1)/2}}{n}\right) = \bigO\left(t_{P[X]}(m,n)\right)$. It is straightforward to verify that in both of the remaining cases, the same holds. 
    Now by the same argument as above, we can bound the total time complexity in this case by 
    \begin{align*}
        T(m,n) &\leq \bigO\left( \frac{m^{(|X|+1)/2}}{n} \left(m+m^{(|Y|-2+\omega)/2+o(1)}\right)\right)  & \\ &\leq 
        \frac{m^{\left(|X|+|Y| - 1 + \omega \right)/2+o(1)}}{n} & (|Y|\geq 2) \\
        & =  \frac{m^{\left(k - 1 + \omega \right)/2+o(1)}}{n} & (|X|+|Y| = k)\\ 
        &= \left(t_P(n,m)\cdot m^{(\omega-2)/2}\right)^{1+o(1)} & \text{(Observation \ref{obs:tp-breakdown})}
    \end{align*}
    We note that for the remaining case when $S=\emptyset$ and the heavy vertex in the solution corresponds to a vertex in $Y$, the analysis is almost completely identical to the one above, except by Lemma \ref{lemma:enumerating-T}, we would have an additional factor of $\frac{n}{\sqrt{m}}$ for enumerating the subgraphs isomorphic to $P[X]$, and would have a factor of $\frac{\sqrt{m}}{n}$ for solving the Maximal $|Y|$-Independent Set. Clearly those two factors cancel out.
\end{proof}
Things get slightly more complicated when dealing with patterns that only have one isolated vertex. 
In particular, the preprocessing part of the algorithm above takes in the worst case $\Theta\left(t_{P[X]}(n,m)\cdot m\right)$, while for pattern graphs with only one isolated vertex, we have $t_P(n,m) = \bigO\left(t_{P[X]}(n,m)\cdot \sqrt m\right)$. 
To circumvent this overhead, we employ a more careful analysis and use a hashing-based approach similar to that in Section \ref{ssec:small-graphs-algorithms}.
We first provide an algorithm for two special cases of some of the smallest patterns that exhibit this structure and then the idea is to reduce any pattern to one of these special cases.
\begin{lemma}\label{lemma:isolated-plus-edge}
    Let $G$ be a graph with $n$ vertices and $m$ edges, and let $Y\subseteq V(G)$ be a subset of vertices of $G$.
    Then we can construct an algorithm that w.h.p. enumerates all choices of three vertices $u,v,w$ in time ${m^{(\omega+1)/2+o(1)}}$, such that the following is satisfied.
    \begin{itemize}
        \item The vertex $w$ is contained in $Y$ and satisfies $\deg(w)< \sqrt{m}$.
        \item Vertices $u$ and $v$ are adjacent in $G$ and both of them are non-adjacent to $w$ in $G$ (i.e. the induced subgraph $G[\{u,v,w\}]$ is isomorphic to an edge and an isolated vertex).
        %\item Vertex $u$ is heavy (i.e. $\deg(u)\geq n/k$), and $w$ is a low-degree vertex (i.e. $\deg(w)< \sqrt{m}$).
        \item $Y\subseteq N[u]\cup N[v]\cup N[w]$.
    \end{itemize}
\end{lemma}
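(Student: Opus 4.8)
The plan is to reduce the search for valid triples to a bounded number of instances of Max-Entry Matrix Product (Lemmas~\ref{lemma:max-entry-mm} and~\ref{lemma:gen-max-entry-mm}), following the template of Lemma~\ref{lemma:k4-2-light-vertices}. Throughout write $N := |Y|$ and, for an edge $uv$, let $Z_{uv} := Y \setminus (N[u] \cup N[v])$ be the part of $Y$ that $u$ and $v$ fail to cover. First I would record the structural constraints a valid triple must satisfy: since $w \in Y$ with $w \ne u,v$ and $w \notin N(u) \cup N(v)$, we have $w \in Z_{uv}$; moreover $Z_{uv} \subseteq N[w]$ together with $\deg(w) < \sqrt m$ forces $|Z_{uv}| \le \sqrt m + 1$. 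Finally, by averaging, one of $u,v,w$ has at least $N/3$ closed neighbours inside $Y$; since $|N[w] \cap Y| \le \deg(w)+1 < \sqrt m + 1$, whenever $N > 3(\sqrt m + 1)$ this vertex is $u$ or $v$ and hence has degree at least $\sqrt m - 1$. As there are only $\bigO(m/\sqrt m) = \bigO(\sqrt m)$ vertices of such degree, we can afford to guess it.

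For the main case $N > 3\sqrt m$ I would therefore guess the heavy endpoint, say $u$, among the $\bigO(\sqrt m)$ candidates. For fixed $u$ set $Z_u := Y \setminus N[u]$; a valid completion is a neighbour $v$ of $u$ together with a vertex $w \in Z_u$ having $\deg(w) < \sqrt m$, $w \notin N[v]$, and $Z_u \setminus N[v] \subseteq N[w]$. Encode this as a Max-Entry product: rows of $B$ are the candidate $v$, columns of $B$ (rows of $C$) are the vertices of $Z_u$, columns of $C$ are the candidate $w$; set $B[v,y] = 1$ iff $y \notin N[v]$ and $C[y,w] = 1$ iff $y \in N[w]$, so that $(B \cdot C)[v,w] = \sum_y B[v,y]$ exactly when $Z_u \setminus N[v] \subseteq N[w]$. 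Since $\deg(w) < \sqrt m$, every column of $C$ has at most $\sqrt m + 1$ ones, so Lemma~\ref{lemma:gen-max-entry-mm} applies with column-sparsity $\Delta \le \sqrt m + 1$. To keep the row count $p$ of $B$ small I would use $|Z_{uv}| \le \sqrt m + 1$: if $|Z_u| > 2\sqrt m$ then any valid $v$ covers at least half of $Z_u$, leaving only $\bigO(m/|Z_u|)$ candidates; otherwise $|Z_u| = \bigO(\sqrt m)$ and we take the inner dimension $N = |Z_u| = \bigO(\sqrt m)$ directly. Bucketing the guesses of $u$ by the value of $|Z_u|$ and summing running times — using superadditivity of $\MM(\cdot,b,c)$ in its first argument and $\sum_u \deg(u) \le 2m$ — yields $\MM(\bigO(m), \sqrt m, \sqrt m) \le m^{(\omega+1)/2 + o(1)}$ per bucket; the setup work (computing the $Z_u$ and the valid $v$) is $\bigO(|V_h| \cdot \sum_{z}\deg z) = \bigO(m^{3/2})$, and one discards returned witnesses with $w \in N[v]$, whose number is bounded by the matrix size.

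It remains to handle $N \le 3\sqrt m$, where neither the edge $uv$ (up to $m$ choices) nor the vertex $w$ (up to $n$ choices) has a small candidate set by itself. Here I would guess $w \in Y$ with $\deg(w) < \sqrt m$ (at most $N = \bigO(\sqrt m)$ choices), set $Y_w := Y \setminus N[w]$ with $|Y_w| \le N$, and detect an edge $uv$ with $u,v \notin N[w]$ and $Y_w \subseteq N[u] \cup N[v]$. One of $u,v$ covers at least $|Y_w|/2$ of $Y_w$, so when $|Y_w|$ is not too small this again gives only $\bigO(m/|Y_w|)$ candidates for that endpoint and the same Max-Entry reduction goes through; the residual regime where $|Y_w|$ is a small constant is treated directly (e.g.\ $Y_w = \emptyset$ reduces to finding any edge avoiding $N[w]$, and $|Y_w| = \bigO(1)$ to a constant number of neighbourhood scans). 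One must make the per-$w$ work sum to $m^{(\omega+1)/2+o(1)}$ over all guesses of $w$.

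The step I expect to be the main obstacle is exactly this last regime: once $N$ (equivalently $|Z_{uv}|$) is small, the averaging and heavy-vertex arguments lose their force, the candidate sets for both the edge and the isolated vertex stay large, and both the number of valid triples and the number of false positives produced by the matrix products can be large, threatening the enumeration budget. Taming it needs a finer case analysis on the degrees of $u,v,w$ together with a careful bucketing of the guessed vertices so that the many small rectangular products telescope — somewhat more delicate than, but in the spirit of, the analysis of Lemma~\ref{lemma:k4-2-light-vertices} and its companions. Obtaining the generic-$\omega$ bound rather than just the clean $\omega = 2$ case also requires the usual padding arguments so that each rectangular product costs $m^{\omega/2 + o(1)}$ after an $\bigO(\sqrt m)$ overhead.
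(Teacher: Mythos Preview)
Your approach shares the two key ingredients with the paper's proof—direct rectangular multiplication via Lemma~\ref{lem:guessing-approach} and Max-Entry products via Lemmas~\ref{lemma:max-entry-mm}/\ref{lemma:gen-max-entry-mm}—but the primary case split is different: the paper splits on the degrees of $u$ and $v$ (bucketing on $t\approx\deg(u)$ and then on whether $\deg(v)\gtrless\sqrt m$), whereas you split on $N=|Y|$. Your large-$N$ branch (guess the heavy endpoint, then bucket on $|Z_u|$) is a reasonable variant of the paper's $t\ge\sqrt m$ branch and can be made to work with the analysis you sketch; the paper's organisation is somewhat cleaner because bucketing on $t$ simultaneously controls the number of choices for $u$, the size of $Y$ that can be dominated, and the row count of $B$.

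The case you single out as the main obstacle, $N\le 3\sqrt m$, is actually the trivial one—and this is where your proposal goes astray. Once $|Y|=\bigO(\sqrt m)$ there is no need to guess $w$ first. Apply Lemma~\ref{lem:guessing-approach} directly with $\mathcal V_A$ the set of all edges $uv$ (at most $m$), $\mathcal V_B$ the vertices $w\in Y$ with $\deg(w)<\sqrt m$ (at most $|Y|=\bigO(\sqrt m)$), and inner dimension $|Y|=\bigO(\sqrt m)$: this enumerates all dominating triples in time $\MM(m,\sqrt m,\sqrt m)\le m^{(\omega+1)/2+o(1)}$, after which the structural predicate is checked in $\bigO(1)$ per triple. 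This is exactly how the paper disposes of its first case $t\le\sqrt m$ in a single line. By fixing $w$ up front you fragment one $m\times\sqrt m\times\sqrt m$ product into $\bigO(\sqrt m)$ subproblems whose per-$w$ cost is hard to bound—the difficulty you describe is self-inflicted.
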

\begin{proof}
    Without loss of generality, assume that $\deg(u)\geq \deg(v)$.
    Let $t$ be such that $t\leq \deg(u)\leq 2t$.
    Assume first that $t\leq \sqrt{m}$. 
    Then each vertex $u,v,w$ has degree bounded by $\bigO(\sqrt{m})$ and so, if there exists a valid triple that dominates $Y$, the size of $Y$ must also be bounded by $\bigO(\sqrt{m})$.
    Hence by Lemma \ref{lem:guessing-approach} we can enumerate all valid triples in time $\MM(m,\sqrt{m},\sqrt{m}) \leq m^{(\omega+1)/2+o(1)}$, as desired.\footnote{Since vertices $u,v$ are adjacent, there are at most $m$ choices for this pair, and $w$ is by assumption in $Y$, and $|Y|\leq \bigO(\sqrt{m})$}
    For the rest of this proof we assume that $t\geq \sqrt{m}$.
    We now proceed to enumerate all valid triples $u,v,w$ such that $\deg(v)\leq \sqrt{m}$. 
    To this end, we run the following routine. 
    For each of the $\bigO(m/t)$ valid choices for the vertex $u$, construct the tripartite graph 
    $H_u = (V_u^{(1)}\cup V_u^{(2)}\cup V_u^{(3)}, E_u)$, where $V_u^{(1)}$ corresponds to a copy of $N(u)$, while $V_u^{(2)}$ and $V_u^{(3)}$ each correspond to a copy of the set $Y- N(u)$. The set of edges $E_u$ is constructed naturally, that is, for each pair $x\in V_u^{(i)}, y\in V_u^{(j)}$ for $i\neq j$ we add an edge $xy$ if and only if the edge between the corresponding vertices is present in $G$ (we assume that each vertex is adjacent to itself in $G$). 
    It is easy to see that for any fixed $u$, any valid choice of $v,w$ corresponds precisely to a choice of $v\in V_u^{(1)}$ and $w\in V_u^{(2)}$ such that the following two conditions are satisfied: 1) $V_u^{(3)}\subseteq N(w)\cup N(v)$ and 2) $vw \not\in E(H_u)$.
    Now, since we are only enumerating triples $u,v,w$ where both $v,w$ have degrees at most $\sqrt m $, if for a fixed $u$ the set $V_u^{(3)}$ contains more than $2\sqrt{m}$ vertices, we can conclude that no pair $v,w$ satisfies both conditions and proceed with the next choice of $u$. 
    On the other hand, if $|V_u^{(2)}| = |V_u^{(3)}| \leq 2\sqrt{m}$, by Lemma \ref{lem:guessing-approach}, for any fixed $u$, we can enumerate all the valid pairs of non-adjacent vertices $v,w$ that dominate $Y-N(u)$ in time 
    \begin{align*}
        T_u(n,m) &\leq \MM\left(V_u^{(1)},V_u^{(3)},V_u^{(2)} \right) \\ & \leq \MM\left(t, \sqrt{m}, \sqrt{m}\right) \\
        & \leq \bigO\left(\frac{t}{\sqrt{m}}\right)\MM\left(\sqrt{m},\sqrt{m},\sqrt{m}\right) & (t\geq \sqrt{m})\\
        & \leq t \cdot m^{\frac{\omega-1}{2} + o(1)}.
    \end{align*}
    Repeating this for each of the possible $\bigO(m/t)$ many choices of vertex $u$ yields the desired time.
    It remains to enumerate the triples of vertices $u,v,w$ where $\deg(v)\geq \sqrt{m}$. 
    Recall also that $2t\geq \deg(u)\geq \deg(v)$ and $t\geq \sqrt{m}$.
    Hence, since by the first condition of the Lemma, we are only interested in the triples $u,v,w$ with $\deg(w)\leq \sqrt{m} \leq t$, if the set $Y$ contains more than $5t$ vertices, we can conclude that no triple $u,v,w$ satisfies $Y\subseteq N[u]\cup N[v] \cup N[w]$.
    Hence, we can assume that $Y \leq 5t$.
    %Moreover, since by the first condition of the Lemma, we are only interested in triples $u,v,w$ with $\deg(w)\leq \sqrt{m} \leq 2t$, if for a fixed $u$ the set $V_u^{(3)}$ contains more than $3t$ vertices, we can conclude that no pair $v,w$ satisfies both conditions and proceed with the next choice of $u$.
    %On the other hand, if $|V_u^{(2)}| = |V_u^{(3)}| \leq 3t$, we proceed similarly as in Lemma \ref{} \mirza{reference to the 4-clique lemma}. 
    % KEEP IN FULL VERSION We apply the similar approach as in Lemma \ref{lemma:k4-2-light-vertices}. 
    Let $B$ be a matrix whose rows correspond to adjacent pairs of vertices $u,v$ with $t\leq \deg(u)\leq 2t$ and $\deg(v)\geq \sqrt{m}$ and whose columns correspond to the set $Y$.
    We set the entry $B[(u,v), y]$ to $0$ if either $u$ or $v$ are adjacent to $y$ and to $1$ otherwise.
    Similarly, we construct a matrix $C$ whose columns correspond to the set of vertices $w\in Y$ such that $\deg(w)\leq \sqrt{m}$ and the rows correspond to $Y$.
    Set the entry $B[y,w]$ to $1$ if $w$ and $y$ are adjacent (we understand that each vertex is adjacent to itself) and $0$ otherwise.
    Notice that 
    %KEEP IN FULL VERSION now we have the same setup as in Lemma \ref{lemma:k4-2-light-vertices}, namely 
    $u,v,w$ dominate $Y$ if and only if $(B\cdot C)[(u,v),w] = \sum_k B[w,k]$ (each vertex that is not dominated by the pair $(u,v)$ is dominated by $w$). 
    Also notice that the matrix $B$ is an $\bigO\left(  \frac{m\sqrt m}{t} \right)\times \bigO\left(t\right)$ and $C$ is an $\bigO\left(  t \right)\times \bigO\left(t\right)$ matrix with at most $m$ non-zeros. Hence by a Lemma from%\ref{lemma:max-entry-mm}
    ~\cite{FischerKR24}, we can enumerate with high probability all valid triples $u,v,w$ in time $\sqrt{m}\cdot m^{\omega/2+o(1)} = m^{\frac{\omega+1}{2}+o(1)}$, as desired.
    We remark that we enumerate all valid triples for any fixed $t$ in the claimed time. To enumerate all the valid triples, simply run the previous algorithm $\bigO(\log (n))$ times, once for each $\ell \in 1,\dots, \lfloor\log(n)\rfloor$, setting $t = 2^\ell$.
\end{proof}
An almost identical argument as in previous lemma can also be used to show that we can efficiently enumerate the solutions that are isomorphic to $C_{2r+1}+K_1$. %(disjoint union of an odd cycle and an isolated vertex).
\begin{lemma}\label{lemma:isolated-plus-cycle}
    Let $G$ be a graph with $n$ vertices and $m$ edges, and let $Y\subseteq V(G)$ be a subset of vertices of $G$.
    Then we can construct an algorithm that w.h.p. enumerates all choices of $2r+2$ vertices $x_1,\dots, x_{2r+1}, w$ in time ${m^{r + \omega/2+o(1)}}$, such that the following is satisfied.
    \begin{itemize}
        \item The vertex $w$ is contained in $Y$ and satisfies $\deg(w)< \sqrt{m}$.
        \item Vertices $x_1,\dots, x_{2r+1}$ form a cycle in $G$ (not necessarily induced) and $w$ is not adjacent to any of the vertices $x_i$ in $G$.
        %\item Vertex $u$ is heavy (i.e. $\deg(u)\geq n/k$), and $w$ is a low-degree vertex (i.e. $\deg(w)< \sqrt{m}$).
        \item $Y\subseteq N[w]\cup N[x_1]\cup\dots\cup N[x_{2r+1}]$.
    \end{itemize}
\end{lemma}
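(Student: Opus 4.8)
The plan is to follow the proof of Lemma~\ref{lemma:isolated-plus-edge} almost verbatim, replacing its edge $uv$ by the odd cycle $x_1,\dots,x_{2r+1}$ and using the cycle decomposition of Lemma~\ref{lem:decomp-T} (as in Lemma~\ref{lemma:enumerating-T}) to peel off all but one edge of the cycle cheaply; recall that $\deg(w)<\sqrt m$ is always required. Let $v^\star$ be a vertex of maximum degree among $x_1,\dots,x_{2r+1}$. In the main case $\deg(v^\star)\ge\sqrt m$ one guesses $v^\star$ (only $\bigO(\sqrt m)$ candidates in $G$); then $C_{2r+1}-v^\star$ is a path on $2r$ vertices with a unique perfect matching $e_1,\dots,e_r$, with $v^\star$ cycle-adjacent to the two path endpoints. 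After guessing a pivot index $i$ ($\bigO(1)$ choices) and the $r-1$ matched edges $e_j$, $j\ne i$, in $G$, together with the $\bigO(1)$ adjacencies linking consecutive matched pairs ($\bigO(m^{r-1})$ choices in total), only the pivot edge $\{u,u'\}$ and $w$ remain undetermined, and the cycle forces $u\in N(A)$, $u'\in N(B)$ for two already-chosen vertices $A,B\in\{v^\star\}\cup\{\text{guessed endpoints}\}$. Setting $W:=\{y\in Y:\deg(y)<\sqrt m,\ y\notin N[v^\star]\cup\bigcup N[\text{guessed endpoints}]\}$ and letting $Y'\subseteq Y$ be the part of $Y$ not yet dominated by $v^\star$ and the guessed vertices, the remaining task is exactly to enumerate all edges $\{u,u'\}$ with $u\in N(A)$, $u'\in N(B)$, $u,u'\notin N(w)$ and all $w\in W$ with $Y'\subseteq N[u]\cup N[u']\cup N[w]$ — this is Lemma~\ref{lemma:isolated-plus-edge} up to harmless adaptations ($w$ ranges over $W\subseteq Y$; the candidates for $u,u'$ are intersected with fixed neighbourhoods), so its proof — the degree-class split on $u,v$ and the reduction to the Max-Entry-Matrix-Product algorithms of Lemmas~\ref{lemma:max-entry-mm} and~\ref{lemma:gen-max-entry-mm} (resting on the Bloom-filter construction of~\cite{FischerKR24}) — carries over and runs in time $m^{(\omega+1)/2+o(1)}$, since every added restriction only shrinks a matrix dimension. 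Multiplying the outer enumeration $\bigO(\sqrt m)\cdot\bigO(m^{r-1})=\bigO(m^{r-1/2})$ by $m^{(\omega+1)/2+o(1)}$ gives $m^{r+\omega/2+o(1)}$, and summing over $\bigO(1)$ pivot indices and $\bigO(\log n)$ degree buckets leaves the $o(1)$ untouched.

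If instead $\deg(v^\star)<\sqrt m$, all cycle vertices are low-degree and, since $w$ is low-degree too, any valid tuple dominates only $\bigO(\sqrt m)$ vertices; hence either $|Y|$ exceeds this bound and nothing is output, or $|Y|=\bigO(\sqrt m)$. For $r\ge 2$ the previous argument still goes through: among the $r$ matched edges at least one of $e_1,e_r$ is not the pivot and is therefore guessed, so a low-degree cycle-neighbour $p$ of the leftover vertex $z$ is known, $z$ has only $\bigO(\sqrt m)$ candidates in $N(p)$, and we guess $z$ in the role of $v^\star$, again reaching $m^{r+\omega/2+o(1)}$. The sub-case $r=1$ with $x_1,x_2,x_3$ all low-degree is the genuine crux: it amounts to enumerating all $4$-tuples formed by a low-degree triangle and a non-adjacent low-degree $w\in Y$ dominating $Y$, with $|Y|=\bigO(\sqrt m)$; here I would mimic the randomized Dominating Triangle algorithm of Section~\ref{ssec:small-graphs-algorithms}, guessing $w\in Y$ ($\bigO(\sqrt m)$ choices) and then, for the residual target $Y\setminus N[w]$, guessing a triangle vertex of suitably bounded degree and finishing with a Max-Entry Matrix Product, exactly as in the small-case analysis making Lemma~\ref{lemma:isolated-plus-edge} tight.

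The main obstacle is precisely this last sub-case: one must check that it stays within the $m^{1+\omega/2+o(1)}$ budget for every size of $Y$, in particular when $|Y|$ is very small, so that the degree lower bound forced on a guessed triangle vertex becomes weak and the naive count of such vertices is too large — this is where a finer bucketing or the randomized hashing of~\cite{FischerKR24} has to be brought in. The remainder is routine bookkeeping: verifying that the adjacency constraints gluing the guessed pieces into a genuine $C_{2r+1}$, and the requirement that $w$ avoid the whole cycle rather than only $\{u,u'\}$, can all be folded into the matrix construction of Lemma~\ref{lemma:isolated-plus-edge} without enlarging any matrix, and inheriting correctness and the high-probability enumeration guarantee from Lemma~\ref{lemma:isolated-plus-edge} and~\cite{FischerKR24}.
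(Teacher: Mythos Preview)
Your reduction in the high–degree case (guess the high–degree cycle vertex $v^\star$, peel off $r-1$ matched edges of the remaining path, and invoke Lemma~\ref{lemma:isolated-plus-edge} on the leftover edge and $w$) is correct and is in the spirit of what the paper intends by ``almost identical''. The running-time arithmetic $O(\sqrt m)\cdot O(m^{r-1})\cdot m^{(\omega+1)/2+o(1)}=m^{r+\omega/2+o(1)}$ is right, and your remark that the extra constraints $u\in N(A)$, $u'\in N(B)$, $w\in W$ only shrink the matrices in Lemma~\ref{lemma:isolated-plus-edge}'s proof is valid.

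Where you go astray is in the low--degree case. You treat $r\ge 2$ by a somewhat involved neighbour argument and then flag $r=1$ as the ``genuine crux'', worrying that a weak degree lower bound on a guessed triangle vertex forces bucketing or hashing. This obstacle is self--inflicted: the \emph{direct} generalization of Case~1 of Lemma~\ref{lemma:isolated-plus-edge} dispatches all $r$ uniformly and trivially. If every $x_i$ has degree below $\sqrt m$ (and $\deg(w)<\sqrt m$ by hypothesis), then any valid tuple dominates at most $(2r+2)\sqrt m$ vertices, so either $|Y|>(2r+2)\sqrt m$ and there is nothing to enumerate, or $|Y|=O(\sqrt m)$. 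In the latter case, enumerate all low--degree $(2r+1)$--cycles in $O(m^{r+1/2})$ (exactly as in Lemma~\ref{lemma:enumerating-T}) and all $w\in Y$ in $O(\sqrt m)$, then apply Lemma~\ref{lem:guessing-approach}:
\[
\MM\bigl(m^{r+1/2},\,|Y|,\,\sqrt m\bigr)\ \le\ \MM\bigl(m^{r+1/2},\sqrt m,\sqrt m\bigr)\ \le\ m^{r}\cdot m^{\omega/2+o(1)}\ =\ m^{r+\omega/2+o(1)},
\]
after which the non--adjacency between $w$ and the cycle is a constant--time filter on at most $m^{r+1}\le m^{r+\omega/2}$ candidate pairs. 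No degree lower bound on a triangle vertex, no bucketing, no hashing is needed here; this is precisely the one--line analogue of the $t\le\sqrt m$ case in the proof of Lemma~\ref{lemma:isolated-plus-edge}, and is what the paper has in mind by ``almost identical''.
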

The idea for any pattern with precisely one isolated vertex $v$ is to decompose the pattern into sets $S,N(S),R$ as in Definition \ref{def:parameter}, and then depending on whether the set $S$ is empty or not, we either directly reduce to the setting of Lemma \ref{lemma:isolated-plus-edge}, or first apply Lemma \ref{lem:decomp-T} to decompose $R$ into edges and odd cycles, and then reduce to either the setting of Lemma \ref{lemma:isolated-plus-edge}, or that of Lemma \ref{lemma:isolated-plus-cycle}.
%For the lack of space, we give the detailed proof in the the full version of the paper. %Appendix \ref{app:isolated-vertices}
\begin{restatable}{proposition}{lemmaOneIsolatedVertex}
    Let $P$ be a pattern with one isolated vertex. Then there exists a randomized algorithm solving $P$-Dominating Set in time $\bigO\left(t_P(n,m)\right)\cdot m^{\frac{\omega-2}{2} + o(1)}$.
\end{restatable}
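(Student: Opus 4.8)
The plan is to exploit that $P$ is the disjoint union $P[X]\sqcup K_1$, where $v_0$ denotes the unique isolated vertex of $P$ and $X:=V(P)\setminus\{v_0\}$, so that $P[X]$ has no isolated vertex. Deleting an isolated vertex changes neither the ground set over which Definition~\ref{def:parameter} optimizes nor the relevant neighborhoods, so $\rho(P)=\rho(P[X])$ and $|V(P)|=|V(P[X])|+1$; hence $t_P(n,m)=\sqrt m\cdot t_{P[X]}(n,m)$ (also consistent with Observation~\ref{obs:tp-breakdown}). Every dominating $P$-pattern has the form $D'\cup\{v_w\}$ with $G[D']\cong P[X]$, $v_w\notin N[D']$, and $D'\cup\{v_w\}$ dominating $G$. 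I would branch on the degree of $v_w$.

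If $\deg_G(v_w)\ge\sqrt m$ there are only $O(\sqrt m)$ candidates for $v_w$, and for each the residual task is \emph{exactly} Dominating $P[X]$-Pattern on $G-N[v_w]$: since $D'$ lies outside $N[v_w]$, the induced subgraph on $D'$ and the domination requirement are the same inside that induced subgraph, and $v_w$ alone dominates $N[v_w]$ (while $G[D'\cup\{v_w\}]=G[D']\sqcup K_1\cong P$ automatically). As $P[X]$ is isolated-vertex-free, Subsections~\ref{ssec:general-algorithm} and~\ref{ssec:small-graphs-algorithms} solve this in time $t_{P[X]}(n,m)\cdot m^{(\omega-2)/2+o(1)}$ whenever $P[X]$ is basic or $K_4$ (in fact $t_{P[X]}^{1+o(1)}$ for basic $P[X]$); summing over the $O(\sqrt m)$ candidates gives $\sqrt m\cdot t_{P[X]}(n,m)\cdot m^{(\omega-2)/2+o(1)}=t_P(n,m)\cdot m^{(\omega-2)/2+o(1)}$. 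The one genuinely special pattern is $P=K_3+K_1$; here I would not call the generic Dominating-Triangle routine but mimic the $K_4$-Dominating-Set algorithm of Theorem~\ref{thm:K4-DS}: guess a heavy vertex, split $V$ into the first two BFS layers $X_1,X_2$, and look for a triangle in $X_1$ together with a non-adjacent fourth vertex (the image of $v_0$) dominating $X_2$, reusing the case distinction on how many solution vertices are low-degree and the All-Edges-Triangle-Counting / Max-Entry Matrix Product subroutines.

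If $\deg_G(v_w)<\sqrt m$ I would decompose the $X$-part of the solution using the partition $S,N(S),R$ of Definition~\ref{def:parameter}, Hall's theorem (a matching between $N(S)$ and $|N(S)|$ vertices of $S$), and Lemma~\ref{lem:decomp-T} applied to $P[R]-v_0$ (a disjoint-edge-and-odd-cycle spanning subgraph): the $X$-side is covered by $\rho(P)$ ``free'' images of $S$-vertices ($n$ choices each), $|N(S)|$ images of matching edges ($m$ choices each), and images of the edge/cycle cover of $P[R]-v_0$, enumerable via Lemma~\ref{lemma:enumerating-T}. I reserve exactly one \emph{unit} of this cover --- a matching edge if $N(S)\ne\emptyset$, otherwise an edge or odd cycle coming from $R$ (one exists since $P[X]$ has an edge). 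For each image $A$ of all remaining units I run the routine of Lemma~\ref{lemma:isolated-plus-edge} (reserved unit an edge) or Lemma~\ref{lemma:isolated-plus-cycle} (reserved unit an odd cycle $C_{2r+1}$) with target set $Y:=V(G)\setminus N[A]$; this enumerates every way to realize the reserved unit together with a low-degree vertex that, jointly with $A$, dominates $G$, and I keep only those for which adjoining the triple/tuple to $A$ induces $P$ --- an $O(1)$ test per output, which is affordable because each such lemma's output size is bounded by its running time. Using $|X|=\rho(P)+2|N(S)|+|R-v_0|$ and $t_P(n,m)=n^{\rho(P)}m^{|N(S)|+|R|/2}$, the number of choices for $A$ times $m^{(\omega+1)/2+o(1)}$ (resp.\ $m^{r+\omega/2+o(1)}$) works out to $t_P(n,m)\cdot m^{(\omega-2)/2+o(1)}$.

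I expect the main obstacle to be the sparse case $\rho(P)=-1$ (i.e.\ $S=\emptyset$), where $t_P$ carries the extra $1/n$ factor that must be recovered from the mandatory heavy vertex $h$ of the dominating set ($O(m/n)$ candidates) rather than from $n^{\rho(P)}$ ``free'' images. One then case-splits on which pattern vertex $h$ realizes: if $h$ is the image of $v_0$, the residual is again Dominating $P[X]$-Pattern on $G-N[h]$, now summed over only $O(m/n)$ graphs; if $h$ lies on an odd cycle of the $R$-cover, anchoring that cycle at $h$ via Lemma~\ref{lemma:enumerating-T} buys exactly the needed factor $n/\sqrt m$; and if the cover consists only of edges, one must instead take the reserved unit to be the edge through $h$ and use a ``heavy-anchored'' variant of Lemma~\ref{lemma:isolated-plus-edge} (resp.\ Lemma~\ref{lemma:isolated-plus-cycle} when $P[X]$ itself is a single odd cycle). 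Getting the bookkeeping of all these sub-cases to close up to an $m^{o(1)}$ slack, and verifying the few degenerate small patterns ($K_2+K_1$, $C_{2r+1}+K_1$, $K_3+K_1$) directly, is where the bulk of the technical effort lies.
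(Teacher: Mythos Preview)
Your proposal is correct and follows essentially the same route as the paper: branch on whether the image $v_w$ of the isolated vertex has degree at least $\sqrt m$ (guess it and reduce to Dominating $P[X]$-Pattern) or not (enumerate all of $P[X]$ except one reserved edge or odd cycle, then invoke Lemma~\ref{lemma:isolated-plus-edge} or~\ref{lemma:isolated-plus-cycle} on the undominated set). The paper's bookkeeping in the $S=\emptyset$ case is slightly cleaner than yours---it guesses the heavy vertex $v_h$ first, covers $P-v_h-x$ via the second part of Lemma~\ref{lem:decomp-T} (applied to $P[X]$), and reserves an edge or cycle \emph{disjoint from} $v_h$, so $v_h$ simply joins the enumerated prefix and contributes the $m/n$ factor directly, avoiding any need for a ``heavy-anchored'' variant of Lemma~\ref{lemma:isolated-plus-edge}.
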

\begin{proof}
    Let $x$ be the isolated vertex in $P$. It is easy to verify that we can find all the solutions $D$ such that the vertex in $D$ that corresponds to $x$ has degree at least $\sqrt{m}$ in the claimed time, by simply guessing this vertex and reducing to the pattern without isolated vertices.
    Hence for the rest of the proof, we assume that $\deg(x)\le \sqrt{m}$.
    %Let $D\subseteq V(G)$ be any solution such that vertex $v_x\in D$ that corresponds to $x$ has degree at most $\sqrt{m}$. 
    Moreover, since $D$ forms a dominating set in $G$ of size $k$, we know that $D$ contains a heavy vertex $v_h$.
    Consider now the partition of $P$ into sets $S,N(S),R$, as in Definition \ref{def:parameter}. 
    Assume first that $S\neq \emptyset$. Then let $u,v$ be an edge with $u\in S, v\in N(S)$. 
    By \cite{NgoPRR18}, we can enumerate all subsets of $G$ that are isomorphic to $P-\{u,v,x\}$ in time $\bigO\left ( \frac{t_P(n,m)}{m^{3/2}}\right)$. 
    For each such enumerated subgraph $Q$, in $\bigO(m)$ time we can construct the set $Y$ of all vertices that are not dominated by any vertex in $Q$ and using the algorithm from Lemma \ref{lemma:isolated-plus-edge}, we can enumerate all possible choices of vertices $\{u,v,x\}$ such that $u,v,x$ dominates $Y$ and $u,v$ are adjacent to each other and non-adjacent to $x$ (w.h.p.) in time $m^{(\omega+1)/2 + o(1)}$. It only remains to check for each enumerated solution $Q\cup\{u,v,x\}$ that $G[Q\cup\{u,v,x\}]\cong P$ (which can be done in constant time). 
    In total this yields the running time bounded by 
    \[\bigO\left ( \frac{t_P(n,m)}{m^{3/2}} (m+m^{(\omega+1)/2+o(1)})\right) \leq \bigO\left ( {t_P(n,m)}\right)\cdot m^{(\omega-2)/2+o(1)}).\]

    Assume now that $S = \emptyset$, in particular $R = V(P)$.
    By Lemma \ref{lem:decomp-T}, we can cover the graph $P - v_h - x$ with $\alpha$ independent edges and $\beta$ odd cycles. 
    Assume first $\alpha>0$.
    Fix one of the $\alpha$ edges $uv$.
    We notice that we can enumerate all subgraphs of $G$ isomorphic to $(\alpha-1)$ independent edges in $\bigO\left(m^{\alpha-1}\right)$ and each of the $\beta_r$ cycles of length $(2r+1)$ in time $\bigO\left(m^{(2r+1)/2}\right)$. Moreover, finding a heavy vertex takes $\bigO\left(\frac m n\right)$.
    Hence, for any fixed edge we can enumerate all the subgraphs isomorphic to $P-\{u,v,x\}$ takes at most $\bigO\left(\frac{m^{(k-2)/2}}{n}\right)$.
    Furthermore, by Observation \ref{obs:tp-breakdown}, this is precisely $\bigO\left(\frac{t_P(n,m)}{m^{3/2}}\right)$.
    We then apply the Lemma \ref{lemma:isolated-plus-edge}, in the same way as we did for $S\neq \emptyset$ to obtain the time $\bigO\left ( {t_P(n,m)}\right)\cdot m^{(\omega-2)/2+o(1)})$.
    If, however $\alpha=0$, we then approach very similarly, except that we fix one of the $\beta$ odd cycles $C_{2r+1}$ that do not contain $v_h$ and enumerate $P-C_{2r+1}$ in time $\bigO\left( \frac{t_P(n,m)}{m^{(2r+1)/2}} \right)$.
    We then conclude similarly as above, but this time we apply Lemma \ref{lemma:isolated-plus-cycle} to conclude the proof.
\end{proof}

\section{Conditional lower bounds}\label{sec:lower-bounds}
This section contains the proof of Theorem \ref{thm:lower-bound}, which states that the time complexity $t_P(n, m)$ for $P$-Dominating Set is optimal up to subpolynomial factors unless OVH fails.
The proof for this is similar to the proof given in \cite{FischerKR24} for the general Sparse $k$-Dominating Set problem.
This proof is also already adapted for Clique-, Independent-, and Matching-Dominating Set in \cite{KuennemannR24}.
We reduce from Orthogonal Vector with $k$ sets.
This reduction builds a graph with $\tilde{\Theta}(n)$ vertices and $\tilde{\Theta}(m)$ edges with $m = n^\gamma$ for some $\gamma \in [1, 2]$ to prove the time complexity for a specified graph density.
As mentioned in section \ref{sec:preliminaries}, one may choose the size of the vector sets of the $k$-OV problem.

Let $S \subset V(P)$ be a set of vertices as defined for the parameter $\rho$ in definition \ref{def:parameter}.
If $S \neq \emptyset$, we reduce from $k$-OV with $|S|$ sets of size $n$, $|N(S)|$ sets of size $m / n$, and $k - (|S| + |N(S)|)$ sets of size $m^{1 / 2}$.
Otherwise, if $S = \emptyset$, we reduce from $k - 1$ sets of size $m^{1 / 2}$ and one set of size $m / n$.
An instance of $k$-OV with these set sizes is then transformed into a $P$-Dominating Set instance by the following steps:
\begin{enumerate}
    \item For each vector set $A_i$, add a vertex group $V_i = \{v_{i, 1}, \dots, v_{i, |A_i|}\}$.
    The idea is to have the selection of the vertex $v_{i, j}$ into the dominating set correspond to choosing the $j$-th vector from $A_i$.
    \item For each vertex group representing one $A_i$, add a vertex set $R_i$ with $\max(2, m / |A_i|)$ vertices.
    For each $i \in [k]$, connect all $v_{i, j} \in V_i$ with all vertices of $R_i$.
    Because every vertex of $R_i$ has to be dominated, one vertex has to be chosen out of the set $V_i \cup \{r_i\}$ for each $i \in [k]$.
    There is no intersection between these sets and the $k$-dominating set may only pick $k$ vertices,
    therefore,
    the algorithm has to pick one vertex from each of these sets.
    \item For the dimension $d$, add a vertex of group $X$ of size $d$.
    Connect $t \in X$ with $v_{i, j}$ if the vector represented by $v_{i, j}$ has a $0$-entry at the $t$-th position.
    \item For every pair of vertex sets $V_i$ and $V_j$, connect all vertices of $V_i$ with all vertices $V_j$ if $ij$ is an edge in the pattern graph $P$.
    If $V_i$ instead corresponds to an isolated vertex in $P$, connect the vertices of $V_i$ pairwise.
\end{enumerate}
\begin{figure}[ht]
	\centering
	\begin{tikzpicture}
        % generic dimension connections
        \draw[black, dashed]
            (2, 4.2) -- (8, 0)
            (2, 2.0) -- (8, 0)
            (2, -0.8) -- (8, 0);

        % dimension group
        \draw[black, fill=white] (8, 0) ellipse (0.4 and 1.3);
        \draw[every node/.style={inner sep=0pt, minimum size=2mm, fill, circle}]
            node (d0) at (8, -1) {}
            node (d1) at (8, -0.5) {}
            node (d2) at (8, 0) {}
            node (d3) at (8, 0.5) {}
            node (d4) at (8, 1) {};

        % group connections
        \draw[black]
		(2, 4.2) to[out=180, in=130] (-0.5, 2)
		(2, 4.2) to[out=180, in=130] (-0.5, -0.8)
		(2, 4.2) to[out=180, in=130] (-0.5, -3.7);
        
        % group one background
        \draw[black, fill=white] (2, -3.5) ellipse (2.6 and 0.9);

        \coordinate (v11pos) at (0, -3.7);
        \coordinate (v12pos) at (1, -3.7);
        \coordinate (v13pos) at (2, -3.7);
        \coordinate (v1lpos) at (4, -3.7);
        
        % specific dimension connections
        \draw[black]
            (v11pos) -- (d2)
            (v11pos) -- (d3)
            (v12pos) -- (d0)
            (v13pos) -- (d1)
            (v13pos) -- (d4)
            (v1lpos) -- (d1)
            (v1lpos) -- (d4);

        % group 1 nodes
        \draw[every node/.style={inner sep=0pt, minimum size=2mm, fill, circle}]
            node[label={[rectangle, fill=white]90:$v_{1, 1}$}] (v11) at (v11pos) {}
            node[label={[rectangle, fill=white]90:$v_{1, 2}$}] (v12) at (v12pos) {}
            node[label={[rectangle, fill=white]90:$v_{1, 3}$}] (v13) at (v13pos) {}
            node[label={[rectangle, fill=white]90:$v_{1, |A_1|}$}] (v1l) at (v1lpos) {};
        \node at (3, -3.7) {$\dots$};

        % r_1
        \coordinate (r1) at (-2.2, -3.7);
        \draw[black]
            (v11) to[out=-110, in=-70] (r1)
            (v12) to[out=-120, in=-70] (r1)
            (v13) to[out=-130, in=-70] (r1)
            (v1l) to[out=-150, in=-70] (r1);
        \draw[black, fill=white] (r1) ellipse (0.5 and 0.3) node at (r1) {$R_1$};
        
        % r_2 to r_k
        \coordinate (r2) at (-2.2, -0.8);
        \coordinate (rk1) at (-2.2, 2.0);
        \coordinate (rk) at (-2.2, 4.2);
        \draw[black]
            (-0.6, -0.8) -- (r2)
            (-0.6, 2.0) -- (rk1)
            (1.3, 4.2) -- (rk);
        \draw[black, fill=white]
            (r2) ellipse (0.5 and 0.3) node at (r2) {$R_2$}
            (rk1) ellipse (0.5 and 0.3) node at (rk1) {$R_{k - 1}$}
            (rk) ellipse (0.9 and 0.4) node at (rk) {$R_k$};
        
        % group 2 to k
        \draw[black, fill=white] (2, -0.8) ellipse (2.6 and 0.9) node at (2, -0.8) {$V_2$};
        \draw[black] node at (2, 0.6) {$\vdots$};
        \draw[black, fill=white] (2, 2.0) ellipse (2.6 and 0.9) node at (2, 2.0) {$V_{k - 1}$};
        \draw[black, fill=white] (2, 4.2) ellipse (0.7 and 0.3) node at (2, 4.2) {$V_k$};

        % group labels
        \node at (4.5, -4.2) {$V_1$};
        \node at (8.5, -1.0) {$D$};
    \end{tikzpicture}
	\caption{
        An example for the reduction result when the pattern is a $K_{1, k - 1}$.
		The vertex group $V_1$ is shown in greater detail.
		A solid line between two groups of vertices is a biclique.
		The dashed lines between the dimension set and the vertex sets representing the vector sets indicate connections as described in step 3.
		Vertex group $V_k$ has potentially the same number of vertices as any of the other groups, but considerably less when the reduction shows the lower bound for the time complexities of $P$-Dominating Set on a sparse instance graph.
	}
	\label{fig:general-kov-kds-reduction}
\end{figure}
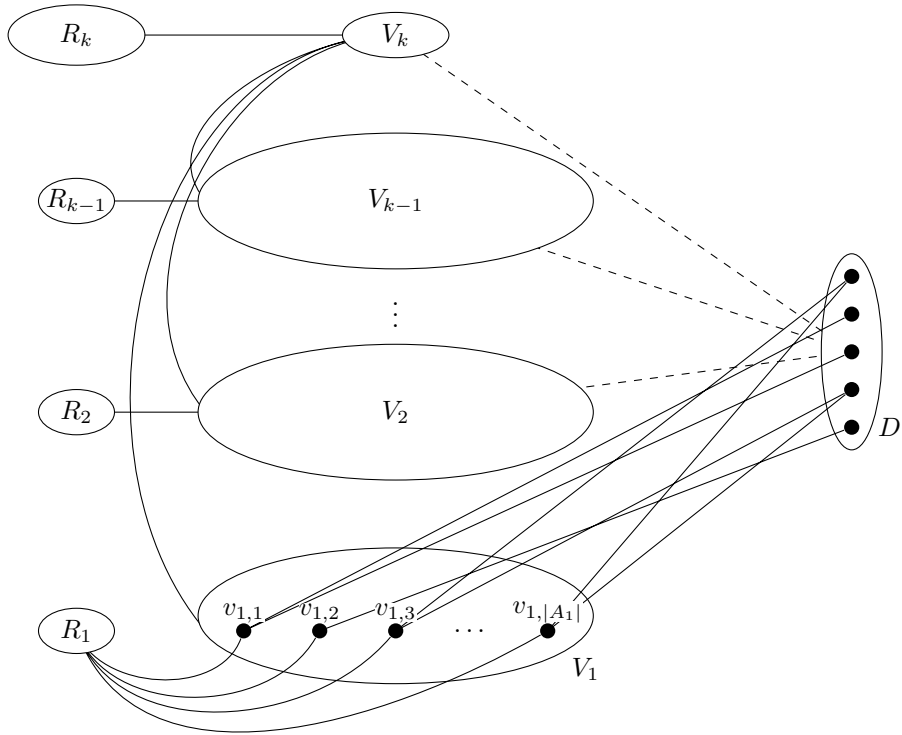

The correctness of this reduction of this reduction has multiple parts.
First, we show that $|V| \in \tilde{\Theta}(n)$ and $|E| \in \tilde{\Theta}(m)$ do in fact hold.
From this, it is also easy to see that the reduction runs in time $\tilde{\mathcal{O}}(m)$.
Next, we show that any valid dominating set $D$ in the graph $G$ constructed by the reduction induces $P$.
Last, we show that a dominating set $D$ exists if and only if the Orthogonal Vector instance has a solution.

To show that this reduction creates $\tilde{\Theta}(n)$ vertices, we analyze each step.
Step 1 creates at most $k \cdot n$ vertices.
Step 2 creates at most $k \cdot n$ vertices, and at least $n$ many due to the fact that at least one $A_i$ has the order $|A_i| = m / n$, which results in $|R_i| = n$.
Step 3 creates $d$ vertices, which is in $\tilde{\Theta}(1)$.
Step 4 creates no vertices.
The final number of vertices is in $\tilde{\Theta}(n)$, or more precisely bounded by:
\begin{equation}
    d + n \leq |V| \leq 2 \cdot k \cdot n + d
\end{equation}
To show that this reduction creates $\tilde{\Theta}(m)$ edges, we again analyze each step.
Step 1 creates no edges.
Step 2 creates at most $k m$ vertices, and at least $m$ many due to the fact that at least one $A_i$ has the order $|A_i| = m / n$, which results in $|R_i| = n$ and therefore $m$ edges in the pairwise connections between the vertices of $A_i$ and $R_i$.
Step 3 creates at most $k \cdot n \cdot d$ edges.
Step 4 creates at most $k \cdot (k - 1) / 2 \cdot m$ edges.
The final number of edges is in $\tilde{\Theta}(m)$, or more precisely bounded by:
\begin{equation}
    m \leq |E| \leq k \cdot (m + n \cdot d + (k - 1) / 2 \cdot m)
\end{equation}

Step 2 of the graph construction ensures that from each vertex set $A_i$, at least one vertex is taken.
Otherwise, the would be a vertex in some $R_i$ which is undominated.
Because $P$-Dominating Set implicitely limits the number of vertices which can be taken into the dominating set to $k$, there is exactly one vertex taken from each $A_i$ in a valid dominating set.
Further, because the vertices $A_i$ and $A_j$ are pairwise connected by an edge in step 4 if $ij$ is an edge in $P$, this dominating set will induce the pattern $P$.

Last is to argue that a solution to the Dominating Set problem on this graph exists if and only if there is a solution to the Orthogonal Vector instance.
Step 2 of the reduction already contains an argument that every solution to the Pattern Dominating Set problem takes exactly one vertex from each vertex set $V_i$.
If the vertex $v_{i, j}$ is chosen for the dominating set, choose the $j$th vector from $A_i$ in the Orthogonal Vector problem.
Because every vertex $t \in X$ is dominated, there exists one vertex $v$ such that $vt$ is an edge in the graph.
This edge exists exactly if the vector corresponding to $v$ has a $0$ for the $t$th dimension.
Therefore, for each dimension, there is now a vector which has $0$ for that dimension and we have a valid solution for the Orthogonal Vector problem.
Likewise, if vector $j$ from set $A_i$ is chosen in the solution to the Orthogonal Vector problem, then selecting the vertex $v_{i, j}$ into the dominating set gives a valid result.

\section{Generalization to pattern sets}\label{sec:pattern-sets}
A natural question following Pattern Dominating Set is to consider the case where the pattern $P$ does not have to be induced.
We take a step further:
Instead of a single pattern, we consider the problem with a given set of patterns $Q$, and the dominating set has to induce one of patterns in the graph.
This does in fact generalize the non-induced case with pattern $P$ by choosing the pattern set $Q$ as the set of all supergraphs of $P$ with the same order as $P$.
\begin{theorem}\label{thm:pattern-sets}
    Let $Q$ be a set of patterns with identical order.
    Then, the time complexity of finding a $Q$-Dominating Set is dominated by the pattern $P \in Q$ with the highest time complexity.
\end{theorem}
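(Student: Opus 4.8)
The plan is to show that the complexity of Dominating $\mathcal{Q}$-Pattern is $t_{P^\star}(n,m)^{1\pm o(1)}$, where $P^\star \in \mathcal{Q}$ is a pattern maximizing $t_{P}(n,m)$; since $\mathcal{Q}$ is finite this is well defined, and all patterns in $\mathcal{Q}$ share the common order $k$.

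For the upper bound I would use the trivial but sufficient observation that a vertex set $S$ is a Dominating $\mathcal{Q}$-Pattern if and only if it is a Dominating $P$-Pattern for \emph{some} $P \in \mathcal{Q}$. Hence it suffices to run the single-pattern algorithm of Theorem~\ref{thm:upper-bound} once for each $P \in \mathcal{Q}$ and output the disjunction of the $|\mathcal{Q}| = \bigO(1)$ answers. The total running time is $\bigO\big(\sum_{P \in \mathcal{Q}} t_P(n,m)^{1+o(1)}\big) = t_{P^\star}(n,m)^{1+o(1)}$, inheriting exactly the caveats of Theorem~\ref{thm:upper-bound} (for general $\omega$ one needs every $P \in \mathcal{Q}$ to have at least $16$ vertices; for $\omega = 2$ it holds for all $\mathcal{Q}$, with $K_3 \in \mathcal{Q}$ contributing the same $m^{1/3+o(1)}$ overhead as in Theorem~\ref{thm:small-patterns}). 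Taking $\mathcal{Q}$ to be the set of all $k$-vertex supergraphs of $P$ specialises this to Non-Induced Dominating $P$-Pattern.

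For the lower bound I would re-run the reduction of Section~\ref{sec:lower-bounds}, instantiated with the single pattern $P^\star$. It maps a $k$-OV instance with $\prod_i |A_i| = t_{P^\star}(n,m)$ to a graph $G$ with $\tilde{\Theta}(n)$ vertices and $\tilde{\Theta}(m)$ edges that has a size-$k$ dominating set iff the instance is satisfiable. The new ingredient is the structural claim that \emph{every} size-$k$ dominating set of $G$ induces $P^\star$: the gadget sets $R_i$ are independent sets of at least two vertices whose only neighbours lie in the group $V_i$, so dominating all $k$ of them with a total budget of $k$ forces exactly one representative vertex per group $V_i$; and by the construction in Step~4 the induced subgraph on any choice of representatives is isomorphic to $P^\star$ exactly (edges between $V_i,V_j$ are present iff $ij \in E(P^\star)$, and the clique placed inside $V_i$ when $i$ is isolated in $P^\star$ becomes invisible once a single vertex of $V_i$ is selected). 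Consequently, \emph{on this $G$} the existence of a Dominating $\mathcal{Q}$-Pattern is equivalent to that of a Dominating $P^\star$-Pattern — no dominating set of $G$ can induce a member of $\mathcal{Q}$ not isomorphic to $P^\star$ — and hence to satisfiability of the $k$-OV instance. Therefore any $\bigO\big(t_{P^\star}(n,m)^{1-\varepsilon}\big)$-time algorithm for Dominating $\mathcal{Q}$-Pattern would yield an $\bigO\big((\prod_i|A_i|)^{1-\varepsilon}\cdot\mathrm{poly}(d)\big)$-time algorithm for $k$-OV, contradicting the $k$-Orthogonal Vectors Hypothesis.

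I expect the main obstacle to be precisely this structural claim in the lower bound: verifying that the reduction graph built for $P^\star$ admits no dominating set inducing any \emph{other} member of $\mathcal{Q}$. This hinges on the already-present facts that the $R_i$ gadgets force one representative per group and that the only inter- and intra-group adjacencies are those dictated by $P^\star$. Everything else is bookkeeping already done in Section~\ref{sec:lower-bounds}: the $\tilde{\Theta}(n)$ and $\tilde{\Theta}(m)$ size bounds, the identity $\prod_i |A_i| = t_{P^\star}(n,m)$ for both the $S \ne \emptyset$ and $S = \emptyset$ choices of set sizes, and the fact that the constant-factor overhead from iterating over $\mathcal{Q}$ is absorbed in the $o(1)$ exponent.
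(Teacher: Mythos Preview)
Your proposal is correct and follows essentially the same approach as the paper: both run the single-pattern algorithm over all $P\in\mathcal{Q}$ for the upper bound, and both instantiate the Section~\ref{sec:lower-bounds} reduction with the hardest pattern $P^\star$ (the paper's $\tilde P$) for the lower bound, relying on the fact that every size-$k$ dominating set in the reduced graph necessarily induces $P^\star$. Your write-up actually spells out the structural claim (the $R_i$ gadgets force exactly one representative per $V_i$, and Step~4 then pins down the induced graph) more carefully than the paper, which simply asserts it.
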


For the $Q$-Dominating Set problem, we consider $|Q|$ to be a constant which is part of the problem.
An algorithm which solves this problem can, for each $P \in Q$, solve the $P$-Dominating Set problem and answer yes if any of the patterns give yes as their answer.
The time complexity of this approach is
\begin{equation}
    \mathcal{O}(\sum_{P \in Q} t_P(n, m)) = \mathcal{O}(\max_{P \in Q}(t_P(n, m)))
\end{equation}

For the lower bound, apply the reduction from section \ref{sec:lower-bounds} to the pattern $\tilde{P} \in Q$ which maximizes $t_{\tilde{P}}(n, m)$.
This gives $Q$-Dominating Set the same lower bound as $\tilde{P}$-Dominating Set and therefore proves Theorem \ref{thm:pattern-sets}.
Because every dominating set of order $k$ in the graph built from this reduction induces a $\tilde{P}$, it is also intuitive that the $Q$-Dominating Set problem now has to do the work of finding a $\tilde{P}$-Dominating Set.
\bibliographystyle{plainurl}
\bibliography{references}
\appendix

\section{Computing the parameter $\rho$}
The parameter $\rho$ is a variation of the Fractional Independent Set.
For the original Fractional Independent Set problem, polynomial time algorithms are known using linear programming \cite{09-fractional-independent-set}.
It is also known that this linear program only produces values of $0$, $1 /2$ and $1$, which we can almost interpret as the vertex belonging to $N(S)$, $R$ and $S$, respectively.
The only thing we need to take care of is that parameter $\rho$ adds two additional constraints, which can be encoded in the linear program without breaking the property that the optimal solution produces only values in $\{0, 1 / 2, 1\}$.
First, definition \ref{def:parameter} requires that isolated vertices are not part of $S$, so it is not allowed to assign the value $1$ to isolated vertices.
This can simply be encoded by adding the condition $v_i \leq 1 / 2$ for every isolated vertex $v_i$.
Additionally, definition \ref{def:parameter} requires the set $S$ to be as large as possible without changing the value $|S| - |N(S)|$.
An equivalent requirement is to make $T$ as small as possible, which is to have the least number of vertices assigned the value $1 / 2$.
This requirement is well-known in solutions for linear relaxations of the Vertex Cover problem, see \cite{11-vertex-cover-LP} for an algorithm.

After this step, the sets $S$, $N(S)$ and $R$ are known.
This does not yet give the decomposition as described in Lemma \ref{lem:decomp-T}, but already has enough information to calculate the parameter $\rho$.
Recall that in definition \ref{def:parameter}, the parameter $\rho$ was defined as:
\begin{equation}
    \rho(P) = \begin{cases}
        -1 & S = \emptyset \\
        |S| - |N(S)| & S \neq \emptyset
    \end{cases}
\end{equation}

To find the decomposition of Lemma \ref{lem:decomp-T} once the vertices of the pattern are decomposed into $S$, $N(S)$ and $R$, we still need to determine the subset of edges that fulfill Lemma \ref{lem:decomp-T}.
Note that by a simple application of Hall's Marriage Theorem, there is a matching in the bipartite graph $(S \cup N(S), E_P \cap (S \times N(S)))$ covering all of $N(S)$.
This matching can be found using a matching algorithm on bipartite graph.
There are also algorithms to find a subset of edges in $R$ such that each vertex is part of an edge or an odd cycle, see for example \cite{10-edge-cycle-packing-algorithm}.

\end{document}